\DeclareMathOperator*{\argmax}{arg\,max}
\title{Fault Tolerance in Euclidean Committee Selection} 
\author{Chinmay Sonar}{Department of Computer Science, University of California, Santa Barbara, USA}{csonar@cs.ucsb.edu}{https://orcid.org/0000-0002-1825-0097}{}
\author{Subhash Suri}{Department of Computer Science, University of California, Santa Barbara, USA}{suri@cs.ucsb.edu}{}{}
\author{Jie Xue}{Department of Computer Science, New York University, Shanghai, China}{jiexue@nyu.edu}{}{}
\authorrunning{C. Sonar, S. Suri and J. Xue} 
\keywords{Multiwinner elections, Fault tolerance, Geometric Hitting Set, EPTAS} 
\newcommand{\defproblemgoal}[3]{
\vspace{2mm}
\noindent\fbox{
  \begin{minipage}{0.98\textwidth}
  \begin{tabular*}{\textwidth}{@{\extracolsep{\fill}}lr} #1 \\ \end{tabular*}
  {\bf{Input:}} #2  \\
  {\bf{Goal:}} #3
  \end{minipage}
}\vspace{2mm}}
\begin{document}

\maketitle

\begin{abstract}
In the committee selection problem, the goal is to choose a subset of size $k$ from a set of candidates $C$ that collectively 
gives the best representation to a set of voters. We consider this problem in Euclidean $d$-space where each voter/candidate is a point and voters' preferences are implicitly represented by Euclidean distances to candidates.
We explore \emph{fault-tolerance} in committee selection and study the following three variants:
(1) given a committee and a set of $f$ failing candidates, find their optimal replacement;
(2) compute the worst-case replacement score for a given committee under failure of $f$ candidates; and
(3) design a committee with the best replacement score under worst-case failures.
The score of a committee is determined using the well-known (min-max) Chamberlin-Courant rule: minimize the maximum distance between any voter and its closest candidate in the committee.  Our main results include the following: 
(1) in one dimension, all three problems can be solved in polynomial time;
(2) in dimension $d \geq 2$, all three problems are NP-hard; and 
(3) all three problems admit a constant-factor approximation in any fixed dimension, and the optimal committee
problem has an FPT bicriterion approximation.
\end{abstract}

\section{Introduction and Problem Statement}

We consider the computational complexity of adding fault tolerance into spatial voting. In spatial voting~\cite{davis1970expository,arrow1990advances,sonar2022}, 
the voters and the candidates are both modeled as points in some $d$-dimensional space, where each dimension represents 
an independent \emph{policy issue} that is important for the election, and each voter's preference among the candidates 
is implicitly encoded by a distance function. For example, in the simplest $1$-dimensional setting, voters and candidates 
are points on a line indicating their real-valued preference on a single issue.
The specific setting for our work is \emph{multiwinner} spatial elections, also called \emph{committee selection},
in $d$ dimensions where we have a set $V$ of $n$ voters, a set $C$ of $m$ candidates, and a committee size (integer) $k$.
The goal is to choose a subset of $k$ candidates, called the \emph{winning committee}, that collectively best represents the preferences of all the voters~\cite{elkind2017properties,faliszewski2017multiwinner,faliszewski2019committee}.

One aspect of committee selection that appears not to have been investigated is \emph{fault tolerance}, that is, how robust a chosen committee is against the possibility that some of the winning members may default.
Committee selection problems model a number of applications in the social sciences and in computer science where such defaults are not uncommon,
such as democratic elections, staff hiring, choosing public projects, locations of public facilities, jury selection, cache management, etc.~\cite{lu2011budgeted,goel2019knapsack,skowron2016finding,betzler2013computation,bredereck2021coalitional,maiyya2018database,friedman2019fair}.
In this paper, we are particularly interested in designing algorithms to address questions of the following kind: If some of the  winning members default, how badly does this affect the overall score of the committee? 
Or, how much does the committee score suffer if a \emph{worst-case} subset of size $f$ defaults?
Finally, can we \emph{proactively} choose a committee in such a way that it can tolerate up to $f$ faults
with the minimum possible score degradation? 
We begin by formalizing these problems more precisely and then describing our results.

Suppose $V = \{v_1, \ldots, v_n \}$ is a set of $n$ voters and $C = \{c_1, \ldots, c_m \}$ is a set of $m$ candidates,
modeled as points in $d$-dimensional Euclidean space. (We occasionally call the tuple $(V, C)$ an election $E$.)
Given a positive integer $k$, we want to elect $k$ candidates, called the \emph{committee}, using the well-known 
Chamberlin-Courant voting rule~\cite{chamberlin1983representative}.
This rule assigns a score to each committee as follows. 
Let $T \subseteq C$ be a committee. For each voter $v$, the score of $T$ for $v$ is defined as $\sigma(v, T) = \min_{c \in T} d(v, c)$,
namely, the distance from $v$ to its closest candidate in $T$.\footnote{Originally, Chamberlin and Courant \cite{chamberlin1983representative} defined a voting rule on Borda scores (also known as Borda-CC). In this paper, similarly to \cite{betzler2013computation}, we study a min-max version of this rule on a more general scoring function, which in our case is based on voter-candidate distances.}
The \emph{score} of the committee $T$ is defined as $\sigma(T) = \max_{v \in V} \sigma(v,T)$, namely, the largest distance between any voter and its closest neighbor in $T$. (In facility location parlance, this is  the well-known $k$-center problem.)

The fault tolerance of a committee is parameterized by a positive integer $f$, which is the upper bound on the number of candidates 
that can fail.\footnote{%
	In our work, we will allow any subset of size $f$ from $C$ to fail, so the faults can also include candidates not 
	in the selected committee $T$. This only makes the problem harder because the adversary can always limit the faults to $T$, 
	and elimination of candidates from $C \setminus T$ makes finding replacements for failing committee members more difficult.}
Throughout the paper, we use the notation $J$ to denote a failing set of candidates. We are allowed to replace the failing 
members of $J$ with any set of at most $|T \cap J|$ candidates from $C \setminus J$. We often denote this set of 
\emph{replacement} candidates by $R$. However, \emph{we must keep all the non-failing members of $T$ in the committee} ---
that is, the replacement committee is the set $(T\setminus J) \cup R$ --- and throughout the paper our goal is to optimize this
committee's score, namely $\sigma((T\setminus J) \cup R)$. 

We consider the following three versions of fault-tolerant committee selection, presented in increasing order of complexity.
The first problem is the simplest: given a committee and a failing set, find the best replacement committee.

\defproblemgoal{{\textbf{Optimal Replacement Problem} (ORP)}}{An election $E=(V,C)$, a committee $T \subseteq C$ and a failing set $J \subseteq C$.}{Find a replacement set $R \subseteq C\setminus J$ of size at most $|T \cap J|$ minimizing $\sigma((T\setminus J) \cup R)$.}

Our second problem is to quantify the fault tolerance of a given committee $T$ over worst-case faults.
That is, what is the largest score of $T$'s replacement when a worst-case subset of $f$ faults occur?
We introduce the following notation as $T$'s measure of $j$-fault-tolerance, for any $0 \leq j \leq f$:
	$\sigma_j (T) ~~=~~ \max_{J \subseteq C~ s.t. ~ |J| \leq j}  \sigma( (T\setminus J) \cup R),$
where $R$ is an optimal replacement set with size at most $|T \cap J|$.  We want to compute $\sigma_f (T)$. 
Occasionally, we also use the notation $\sigma_0(T)$ for the no-fault score of $T$, namely $\sigma(T)$.

\defproblemgoal{\textbf{Fault-Tolerance Score} (FTS)}{An election $E=(V,C)$, a committee $T \subseteq C$ and a fault-tolerance parameter $f$.}{Compute $\sigma_f(T)$.}

Our third and final problem is to compute a committee with optimal fault-tolerance score.

\defproblemgoal{\textbf{Optimal Fault-Tolerant Committee} (OFTC)}{An election $E=(V,C)$, a committee size $k$ and a fault-tolerance parameter $f$.}{Find $T \subseteq C$ of size at most $k$ minimizing $\sigma_f(T)$.}

\subsection{Our Results}

We first show that even in one dimension, fault-tolerant committee problems are nontrivial.  In particular, while the Optimal Replacement Problem (ORP) is easily solved by a simple greedy algorithm, the other two problems, Fault-Tolerance 
Score (FTS) and Optimal Fault-Tolerant Committee (OFTC), do not appear to be easy. Our main result in one dimension
is the design of efficient dynamic-programming-based algorithms for these two problems. 
Along the way, we solve a \emph{fault-tolerant} Hitting Set problem for points and unit intervals, which may be of independent interest.

In two dimensions and higher, OFTC is NP-hard because of its close connection to the $k$-center 
problem. However, we show that even the seemingly simpler problem of optimal replacement (ORP) is also NP-hard. 
Our main results include a constant-factor approximation for all three problems in any fixed dimension (in fact, in any metric space), 
as well as a novel bicriterion FPT approximation via an EPTAS whose running time has the form 
$f(\epsilon)n^{\mathcal{O}(1)}$. For ease of reference, we show these results in the following table.

\begin{table}[h!]
\centering
\begin{tabular}{c|c|ccc|}
\cline{2-5}
\multicolumn{1}{l|}{}               & \multirow{2}{*}{\textbf{\begin{tabular}[c]{@{}c@{}}One-dimensional\\ instances\end{tabular}}} & \multicolumn{3}{c|}{\textbf{Dimension $d \geq 2$}}                                                                                                                                                                                                                        \\ \cline{3-5} 
\multicolumn{1}{l|}{\textbf{}}      &                                                                                               & \multicolumn{1}{c|}{\textbf{Complexity}}                                          & \multicolumn{1}{c|}{\textbf{Approximation}}                                                                          & \textbf{Bounded $f$}                                            \\ \hline
\multicolumn{1}{|c|}{\textbf{ORP}}  & \begin{tabular}[c]{@{}c@{}}P\\ (Theorem \ref{thm:1d-ORP})\end{tabular}          & \multicolumn{1}{c|}{\begin{tabular}[c]{@{}c@{}}NP-hard\\ (Theorem~\ref{thm-hardness-2d})\end{tabular}}                                   & \multicolumn{1}{c|}{\begin{tabular}[c]{@{}c@{}}3-approx.\\ (Lemma~\ref{lem:ORP-2d-3-approx})\end{tabular}}                           & \multicolumn{1}{c|}{\begin{tabular}[c]{@{}c@{}}P\\ (Section~\ref{subsec:bounded-f})\end{tabular}}                                                              \\ \hline
\multicolumn{1}{|c|}{\textbf{FTS}}  & \begin{tabular}[c]{@{}c@{}}P\\ (Theorem \ref{thm:1d-FTS})\end{tabular}                                       & \multicolumn{1}{c|}{\begin{tabular}[c]{@{}c@{}}NP-hard\\ (Theorem \ref{thm-hardness-2d})\end{tabular}}                                                      & \multicolumn{1}{c|}{\begin{tabular}[c]{@{}c@{}}3-approx.\\ (Lemma \ref{lem:FTS-2d-3-approx})\end{tabular}}                                    & \multicolumn{1}{c|}{\begin{tabular}[c]{@{}c@{}}P\\ (Section~\ref{subsec:bounded-f})\end{tabular}}                                                              \\ \hline
\multicolumn{1}{|c|}{\textbf{OFTC}} & \begin{tabular}[c]{@{}c@{}}P\\ (Theorem \ref{thm:1d-OFTC})\end{tabular}                                       & \multicolumn{1}{c|}{\begin{tabular}[c]{@{}c@{}}NP-hard\\ (Theorem~\ref{thm-hardness-2d})\end{tabular}} & \multicolumn{1}{c|}{\begin{tabular}[c]{@{}c@{}}5-approx.\\ (Lemma~\ref{lem:OFTC-2d-5-approx})\\ Bicriterion-EPTAS\\ (Theorem \ref{thm:OFTC-2d-EPTAS})\end{tabular}} & \multicolumn{1}{c|}{\begin{tabular}[c]{@{}c@{}}NP-hard\\ (Section~\ref{subsec:bounded-f})\\ 3-approx. \\ (Theorem \ref{thm:OFTC-2d-3-approx})\end{tabular}} \\ \hline
\end{tabular}
\caption{Summary of our results.}
\end{table}

\subsection{Related Work}

To the best of our knowledge, the issue of fault tolerance in committee selection has not been studied in
voting literature --- their primary focus is on protocols and algorithms for choosing candidates~\cite{faliszewski2017multiwinner, faliszewski2019committee, munagala2021optimal, skowron2015achieving, sonar2021complexity, elkind2017multiwinner}. However, the following two lines of work consider some related issues. First, in the ``unavailable candidate model''~\cite{lu2010unavailable, grivet2021preference} 
the goal is to choose a \emph{single winner} with maximum expected score when candidates fail according to a given probability distribution; in contrast, we consider multiwinner elections under worst-case faults.
In the second line of work, a set of election control problems are considered where candidates are added \cite{FaliszewskiHHR09} or deleted \cite{hemaspaandra2007anyone} to change the outcome of the election. In this setting, the candidate set is modified to obtain a favorable election outcome, which is a rather different problem than ours.

In the facility-location research, there has been prior work on adding fault tolerance to $k$-center or 
$k$-median solutions~\cite{chaudhuri1998p, krumke1995generalization, khuller2000fault, swamy2008fault, hajiaghayi2016constant}, but the main approach there is to assign each user (voter) to multiple facilities (candidates). 
In particular, the ``$p$-neighbor $k$-center'' framework~\cite{chaudhuri1998p} minimizes the maximum
distance between a user and its $p$th center as a way to protect against $p-1$ faults. This formulation, however, differs from our optimal fault-tolerant committee problem (OFTC) because in our setting the replacement candidates are chosen \emph{after} failing candidates are 
announced. Therefore, in the OFTC problem, the designer does not have to \emph{simultaneously} allocate 
$p$ neighbors for all the voters. Furthermore, to the best of our knowledge, neither of our first two 
problems --- Optimal Replacement (ORP) and Fault-Tolerance Score (FTS) --- have been studied in the facility-location literature, and initiate a new research direction. We also formulate and solve a fault-tolerant hitting set problem in one dimension, which may be of independent interest. 

\section{Fault-Tolerant Committees in One Dimension}
\label{sec:1d}

Even in one dimension, computing the fault-tolerance score of a given committee or finding a committee with minimum 
fault-tolerance score is nontrivial. The optimal replacement problem, however, is easy --- a simple greedy algorithm works. 
Our main result in this section is to design efficient dynamic-programming algorithms for the former two problems. 
In doing so, we also solve the fault-tolerant version of Hitting Set for points and unit segments. 

\subsection{Optimal Replacement Problem}
In the Optimal Replacement Problem (ORP), we are given a committee $T \subseteq C$ and a failing set $J \subseteq C$,
and we must find a replacement set $R$ minimizing the score $\sigma((T\setminus J) \cup R)$, where $|R| \leq |T \cap J|$.
Since this score is always the distance between some voter-candidate pair, it suffices to solve the following decision problem:
Is there a replacement set with score at most $r$? 
We can then try all possible $O(nm)$ distances to find the smallest feasible
replacement score.

This decision problem is equivalent to the following hitting set problem: for each voter $v \in V$, let $I_v$ be the
interval of length $2r$ centered at $v$, and let $\mathcal{I}  = \{I_v : v \in V\}$ be the set of these $n$ (voter) intervals.
A subset of candidates is a hitting set for $\mathcal{I}$ if each interval contains at least one of the candidates.
In our problem, we are given a hitting set $T$ and a failing subset of candidates $J$, and we must find the minimum-size 
replacement hitting set. Such a replacement is easily found using the standard greedy algorithm, as follows.
We first remove all of the intervals from  $\mathcal{I}$ that are already hit by a candidate in $T \setminus J$,
and we also remove all the failing candidates $J$ from $C$.
For the leftmost remaining interval, we then choose the rightmost candidate $c$ contained in it, add it to $R$,
delete all intervals hit by $c$, and iterate until all remaining intervals are hit. If we ever encounter an interval containing 
no candidate, or if the size of the replacement set is larger than $|T \setminus J|$, the answer to the decision problem is no.
Otherwise, the solution is $R$. The greedy algorithm is easily implemented to run in time $\mathcal{O}((m+n)\log (m+n))$.
To find the optimal replacement set, we can do a binary search over $O(nm)$ values of $r$ and find the smallest $r$ for which $|(T\setminus J) \cup R| \leq k$.

\begin{theorem}
    \label{thm:1d-ORP}
    The Optimal Replacement Problem can be solved in time $\mathcal{O}((m+n)\log^2(m+n))$ for one-dimensional Euclidean elections. 
\end{theorem}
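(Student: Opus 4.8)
The plan is to build directly on the reduction and the greedy oracle already sketched above. Fix a target value $r$ and consider the decision question ``is there a replacement set of size at most $b:=|T\cap J|$ with score at most $r$?'' This is exactly the fault-tolerant interval hitting set instance: delete from $\mathcal{I}$ every interval already pierced by a point of $T\setminus J$, delete $J$ from $C$, and ask whether the minimum hitting set of the surviving radius-$r$ voter intervals by the surviving candidates has size at most $b$; the textbook greedy decides this in $\mathcal{O}((m+n)\log(m+n))$. The predicate is monotone in $r$ (larger intervals are only easier to pierce), and as $r$ grows the combinatorial behavior of the greedy changes only when an interval endpoint $v\pm r$ passes a candidate $c$, i.e.\ at values $r=d(v,c)$. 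Hence the predicate is a monotone step function with breakpoints contained in $D=\{d(v,c):v\in V,\ c\in C\}$, so the optimal score $r^\ast$ is one of these $\Theta(nm)$ distances, and it suffices to locate the smallest feasible element of $D$.

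The only genuine difficulty is doing this with $\mathcal{O}(\log(m+n))$ oracle calls \emph{without} paying $\Theta(nm\log(nm))$ to sort $D$. First sort $V$ and $C$ once, in $\mathcal{O}((m+n)\log(m+n))$. The key observation is that for a fixed voter $v$ the distances to the sorted candidates $c_1<\dots<c_m$ form a ``valley-shaped'' sequence: weakly decreasing while $c_j\le v$, weakly increasing afterwards. Consequently, for any query window $(\alpha,\beta]$ the index set $\{j:d(v,c_j)\in(\alpha,\beta]\}$ is the difference of two nested contiguous blocks (candidates within distance $\beta$ of $v$ minus those within distance $\alpha$), hence at most two contiguous ranges, each found by binary search in $\mathcal{O}(\log m)$. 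Summing over the $n$ voters, in $\mathcal{O}(n\log m)$ time we can both count $|D\cap(\alpha,\beta]|$ and draw a uniformly random element of it. Feeding this into a quickselect-style search — maintain a live window guaranteed to contain $r^\ast$, repeatedly sample a random element of $D$ inside it, run the greedy oracle, and shrink the window — locates $r^\ast$ after $\mathcal{O}(\log(nm))$ oracle calls in expectation. (A deterministic variant with the same asymptotics follows by splitting each valley-shaped list at its minimum into two sorted runs and running Frederickson--Johnson-style selection over the resulting $\mathcal{O}(n)$ sorted arrays.)

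Putting the pieces together: the search runs $\mathcal{O}(\log(m+n))$ iterations, each costing $\mathcal{O}(n\log m)$ for counting/sampling plus $\mathcal{O}((m+n)\log(m+n))$ for the greedy oracle, for a total of $\mathcal{O}((m+n)\log^2(m+n))$; one final invocation of the greedy at $r^\ast$ outputs the replacement set $R$ itself. I expect the main obstacle — and the only point beyond the routine greedy argument already in the excerpt — to be precisely this efficient search over $D$: securing a logarithmic number of oracle calls while keeping the per-iteration overhead near-linear, for which the valley-shape of one-dimensional voter–candidate distances is the structural fact that makes everything go through. Correctness of the greedy for interval hitting set, monotonicity of the predicate in $r$, and the claim $r^\ast\in D$ are all standard.
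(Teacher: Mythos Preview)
Your approach is essentially the paper's: reduce the decision problem to interval hitting set, solve it greedily in $\mathcal{O}((m+n)\log(m+n))$, observe monotonicity in $r$ and that the optimum lies in the set $D$ of voter--candidate distances, and then search over $D$. The paper's own argument is terser: it simply says ``do a binary search over the $O(nm)$ values of $r$,'' and from there reads off the $\mathcal{O}((m+n)\log^2(m+n))$ bound as (number of oracle calls) $\times$ (cost per call).

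Where you differ is that you actually justify how to get $\mathcal{O}(\log(m+n))$ oracle calls without first sorting all $\Theta(nm)$ distances, via the valley-shape / Frederickson--Johnson selection idea. The paper does not spell this out; taken literally, ``binary search over $O(nm)$ values'' would seem to require enumerating and sorting $D$, which costs $\Theta(nm\log(nm))$ and would dominate the stated bound. So your write-up is not a different route so much as a more careful execution of the same route, filling in a step the paper leaves implicit. The selection argument you give is standard and correct for one-dimensional instances, and the rest (greedy correctness, monotonicity, $r^\ast\in D$) matches the paper exactly.
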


\subsection{Computing the Fault-Tolerance Score (FTS) of a Committee}
We now come to the more difficult problem of computing the fault-tolerance score $\sigma_f(T)$ of a committee $T$ in 
one dimension, which is the worst case over all possible failing sets of $T$.  Once again it suffices to solve 
the following decision problem: given a size-$k$ committee $T$ and a real number $r$, can we find a replacement with score at most $r$ for \emph{every} failing subset of size $f$?  
Using our hitting set formulation, $\sigma_f (T) \leq r$ if and only if
$T$ is an \emph{$f$-tolerant hitting set} of $\mathcal{I}$, that is, for any failing set $J \subseteq C$ of 
size at most $f$, there exists a replacement set $R \subseteq C\setminus J$ such that $|(T\setminus J) \cup R| \leq |T|$ 
and $(T\setminus J) \cup R$ hits $\mathcal{I}$. (Recall that each member of $\mathcal{I}$ is an interval of length $2r$ 
centered at one of the voter positions.)
We can then compute the fault-tolerance score of $T$ by trying each of the $O(nm)$ voter-candidate distances to 
find the smallest $r$ for which this decision problem has a positive answer.

We solve this fault-tolerant hitting set decision problem by observing that the size of a \emph{smallest} hitting set
equals the size of a \emph{maximum} independent set, defined with respect to candidate points and voter intervals
in the following way. Suppose the intervals of $\mathcal{I} = \{I_1,\dots,I_n\}$ are sorted left to right. First, we can assume without loss of generality that $|I_i \cap C| > f$ for all $i \in [n]$, since 
otherwise there is no $f$-tolerant hitting set for $\mathcal{I}$.
Given a set of points $X$ in $\mathbb{R}$, we say that a set of intervals is \textit{$X$-disjoint} if each point in $X$ is contained in at most one interval.  (That is, $X$-disjoint intervals can be thought of as \emph{independent} 
in that they contain disjoint sets of points in $X$).  The following claim is easy to prove.

\begin{lemma}
\label{obs:non-FT-structure}
Given a set of points $X$ and a set of intervals $\mathcal{J}$ on the real line, the size of a minimum hitting 
set $X' \subseteq X$ of $\mathcal{J}$ equals the maximum size of an $X$-disjoint subset of $\mathcal{J}$. 
\end{lemma}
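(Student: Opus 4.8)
The plan is to establish the claimed equality by sandwiching both quantities between each other: an easy ``weak duality'' inequality in one direction, and a greedy sweep that simultaneously outputs a hitting set and an $X$-disjoint subfamily of equal size in the other. First I would dispense with the degenerate case: if some interval of $\mathcal{J}$ contains no point of $X$, then no hitting set $X'\subseteq X$ exists at all (and the identity is vacuous under the usual convention that the minimum is $+\infty$); so assume henceforth that every interval of $\mathcal{J}$ contains at least one point of $X$. For the weak-duality direction, let $\mathcal{J}'\subseteq\mathcal{J}$ be $X$-disjoint and let $X'\subseteq X$ be any hitting set of $\mathcal{J}$. For each $I\in\mathcal{J}'$ pick a witness point $p_I\in X'\cap I$; if $p_I=p_{I'}$ for distinct $I,I'\in\mathcal{J}'$, then this common point of $X$ lies in two intervals of $\mathcal{J}'$, contradicting $X$-disjointness. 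Hence the $p_I$ are distinct and $|X'|\ge|\mathcal{J}'|$, so the minimum hitting set has size at least the maximum $X$-disjoint subfamily.

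For the reverse inequality I would run the classical left-to-right sweep, instrumented to record both objects. Sort the intervals of $\mathcal{J}$ by right endpoint; write $\ell(I)$ and $r(I)$ for the left and right endpoints of $I$. Maintain a set $H\subseteq X$ and a family $A\subseteq\mathcal{J}$, both initially empty. While some interval is not yet hit by $H$, let $I$ be the not-yet-hit interval with smallest right endpoint, let $x$ be the \emph{rightmost} point of $X\cap I$, add $x$ to $H$ and $I$ to $A$, and discard every interval containing $x$. Each iteration adds exactly one point to $H$ and one interval to $A$, so $|H|=|A|$; the loop terminates with $H$ a hitting set because an interval leaves consideration only after a point of $H$ inside it has been placed. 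The key remaining point is that $A$ is $X$-disjoint. Suppose $I,I'\in A$ with $I$ selected before $I'$, so $r(I)\le r(I')$, and suppose some $y\in X$ lies in both. When $I$ was processed we chose $x$ to be the rightmost point of $X\cap I$, so $y\le x$; combined with $x\le r(I)\le r(I')$ and $\ell(I')\le y\le x$ this gives $x\in I'$. But $I'$ was not discarded when $x$ was added, so $x\notin I'$ --- a contradiction. Hence no point of $X$ lies in two members of $A$.

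Putting the pieces together: the minimum hitting set size is at most $|H|=|A|$, which is at most the maximum size of an $X$-disjoint subfamily, which by weak duality is at most the minimum hitting set size; all three coincide. I expect the only genuinely delicate step to be the $X$-disjointness argument for $A$ --- the exchange reasoning that uses the ``rightmost point in the trigger interval'' choice together with the right-endpoint processing order; everything else is routine bookkeeping. (One could alternatively derive the identity from total unimodularity of interval-point incidence matrices and LP duality, but the greedy argument above is self-contained and yields the constructive algorithm the later sections rely on.)
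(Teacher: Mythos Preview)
Your proof is correct. The paper does not actually supply a proof of this lemma --- it only states that the claim ``is easy to prove'' --- so there is no detailed argument to compare against; your weak-duality-plus-greedy construction is precisely the standard proof one would expect here, and it dovetails with the greedy hitting-set routine the paper uses elsewhere (e.g., in the ORP algorithm of Section~2.1).
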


\begin{figure}
    \centering
    \includegraphics[scale = 0.45]{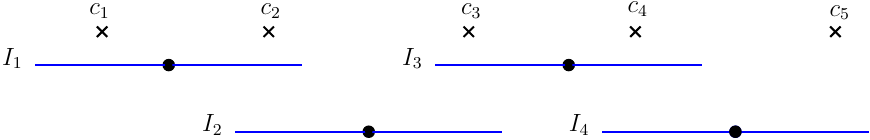}
    \caption{The figure shows an interval hitting set instance with four intervals and five points. The set $\{c_2, c_4\}$ is a feasible hitting set. For $X = \{c_2, c_3, c_5\}$, the intervals $I_1, I_3, I_4$ are $X$-disjoint.}
    \label{fig:x-disjoint-example}
\end{figure}

Thus, if $T \subseteq C$ is an $f$-tolerant hitting set for $\mathcal{I}$, then for any failing set $J \subseteq C$, 
the size of any $(C \setminus J)$-disjoint subset of $\mathcal{I}$ is at most $|T|$.  One should note that the size of the maximum  
$(C \setminus J)$-disjoint subset in $\mathcal{I}$ is a monotonically increasing function of $|J|$ --- as more candidates 
fail, more intervals can become disjoint.  Our goal is to find the maximum size of such a disjoint interval family over 
all possible failure sets $J$ of size at most $ f$.  We will do this using dynamic programming, by combining solutions 
of subproblems, where each subproblem corresponds to an index range $[i,j]$, over the set of candidate points $c_1, \ldots, c_m$.
Assuming that the candidate points $C = \{c_1,\dots,c_m\}$ are ordered from left to right,
our subproblems are defined as follows, for $1 \leq i \leq j \leq m$:
\begin{itemize}
\item 	$C_{i,j} = \{c_i,\dots,c_j\}$ is the set of candidates in the range $[c_i, c_j]$.
\item   $\mathcal{I}_{i,j} = \{I \in \mathcal{I}: I \cap C \subseteq C_{i,j}\}$ is the set of intervals that only
	contain points from $C_{i,j}$.
\item	For any $J \subseteq C_{i,j}$, $\delta_{i,j}(J)$ is the maximum size of a $(C_{i,j} \backslash J)$-disjoint 
		subset of $\mathcal{I}_{i,j}$.
\item	The subproblems we want to solve are the values $\delta_{i,j}(f) = \max_{J \subseteq C_{i,j}, |J| \leq f} \delta_{i,j}(J)$.
\end{itemize}

The key technical lemma of this section is the following claim.
\begin{lemma} \label{lem-char}
$T \subseteq C$ is an $f$-tolerant hitting set of $\mathcal{I}$ if and only if $|T \cap C_{i,j}| \geq \delta_{i,j}(f)$,
	for all $1 \leq i \leq j \leq m$.
\end{lemma}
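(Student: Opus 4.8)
The plan is to prove the two implications directly, in each case using Lemma~\ref{obs:non-FT-structure} to pass between minimum hitting sets and maximum disjoint interval families, and exploiting that the stated condition ranges over \emph{all} index intervals $[i,j]$ simultaneously. For the forward direction, suppose $T$ is an $f$-tolerant hitting set and fix $1\le i\le j\le m$. Choose $J^{\star}\subseteq C_{i,j}$ with $|J^{\star}|\le f$ attaining $\delta_{i,j}(f)=\delta_{i,j}(J^{\star})$, witnessed by a $(C_{i,j}\setminus J^{\star})$-disjoint family $\mathcal{F}\subseteq\mathcal{I}_{i,j}$ of size $\delta_{i,j}(f)$. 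Applying $f$-tolerance to the failing set $J^{\star}$ produces a replacement $R\subseteq C\setminus J^{\star}$ with $|(T\setminus J^{\star})\cup R|\le|T|$ such that $(T\setminus J^{\star})\cup R$ hits every interval of $\mathcal{I}$, in particular every $I\in\mathcal{F}$. The candidate hitting such an $I$ lies in $I\cap C\subseteq C_{i,j}$ (since $I\in\mathcal{I}_{i,j}$) and avoids $J^{\star}$, so it lies in $C_{i,j}\setminus J^{\star}$; since $\mathcal{F}$ is $(C_{i,j}\setminus J^{\star})$-disjoint, distinct intervals of $\mathcal{F}$ must get distinct such candidates, whence $|\mathcal{F}|\le|((T\setminus J^{\star})\cup R)\cap(C_{i,j}\setminus J^{\star})|\le|(T\cap C_{i,j})\setminus J^{\star}|+|R|$. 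Now $|R|\le|T|-|T\setminus J^{\star}|=|T\cap J^{\star}|$, and since $J^{\star}\subseteq C_{i,j}$ we have $|(T\cap C_{i,j})\setminus J^{\star}|+|T\cap J^{\star}|=|T\cap C_{i,j}|$; therefore $\delta_{i,j}(f)=|\mathcal{F}|\le|T\cap C_{i,j}|$.

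For the reverse direction, assume $|T\cap C_{i,j}|\ge\delta_{i,j}(f)$ for all $i\le j$ and fix any failing set $J\subseteq C$ with $|J|\le f$; by the earlier WLOG assumption that $|I\cap C|>f$ for every $I\in\mathcal{I}$, each interval retains a candidate outside $J$, so $C\setminus J$ hits $\mathcal{I}$. Let $\mathcal{I}_{J}=\{I\in\mathcal{I}:I\cap(T\setminus J)=\emptyset\}$ be the intervals left unhit by the surviving committee. By Lemma~\ref{obs:non-FT-structure}, a minimum replacement $R\subseteq C\setminus J$ hitting $\mathcal{I}_{J}$ has size equal to the maximum size $\mu$ of a $(C\setminus J)$-disjoint family $\mathcal{F}\subseteq\mathcal{I}_{J}$, and it suffices to prove $\mu\le|T\cap J|$: then $|(T\setminus J)\cup R|\le|T\setminus J|+\mu\le|T|$, and $(T\setminus J)\cup R$ hits $\mathcal{I}$ (the intervals in $\mathcal{I}_{J}$ via $R$, all others via $T\setminus J$), so $R$ is a valid replacement. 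Given a witnessing family $\mathcal{F}\subseteq\mathcal{I}_{J}$ of size $\mu$, let $U_{1},\dots,U_{s}$ be the connected components of $\bigcup_{I\in\mathcal{F}}I\subseteq\mathbb{R}$; each $I\in\mathcal{F}$ is contained in exactly one $U_{\ell}$, which partitions $\mathcal{F}$ into subfamilies $\mathcal{F}_{\ell}$, and $U_{\ell}\cap C$ is a candidate range $C_{a_{\ell},b_{\ell}}$ whose index ranges $[a_{\ell},b_{\ell}]$ are pairwise disjoint. Each $\mathcal{F}_{\ell}$ is a $(C_{a_{\ell},b_{\ell}}\setminus J)$-disjoint subfamily of $\mathcal{I}_{a_{\ell},b_{\ell}}$, so $|\mathcal{F}_{\ell}|\le\delta_{a_{\ell},b_{\ell}}(J\cap C_{a_{\ell},b_{\ell}})\le\delta_{a_{\ell},b_{\ell}}(f)\le|T\cap C_{a_{\ell},b_{\ell}}|$ by hypothesis; summing over $\ell$ and using disjointness of the ranges yields $\mu=\sum_{\ell}|\mathcal{F}_{\ell}|\le\sum_{\ell}|T\cap C_{a_{\ell},b_{\ell}}|=|T\cap\bigcup_{\ell}U_{\ell}|$. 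Finally, if $c\in T$ lies in some $U_{\ell}=\bigcup_{I\in\mathcal{F}_{\ell}}I$, then $c$ belongs to an interval of $\mathcal{F}_{\ell}\subseteq\mathcal{I}_{J}$, so $c\notin T\setminus J$, that is, $c\in J$; hence $T\cap\bigcup_{\ell}U_{\ell}\subseteq T\cap J$ and $\mu\le|T\cap J|$, completing the argument.

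I expect the reverse direction to be the main obstacle. The decisive choice is to cut $\mathcal{F}$ along the connected components of the union of its intervals, rather than greedily by endpoint: this keeps every piece $\mathcal{F}_{\ell}$ inside a single subproblem $\mathcal{I}_{a_{\ell},b_{\ell}}$ and keeps the candidate ranges disjoint, which is what lets the hypotheses for the individual $[a_{\ell},b_{\ell}]$ combine cleanly. The last step then upgrades the weak bound $\mu\le\sum_{\ell}|T\cap C_{a_{\ell},b_{\ell}}|$ to $\mu\le|T\cap J|$ by showing that every committee member lying in one of these ranges must in fact be a failed candidate; this is exactly the point where the definition of $\mathcal{I}_{J}$ as the intervals unhit by $T\setminus J$ is essential, and it is the easiest part of the argument to overlook.
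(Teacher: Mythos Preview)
Your proof is correct. The forward (``only if'') direction is essentially the same as the paper's: both choose a worst $J^\star\subseteq C_{i,j}$, apply $f$-tolerance, and count inside $C_{i,j}$ using that $J^\star\subseteq C_{i,j}$ forces all replacement budget spent outside $C_{i,j}$ to be wasted.

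For the ``if'' direction you take a genuinely different route. The paper fixes a failing set $J$, enumerates the surviving committee members $T\setminus J=\{c_{i_1},\dots,c_{i_p}\}$, and decomposes the candidate line into the $p+1$ \emph{gaps} $C_{i_{t-1}+1,\,i_t-1}$ between consecutive survivors; in each gap it explicitly constructs a minimum hitting set $R_t$ of the local intervals and shows $|R_t|\le|T\cap C_{i_{t-1}+1,\,i_t-1}|$ via the hypothesis. You instead invoke Lemma~\ref{obs:non-FT-structure} once globally to reduce the existence of a small replacement to the bound $\mu\le|T\cap J|$ on a maximum $(C\setminus J)$-disjoint family in $\mathcal{I}_J$, and then decompose \emph{that family} by the connected components of its union. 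The two decompositions are dual in spirit---in both cases the key fact is that the relevant candidate ranges contain no surviving committee member, so $T$ restricted to each range lies inside $T\cap J$---but the paper's argument is constructive (it actually exhibits the replacement), while yours is shorter and stays entirely on the ``independent set'' side of the duality. Your final step, observing that $c\in T\cap U_\ell$ forces $c\in J$ because $U_\ell$ is a union of intervals unhit by $T\setminus J$, is exactly the right way to close the bound, and the use of connected components to make the index ranges $[a_\ell,b_\ell]$ pairwise disjoint is clean.
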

\begin{proof}
We first show the ``if'' part of the lemma.
Assume $|T \cap C_{i,j}| \geq \delta_{i,j}(f)$ for all $i,j \in [m]$ with $i \leq j$.
To see that $T$ is an $f$-tolerant hitting set of $\mathcal{I}$, consider a failing set $J \subseteq C$ of size at most $f$.
We have to show the existence of a replacement set $R \subseteq C \backslash J$ such that $|(T \backslash J) \cup R| \leq |T|$ and $(T \backslash J) \cup R$ is a hitting set of $\mathcal{I}$.
We write $T \backslash J = \{c_{i_1},\dots,c_{i_p}\}$, where $i_1 < \cdots < i_p$.
For convenience, set $i_0 = 0$ and $i_{p+1} = m+1$.
By our assumption, every interval $I \in \mathcal{I}$ is hit by some point in $C$.
Thus, either $I$ is hit by $T \backslash J$ or $I$ belongs to $\mathcal{I}_{i,j}$ where $i = i_{t-1}+1$ and $j = i_t-1$ for some index $t \in [p+1]$.
Now consider an index $t \in [p+1]$.
We write $T_t = T \cap C_{i,j}$ and define $R_t \subseteq C_{i,j} \backslash J$ as a minimum hitting set of $\mathcal{I}_{i,j}$.
By Lemma~2, the size of $R_t$ is equal to the maximum size of a $(C_{i,j} \backslash J)$-disjoint subset of $\mathcal{I}_{i,j}$, which is nothing but $\delta_{i,j}(J \cap C_{i,j})$.
Also, by assumption, we have $|T_t| = |T \cap C_{i,j}| \geq \delta_{i,j}(f) \geq \delta_{i,j}(J \cap C_{i,j})$.
Therefore, $|R_t| \leq |T_t|$.
Finally, we define $R = \bigcup_{t=1}^{p+1} R_t$.
Clearly, $(T \backslash J) \cup R$ hits $\mathcal{I}$.
So it suffices to show that $|(T \backslash J) \cup R| \leq |T|$.
Since $|R_t| \leq |T_t|$ for all $t \in [p+1]$, we have
\begin{equation*}
    |(T \backslash J) \cup R| = |T \backslash J| + \sum_{t=1}^{p+1} |R_t| \leq |T \backslash J| + \sum_{t=1}^{p+1} |T_t| = |T|,
\end{equation*}
which completes the proof of the ``if'' part.

Next, we prove the ``only if'' part of the lemma.
Assume $T \subseteq C$ is an $f$-tolerant hitting set of $\mathcal{I}$.
Consider two indices $i,j \in [m]$ with $i \leq j$.
To show $|T \cap C_{i,j}| \geq \delta_{i,j}(f)$, it suffices to show that $|T \cap C_{i,j}| \geq \delta_{i,j}(J)$ for all $J \subseteq C_{i,j}$ with $|J| \leq f$.
Since $T$ is an $f$-tolerant hitting set of $\mathcal{I}$, there exists $R \subseteq C \backslash J$ such that $|(T \backslash J) \cup R| \leq |T|$ and $(T \backslash J) \cup R$ is a hitting set of $\mathcal{I}$.
For brevity, let $T' = (T \backslash J) \cup R$.
By definition, the intervals in $\mathcal{I}_{i,j}$ can only be hit by the points in $C_{i,j}$.
Thus, $T' \cap C_{i,j}$ is a hitting set of $\mathcal{I}_{i,j}$.
As $T' \cap C_{i,j} \subseteq C_{i,j} \backslash J$, by Lemma~\ref{obs:non-FT-structure}, the size of $T' \cap C_{i,j}$ is at least the maximum size of a $(C_{i,j} \backslash J)$-disjoint subset of $\mathcal{I}_{i,j}$, i.e., $|T' \cap C_{i,j}| \geq \delta_{i,j}(J)$.
Furthermore, because $J \subseteq C_{i,j}$, we have $(T \backslash J) \backslash C_{i,j} = T \backslash C_{i,j}$.
It follows that $T \backslash C_{i,j} \subseteq T' \backslash C_{i,j}$ and thus $|T \backslash C_{i,j}| \leq |T' \backslash C_{i,j}|$.
For a committee $T$, we can partition $T$ into two parts: the part containing candidates in $C_{i,j}$ and the part containing candidates outside of $C_{i,j}$.
Hence, $|T| = |T \cap C_{i,j}| + |T \backslash C_{i,j}|$ and $|T'| = |T' \cap C_{i,j}| + |T' \backslash C_{i,j}|$.
Because $|T'| \leq |T|$ and $|T' \backslash C_{i,j}| \geq |T \backslash C_{i,j}|$, we have $|T' \cap C_{i,j}| \leq |T \cap C_{i,j}|$.
Therefore, $|T \cap C_{i,j}| \geq \delta_{i,j}(J)$.
This completes the proof of Lemma~\ref{lem-char}.
\end{proof}

In order to decide if $\sigma_f(T) \leq r$,  therefore, we just have to compute $\delta_{i,j}(f)$, for all $i,j$, 
and check the condition $|T \cap C_{i,j}| \geq \delta_{i,j}(f)$. We now show how to do that efficiently.

\paragraph*{Efficiently Computing $\delta_{i,j} (f)$}
\label{subsubsec:delta_ij}
For ease of presentation, we show how to compute $\delta_{1,m}(f)$; computing other $\delta_{i,j}(f)$ is similar.
We have $C_{1,m} = C$, $\mathcal{I}_{1,m} = \mathcal{I}$, and $\delta_{1,m}(f)$ is size of the largest subset 
of $\mathcal{I}$ that is $(C \backslash J)$-disjoint for any failing set $J \subseteq C$ with $|J| \leq f$.
The intervals of $\mathcal{I} = \{I_1, I_2, \ldots, I_n\}$  are in the left to right sorted order and, for each $i \in [n]$, 
let $C(I_i) = C \cap I_i$ be the set of points in $C$ that hits $I_i$.
Define $\varGamma[i][j]$ as the maximum size of an $(C\setminus X)$-disjoint subset $\mathcal{J} \subseteq \{I_1,\dots,I_i\}$ such that 
$X \subseteq C$ and $|X| \leq j$.

\begin{lemma}
We have the following recurrence
    $$ \varGamma[i][j] ~=~ \max \left\{ \begin{array}{l}
                 \varGamma[i-1][j], \\
                \max\limits_{0 \leq i' \leq i} 1 + \varGamma[i'][j - |C(I_i) \cap C(I_{i'})|]
            \end{array}\right\}
$$ 
\end{lemma}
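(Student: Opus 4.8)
The plan is to prove the recurrence by showing that its right-hand side is simultaneously an upper and a lower bound for $\varGamma[i][j]$, after recasting $\varGamma[i][j]$ in a cleaner form. For a family $\mathcal{J}\subseteq\{I_1,\dots,I_i\}$, the smallest $X\subseteq C$ for which $\mathcal{J}$ is $(C\setminus X)$-disjoint is exactly the set of \emph{conflict points} of $\mathcal{J}$, i.e.\ the $c\in C$ lying in at least two intervals of $\mathcal{J}$: removing fewer cannot work, and removing exactly these does (this is the instance of Lemma~\ref{obs:non-FT-structure} with point set $C$). So $\varGamma[i][j]$ is the largest size of a family $\mathcal{J}\subseteq\{I_1,\dots,I_i\}$ with at most $j$ conflict points. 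I will use the nesting property of the equal-length, left-to-right sorted voter intervals: if $a<b<e$ then $I_a\cap I_e\subseteq I_b$, so $C(I_e)\cap C(I_a)\subseteq C(I_e)\cap C(I_b)$; hence, if $\mathcal{J}=\{I_{a_1},\dots,I_{a_k}\}$ with $a_1<\cdots<a_k$, the conflict points of $\mathcal{J}$ that involve $I_{a_t}$ and some earlier chosen interval are precisely those in $C(I_{a_t})\cap C(I_{a_{t-1}})$. I adopt the conventions $\varGamma[0][j]=0$, $C(I_0)=\emptyset$, $\varGamma[\cdot][j']=-\infty$ for $j'<0$, and use the standing assumption $|I_i\cap C|>f\ge j$ for all $i$.

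\emph{Lower bound.} The term $\varGamma[i-1][j]$ is clear, since a feasible family for $\{I_1,\dots,I_{i-1}\}$ is feasible for $\{I_1,\dots,I_i\}$. For the term $1+\varGamma[i'][j-|C(I_i)\cap C(I_{i'})|]$ (when the subscript is $\ge 0$), take an optimal family $\mathcal{J}'\subseteq\{I_1,\dots,I_{i'}\}$ for $\varGamma[i'][j-|C(I_i)\cap C(I_{i'})|]$ with conflict set $X'$, and add $I_i$. A point that is a conflict of $\mathcal{J}'\cup\{I_i\}$ but not of $\mathcal{J}'$ must lie in $I_i$ and in some interval of $\mathcal{J}'$, hence --- by nesting, with $I_{i'}$ the rightmost interval of $\mathcal{J}'$ (trivial if $i'=0$) --- in $C(I_i)\cap C(I_{i'})$. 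Thus the conflict set of $\mathcal{J}'\cup\{I_i\}$ lies in $X'\cup(C(I_i)\cap C(I_{i'}))$, of size at most $(j-|C(I_i)\cap C(I_{i'})|)+|C(I_i)\cap C(I_{i'})|=j$, so $\mathcal{J}'\cup\{I_i\}$ is feasible and $\varGamma[i][j]\ge|\mathcal{J}'|+1$.

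\emph{Upper bound.} Take an optimal family $\mathcal{J}$ for $\varGamma[i][j]$ with conflict set $X$, $|X|\le j$. If $I_i\notin\mathcal{J}$, then $\mathcal{J}$ is feasible for $\{I_1,\dots,I_{i-1}\}$, so $\varGamma[i][j]\le\varGamma[i-1][j]$. Otherwise let $I_{i'}$ be the chosen interval immediately to the left of $I_i$ in $\mathcal{J}$ ($i'=0$ if $I_i$ is leftmost chosen), and set $\mathcal{J}'=\mathcal{J}\setminus\{I_i\}$, with conflict set $X'$. Since $X'\subseteq X$ and $C(I_i)\cap C(I_{i'})\subseteq X$ (its points lie in $I_i,I_{i'}\in\mathcal{J}$), if $X'$ and $C(I_i)\cap C(I_{i'})$ are disjoint then $|X'|\le|X|-|C(I_i)\cap C(I_{i'})|\le j-|C(I_i)\cap C(I_{i'})|$, so $\mathcal{J}'$ is feasible for $\varGamma[i'][j-|C(I_i)\cap C(I_{i'})|]$ and $\varGamma[i][j]=|\mathcal{J}'|+1\le 1+\varGamma[i'][j-|C(I_i)\cap C(I_{i'})|]$, finishing the proof.

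\emph{The main obstacle} is establishing that $X'$ and $C(I_i)\cap C(I_{i'})$ are disjoint in the last step (equivalently, that the per-interval conflict contributions telescope without double counting). I would derive this from the claim, which is exactly where $|I_i\cap C|>f$ is used: in any feasible family $\mathcal{J}$ for $\varGamma[i][j]$ with $j\le f$, no candidate lies in three or more intervals of $\mathcal{J}$. Indeed, if $c\in I_a\cap I_b\cap I_e$ with $I_a,I_b,I_e\in\mathcal{J}$ and $a<b<e$, then using $I_a\cap I_e\subseteq I_b$ and a short computation with the endpoints one checks that every point of $C(I_b)$ lies in $I_b\cap I_e$ or in $I_a\cap I_b$, hence is a conflict point; so $C(I_b)\subseteq X$ and $|X|\ge|C(I_b)|>f$, contradicting $|X|\le j\le f$. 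Granting the claim, a point of $X'\cap C(I_i)\cap C(I_{i'})$ would lie in $I_i$, in $I_{i'}\in\mathcal{J}'$, and --- being a conflict point of $\mathcal{J}'$ --- in a further interval of $\mathcal{J}'$, hence in three intervals of $\mathcal{J}$, which the claim forbids. (The same claim also keeps the lower-bound argument tight, although there the crude containment bound already suffices.)
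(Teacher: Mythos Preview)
The paper states this recurrence as a lemma without proof, so there is no ``paper's own proof'' to compare against. Your argument is correct and complete. The reformulation of $\varGamma[i][j]$ as ``largest family $\mathcal{J}\subseteq\{I_1,\dots,I_i\}$ with at most $j$ conflict points'' is the right way to set up the two-sided inequality, and the crucial step is the one you correctly isolate as the main obstacle: under the standing hypothesis $|I_b\cap C|>f$, no candidate can lie in three chosen intervals $I_a,I_b,I_e$ (with $a<b<e$), because then every point of $I_b$ lies in $I_a$ or $I_e$, forcing $C(I_b)\subseteq X$ and hence $|X|>f\ge j$. This is precisely what makes the upper-bound telescoping work, guaranteeing that $C(I_i)\cap C(I_{i'})$ is disjoint from the residual conflict set $X'$. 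Without this assumption the recurrence is in fact false (three nested intervals sharing a single candidate already break it for $j=1$), so your use of it is not incidental.

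One minor wording issue in the lower bound: you write ``with $I_{i'}$ the rightmost interval of $\mathcal{J}'$,'' but $I_{i'}$ need not belong to $\mathcal{J}'$. What you actually use is only that every $I_a\in\mathcal{J}'$ has $a\le i'<i$, so nesting gives $I_a\cap I_i\subseteq I_{i'}$ regardless; the conclusion that new conflicts land in $C(I_i)\cap C(I_{i'})$ is unaffected.
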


Clearly, $\delta_{1,m}(f) = \varGamma[n][f]$.
The base case for our dynamic program is $\varGamma[0,j] = 0$ for all $j \in [f]$ and $\varGamma[i][j] = -\infty$ for $j < 0$ and all $i \in [n]$.
Our dynamic program runs in time $\mathcal{O}(n^2mf)$.
In the same way, we can compute the values of $\delta_{i,j}(f)$ for all $i,j \in [m]$ with $i\leq j$.

\begin{lemma}\label{lem-delta}
    $\delta_{i,j}(f)$, for all $1 \leq i \leq j \leq m$,  can be computed in time $\mathcal{O}(n^2m^3f)$.
\end{lemma}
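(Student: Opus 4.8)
The plan is to obtain all the values $\delta_{i,j}(f)$ by running the dynamic program already designed for $\delta_{1,m}(f)$ a total of $O(m^2)$ times, once for each index pair $(i,j)$ with $i \le j$. The key observation is that, for fixed $i \le j$, the quantity $\delta_{i,j}(f) = \max_{J \subseteq C_{i,j},\,|J|\le f}\delta_{i,j}(J)$ is precisely the ``$\delta_{1,m}(f)$-quantity'' of the restricted instance whose candidate set is $C_{i,j}$ and whose interval set is $\mathcal{I}_{i,j}$. Indeed, every interval of $\mathcal{I}_{i,j}$ contains only candidates of $C_{i,j}$, so failing a candidate outside $C_{i,j}$ can never make two intervals of $\mathcal{I}_{i,j}$ $(C\setminus J)$-disjoint; hence $\max_{J\subseteq C,\,|J|\le f}\delta_{i,j}(J) = \max_{J\subseteq C_{i,j},\,|J|\le f}\delta_{i,j}(J)$. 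Therefore, instantiating the recurrence of the preceding lemma with $C_{i,j}$ in place of $C$ and $\mathcal{I}_{i,j}$ in place of $\mathcal{I}$ and reading off the top-right table entry computes $\delta_{i,j}(f)$, and its correctness is inherited directly since that recurrence uses nothing about $C$ and $\mathcal{I}$ beyond their being a set of points and a set of intervals on the line.

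Concretely, first I would preprocess in $O((m+n)\log(m+n))$ time: sort $C$ and $\mathcal{I}$ from left to right, and for each interval $I_\ell$ record the indices $a_\ell \le b_\ell$ of its leftmost and rightmost candidates in $C$ (well defined because $|I_\ell \cap C| > f \ge 0$ by the standing assumption). Then $I_\ell \in \mathcal{I}_{i,j}$ if and only if $i \le a_\ell$ and $b_\ell \le j$, so for each $(i,j)$ the sorted list $\mathcal{I}_{i,j}$ can be extracted in $O(n)$ time. Next, for each of the $O(m^2)$ pairs $(i,j)$, I would build the table $\varGamma^{(i,j)}$ exactly as in the preceding lemma but over the ground sets $C_{i,j}$ and $\mathcal{I}_{i,j}$ (with the analogous base cases), and set $\delta_{i,j}(f) = \varGamma^{(i,j)}[\,|\mathcal{I}_{i,j}|\,][f]$. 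Since $|C_{i,j}| \le m$ and $|\mathcal{I}_{i,j}| \le n$, the same analysis that bounds the $\delta_{1,m}(f)$ computation gives $O(n^2 m f)$ time per pair; multiplying by the $O(m^2)$ pairs and adding the preprocessing yields the claimed $O(n^2 m^3 f)$ bound.

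The one place that needs care in writing the full proof is the assertion that the preceding lemma's recurrence transfers verbatim to $(C_{i,j},\mathcal{I}_{i,j})$. This rests on two facts: (i) by definition $\mathcal{I}_{i,j}$ is exactly the family of intervals whose candidate set lies in $C_{i,j}$, so for $J \subseteq C_{i,j}$ the notion of a $(C_{i,j}\setminus J)$-disjoint subfamily of $\mathcal{I}_{i,j}$ coincides with that of a $(C\setminus J)$-disjoint subfamily, and, as noted above, failures outside $C_{i,j}$ are useless; and (ii) the recurrence is purely combinatorial in the point--interval incidence structure and is oblivious to the actual coordinates of the voters and candidates. Granting (i) and (ii), the time bound follows by the counting just described, which completes the proof.
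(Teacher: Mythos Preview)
Your proposal is correct and follows essentially the same approach as the paper: the paper simply remarks that ``in the same way'' the dynamic program for $\delta_{1,m}(f)$ can be rerun on each restricted instance $(C_{i,j},\mathcal{I}_{i,j})$, incurring the $O(m^2)$ overhead that yields the stated $O(n^2 m^3 f)$ bound. Your write-up actually supplies more justification than the paper does (the preprocessing details and the observation that failures outside $C_{i,j}$ are useless), but the underlying idea is identical.
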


Given a hitting set $T \subseteq C$ and the values $\delta_{i,j}(f)$, we can verify the condition in Lemma~\ref{lem-char} in 
time $\mathcal{O}(m^3)$.  We can then use binary search to find the smallest value of $r$ for which $T$ is an $f$-tolerant 
hitting set. This establishes the following result.

\begin{theorem}
    \label{thm:1d-FTS}
    The fault-tolerance score of a $1$-dimensional committee $T$ can be computed in time $\mathcal{O}(n^2m^3f \log(nm))$.
\end{theorem}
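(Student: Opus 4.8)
The plan is to reduce the computation of $\sigma_f(T)$ to $\mathcal{O}(\log(nm))$ instances of the $f$-tolerant hitting set decision problem, each solved via Lemma~\ref{lem-char} and Lemma~\ref{lem-delta}. First I would observe that $\sigma_f(T)$, being a maximum over failing sets $J$ of a committee score $\sigma((T\setminus J)\cup R)$, is always the distance $d(v,c)$ between some voter $v$ and some candidate $c$; hence $\sigma_f(T)$ lies in the set $D=\{d(v,c):v\in V,\ c\in C\}$ of at most $nm$ values, which I would compute and sort once in $\mathcal{O}(nm\log(nm))$ time. For a fixed $r$, recall that $\sigma_f(T)\le r$ holds exactly when $T$ is an $f$-tolerant hitting set of the family $\mathcal{I}=\mathcal{I}^{(r)}$ of length-$2r$ intervals centered at the voters. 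I would then note that feasibility is monotone in $r$: if $r\le r'$ then $I_v^{(r)}\subseteq I_v^{(r')}$ for every voter $v$, so any point set hitting $\mathcal{I}^{(r)}$ also hits $\mathcal{I}^{(r')}$, and therefore the property of being an $f$-tolerant hitting set is preserved when $r$ increases. Consequently $\sigma_f(T)$ is the smallest element of $D$ on which the decision problem answers ``yes'', and can be located by binary search over the sorted list $D$.

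It remains to decide, for a given $r$, whether $T$ is an $f$-tolerant hitting set of $\mathcal{I}^{(r)}$. I would build $\mathcal{I}^{(r)}$ together with the sets $C\cap I_i$ in $\mathcal{O}(nm)$ time, and first check the precondition $|C\cap I_i|>f$ for all $i\in[n]$ used by Lemma~\ref{lem-char}; if it fails, the adversary can orphan that interval by failing its $\le f$ candidates, so the answer is ``no'' and the search moves to a larger $r$. Otherwise I would invoke Lemma~\ref{lem-delta} to compute all values $\delta_{i,j}(f)$ for $1\le i\le j\le m$ in time $\mathcal{O}(n^2m^3f)$, and then verify the characterization of Lemma~\ref{lem-char}, i.e.\ that $|T\cap C_{i,j}|\ge\delta_{i,j}(f)$ for all $\mathcal{O}(m^2)$ index pairs, in $\mathcal{O}(m^3)$ time (or $\mathcal{O}(m^2)$ using a prefix-count array for $|T\cap\{c_1,\dots,c_i\}|$). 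By Lemma~\ref{lem-char} this correctly decides $\sigma_f(T)\le r$.

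Putting the pieces together, the binary search performs $\mathcal{O}(\log(nm))$ iterations, each dominated by the $\mathcal{O}(n^2m^3f)$ cost of recomputing the values $\delta_{i,j}(f)$ for the current interval family (the family depends on $r$, so this work cannot be amortized across iterations), plus the $\mathcal{O}(nm)$-time reconstruction of $\mathcal{I}^{(r)}$ and the $\mathcal{O}(m^3)$-time feasibility check; the one-time sorting of $D$ costs $\mathcal{O}(nm\log(nm))$ and is absorbed. This yields the claimed bound $\mathcal{O}(n^2m^3f\log(nm))$. I do not expect a serious obstacle here, since Lemma~\ref{lem-char} and Lemma~\ref{lem-delta} carry the technical weight; the only points needing care are the monotonicity argument that legitimizes the binary search, the explicit handling of the $|C\cap I_i|>f$ precondition, and the degenerate case $|C|\le f$, where no $r\in D$ is feasible and $\sigma_f(T)$ is reported as $+\infty$.
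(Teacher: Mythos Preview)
Your proposal is correct and follows essentially the same approach as the paper: binary search over the $O(nm)$ voter--candidate distances, and for each candidate value of $r$ invoke Lemma~\ref{lem-delta} to compute all $\delta_{i,j}(f)$ and then test the characterization of Lemma~\ref{lem-char}. You spell out a few points the paper leaves implicit (the monotonicity in $r$ that justifies binary search, the handling of the $|C\cap I_i|>f$ precondition, and the degenerate case $|C|\le f$), but the overall strategy and the resulting running time analysis are the same.
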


\subsection{Optimal Fault-Tolerant Committee}
\label{sec:OFTC-1d}
We now address the problem of \emph{designing} a fault-tolerant committee: select a committee $T$ of size $k$ 
whose fault-tolerance score $\sigma_f (T)$ is minimized. Thus, our goal is \emph{not} to optimize the fault-free 
score of $T$, namely $\sigma_0 (T)$, but rather the score that the best replacement
will have after a worst-case set of $f$ faults in $T$, namely $\sigma_f (T)$.
Following the earlier approach, we again focus on the decision question: given some $r \geq 0$, is there a committee 
of size $k$ with $\sigma_f(T) \leq r$? For a given value of $r$, we construct our hitting set instance with 
candidate-points and voter-intervals, and compute a minimum-sized $f$-tolerant hitting set $T \subseteq C$ as follows:

\begin{enumerate}
\item Compute the value of $\delta_{i,j}(f)$, for all $1 \leq i \leq j \leq m$.
\item Compute a minimum subset $T \subseteq C$ satisfying $|T \cap C_{i,j}| \geq \delta_{i,j}(f)$, for all $1 \leq i \leq j \leq m$.
\item If $|T| \leq k$, we have a solution; otherwise, the answer to the decision problem is no.
\end{enumerate}

Step (1) is implemented using the dynamic program of the previous subsection, and so it suffices to explain
how to implement step (2).
We assume without loss of generality that $|C_{i,j}| \geq \delta_{i,j}(f)$ for all $i,j$, because otherwise there is no solution.
We compute a set $T$ using the following greedy algorithm.
\begin{itemize}
\item	Initialize $T = \emptyset$.
\item	For each $c_k$ for $k\in [m]$, if there exists $i,j \in [m]$ with $i \leq k \leq j \leq m$ such that 
	$\delta_{i,j}(f) \geq |T \cap C_{i,j}| + (j-k+1)$, then add $c_k$ to $T$.
\end{itemize}

The algorithm runs in time $\mathcal{O}(m^3)$. 
To prove correctness, we first claim the following.

\begin{lemma} \label{obs-valid}
$|T \cap C_{i,j}| \geq \delta_{i,j}(f)$, for all $1 \leq i \leq j \leq m$.
\end{lemma}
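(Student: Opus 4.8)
The plan is to prove that the greedy algorithm in step (2) produces a set $T$ that satisfies every constraint $|T \cap C_{i,j}| \geq \delta_{i,j}(f)$, i.e., that the greedy choice never ``runs out'' of candidates needed to meet a future requirement. I would argue this by contradiction: suppose the output $T$ violates some constraint, and pick a violated constraint indexed by $(i,j)$ with, say, $j$ as small as possible (and among those, $i$ as large as possible). The key observation is a counting/packing argument localized to the window $[c_i, c_j]$: for each candidate $c_k$ with $i \le k \le j$ that greedy \emph{rejected}, the rejection condition tells us that \emph{every} subwindow $[c_{i'}, c_j]$ with $i' \le k$ already had its partial count $|T \cap C_{i',j}|$ large enough that adding the remaining $j-k+1$ slots would not be needed — in particular, taking $i' = i$, we learn $\delta_{i,j}(f) < |T \cap C_{i,j}| + (j-k+1)$ at the moment $c_k$ was processed.

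The main step is to turn these per-rejection inequalities into a single contradiction. Process the candidates $c_i, c_{i+1}, \dots, c_j$ in order and track the quantity $|T \cap C_{i,j}|$ as it is being built up (it increases by $1$ exactly on accepted candidates). If at the end $|T \cap C_{i,j}| < \delta_{i,j}(f)$, then there must be some rejected $c_k$ in the window; take the \emph{last} such rejected candidate $c_k$ (largest $k \le j$ with $c_k \notin T$). After $c_k$ all of $c_{k+1}, \dots, c_j$ were accepted, so the final count is (count just after processing $c_k$) $+ (j-k)$. But the rejection condition at $c_k$, applied with the interval $[c_i, c_j]$, gives $\delta_{i,j}(f) < (\text{count just before } c_k) + (j-k+1) = (\text{count just after } c_k) + (j - k + 1)$, since $c_k$ was rejected and contributes $0$. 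Hence $\delta_{i,j}(f) < (\text{count just after } c_k) + (j-k+1)$, while the final count is $(\text{count just after } c_k) + (j-k)$, so $\delta_{i,j}(f) \le (\text{count just after } c_k)+(j-k) = |T\cap C_{i,j}|$, contradicting the assumed violation. If instead there is \emph{no} rejected candidate in $[c_i,c_j]$, then $|T \cap C_{i,j}| = |C_{i,j}| = j-i+1 \ge \delta_{i,j}(f)$ by the standing assumption $|C_{i,j}| \ge \delta_{i,j}(f)$, again no violation.

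The delicate point — and the step I expect to require the most care — is bookkeeping the ``running count'' correctly: the rejection test in the algorithm is evaluated against the set $T$ \emph{as it stands when $c_k$ is processed}, not the final $T$, so one must be careful that the partial count $|T \cap C_{i,j}|$ at the time $c_k$ is processed equals exactly the number of accepted candidates among $c_i, \dots, c_{k-1}$ (candidates left of $c_i$ do not lie in $C_{i,j}$, and candidates right of $c_k$ have not been processed yet). Once that is pinned down, the inequalities chain cleanly. I would also note that the argument only ever uses the rejection condition instantiated at the \emph{fixed} left endpoint $i$ of the violated constraint, so the ``for all $i',j'$'' in the algorithm is only needed for feasibility toward other constraints, not for this particular one — which is why choosing a minimal violated $(i,j)$ is not even strictly necessary here, though it does no harm. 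This establishes Lemma~\ref{obs-valid}.
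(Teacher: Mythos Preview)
Your argument is correct and is essentially the same as the paper's proof: both assume a violated constraint $(i,j)$, take the largest index $k\in\{i,\dots,j\}$ with $c_k\notin T$, and use the rejection condition at $c_k$ instantiated with that same pair $(i,j)$ together with the fact that $c_{k+1},\dots,c_j\in T$ to derive the contradiction $\delta_{i,j}(f)\leq |T\cap C_{i,j}|$. Your observation that choosing a minimal violated $(i,j)$ is unnecessary is also in line with the paper, which does not minimize either.
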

\begin{proof}
Suppose not, so we have $|T \cap C_{i,j}| < \delta_{i,j}(f)$, for some $i \leq j$.
We recall that for any interval $I_i \in \mathcal{I}$, $|I_i \cap C| > f$.
Therefore, for any failing set $J$, $C_{i,j} \setminus J$ is a hitting set of $\mathcal{I}_{i,j}$, and $|C_{i,j}| \geq \delta_{i,j}(f)$.
This implies that there exists some point among $c_i,\dots,c_j$ that is not in $T$.
Let $k \in \{i,\dots,j\}$ be the largest index such that $c_k \notin T$.
For convenience, we use $T'$ to denote the set $T$ in the iteration of our algorithm that considers $c_k$.
Note that $T \cap C_{i,j} = (T' \cap C_{i,j}) \cup \{c_{k+1},\dots,c_j\}$ and $(T' \cap C_{i,j}) \cap \{c_{k+1},\dots,c_j\} = \emptyset$.
Therefore, $|T' \cap C_{i,j}| = |T \cap C_{i,j}| - (j-k) < \delta_{i,j}(f) - (j-k)$.
This implies $|T' \cap C_{i,j}|+(j-k+1) \leq \delta_{i,j}(f)$.
By our algorithm, in this case we should include $c_k$ in $T$, which contradicts the fact that $c_k \notin T$.
\end{proof}

We now argue that $T$ has the minimum size among all subsets of $C$ satisfying the property of Lemma~\ref{obs-valid}.
Let $\mathsf{opt}$ be the minimum size of a subset of $C$ satisfying the desired property.
We write $T = \{c_{k_1},\dots,c_{k_r}\}$, where $k_1 < \cdots < k_r$.

\begin{restatable}[]{lemma}{Greedyproof}
\label{lem:1d-greedy-proof}
For any $t \in [r]$, there exists a subset $T^* \subseteq C$ such that 
	(1) $|T^* \cap C_{i,j}| \geq \delta_{i,j}(f)$ for all $i,j \in [m]$ with $i \leq j$,
    	(2) $|T^*| = \mathsf{opt}$, and
    	(3) $\{c_{k_1},\dots,c_{k_t}\} \subseteq T^*$.
\end{restatable}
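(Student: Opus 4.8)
The plan is to prove Lemma~\ref{lem:1d-greedy-proof} by induction on $t$, which is the standard exchange-argument template for greedy correctness, adapted to the fact that our feasibility constraint is a family of lower-bound inequalities $|T \cap C_{i,j}| \ge \delta_{i,j}(f)$ rather than a single matroid-type constraint. For the base case $t = 0$ we simply take $T^*$ to be any optimum solution, which exists by the definition of $\mathsf{opt}$ and by Lemma~\ref{obs-valid} guaranteeing the constraint system is satisfiable. For the inductive step, assume $T^*$ satisfies (1)--(3) for $t-1$; if $c_{k_t} \in T^*$ already, we are done, so suppose $c_{k_t} \notin T^*$. We then want to exchange some element of $T^*$ for $c_{k_t}$ without destroying feasibility or changing the size.

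The key step is choosing the right element to swap out. Let $c_\ell$ be the smallest-indexed element of $T^*$ with index strictly greater than $k_t$ and not among $\{c_{k_1},\dots,c_{k_{t-1}}\}$; I would first argue such an element must exist. The reason is a counting argument tailored to the moment the greedy algorithm added $c_{k_t}$: at that iteration there was some interval range $[i,j]$ with $i \le k_t \le j$ and $\delta_{i,j}(f) \ge |T' \cap C_{i,j}| + (j - k_t + 1)$, where $T'$ is the partial greedy set at that time. Since $\{c_{k_1},\dots,c_{k_{t-1}}\} \subseteq T^*$ and these all have index $< k_t$ (hence lie in $C_{i,k_t-1} \subseteq C_{i,j}$), while $T^*$ must satisfy $|T^* \cap C_{i,j}| \ge \delta_{i,j}(f)$, comparing the two lower bounds forces $T^*$ to contain at least one element of $C_{i,j}$ with index in $\{k_t,\dots,j\}$ beyond what the $c_{k_1},\dots,c_{k_{t-1}}$ supply; and since $c_{k_t} \notin T^*$ by assumption, that element has index in $\{k_t+1,\dots,j\}$. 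So a swap candidate $c_\ell$ with $\ell > k_t$ exists, and moreover $c_\ell \in C_{i,j}$.

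Having found $c_\ell$, I set $T^{**} = (T^* \setminus \{c_\ell\}) \cup \{c_{k_t}\}$. Properties (2) and (3) are immediate: the size is unchanged, and $\{c_{k_1},\dots,c_{k_t}\} \subseteq T^{**}$. For property (1), the only ranges $C_{a,b}$ whose intersection with the committee could shrink are those containing $c_\ell$ but not $c_{k_t}$, i.e.\ $k_t < a \le \ell \le b$. The point is that for such a range, $a > k_t$, so the interval $[a,b]$ is ``to the right of'' $k_t$; I would show that the greedy algorithm, by the time it considered $c_\ell$ (and the indices between $k_t$ and $\ell$), had already ensured that $C_{a,b}$ is over-satisfied — or more directly, that the defining inequality for $c_{k_t}$'s insertion, together with the nesting structure $C_{a,b} \subseteq C_{i,j}$ when $i \le k_t < a$, lets us transfer the slack. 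The cleanest route is: replacing $c_\ell$ by $c_{k_t}$ keeps $|T^{**} \cap C_{a,b}| = |T^* \cap C_{a,b}| - 1$ only when $k_t \notin [a,b]$; but then $c_{k_t}$ lies in the strictly larger range $[i, b]$ (since $i \le k_t < a$), and we can rechoose the swapped-out element to be one lying in $C_{a,b}$ rather than $c_\ell$, iterating the selection so that the removed element always lies in every range that loses a member. In other words, the right formulation of the swap picks $c_\ell$ to be the smallest index $> k_t$ that is in $T^*$ and not in $\{c_{k_1},\dots,c_{k_{t-1}}\}$, and one checks that every range $[a,b]$ with $k_t < a \le \ell$ actually contains no element of $T^* \cap \{c_{k_t+1},\dots,c_{\ell-1}\}$, so its count was already large enough that losing $c_\ell$ (which may not even be in it if $b < \ell$) is harmless, while ranges with $b \ge \ell$ still contain $c_{k_t}$ is false — so one instead verifies directly $|T^{**}\cap C_{a,b}| \ge \delta_{a,b}(f)$ using that $|T^* \cap C_{a,b}|$ was $\ge \delta_{a,b}(f) + 1$ whenever $c_\ell \in C_{a,b}$ and $c_{k_t}\notin C_{a,b}$, which follows from the greedy insertion condition at step $\ell$ having failed.

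The main obstacle I anticipate is exactly this last bookkeeping: pinning down precisely why, for every range $[a,b]$ that loses $c_\ell$ but does not gain $c_{k_t}$, the set $T^*$ had one unit of slack to spare. This requires carefully coupling two facts — that greedy \emph{did} add $c_{k_t}$ (giving a tight range through $k_t$) and that greedy \emph{did not} add some element between $k_t$ and $\ell$, or that $c_\ell$'s membership pattern in $T^*$ is constrained — and translating them through the nesting $C_{a,b} \subseteq C_{i,j}$. Everything else (base case, size preservation, property (3)) is routine. Once Lemma~\ref{lem:1d-greedy-proof} is established with $t = r$, it yields $|T| = r \le |T^*| = \mathsf{opt}$, and since $T$ is itself feasible by Lemma~\ref{obs-valid}, we get $|T| = \mathsf{opt}$, proving the greedy algorithm is optimal.
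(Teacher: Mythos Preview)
Your overall strategy (induction plus an exchange argument) matches the paper's, but your choice of swap element is different from the paper's and leads to a genuine gap.

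You propose to swap out $c_\ell$, the smallest-indexed element of $T^*$ with index \emph{strictly greater than} $k_t$. Such an element need not exist. Concretely, take $m=3$ with $\delta_{1,2}(f)=1$ and all other $\delta_{i,j}(f)=0$. Greedy skips $c_1$ and picks $c_2$, so $k_1=2$; but $T^*=\{c_1\}$ is also feasible and optimal. At $t=1$ you need an element of $T^*$ with index $>2$, and there is none. Your counting argument (``comparing the two lower bounds forces $T^*$ to contain an element with index in $\{k_t,\dots,j\}$'') does not show this: from $|T^*\cap C_{i,j}| \ge |\{c_{k_1},\dots,c_{k_{t-1}}\}\cap C_{i,j}| + (j-k_t+1)$ you only get $(j-k_t+1)$ extra elements of $T^*$ somewhere in $[i,j]$, and they can all lie in $[i,k_t-1]$.

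The paper instead swaps out $c_k$ where $k$ is the \emph{smallest} index with $c_k\in T^*\setminus\{c_{k_1},\dots,c_{k_{t-1}}\}$; one checks (using exactly the trigger inequality for $c_{k_t}$ that you wrote down) that necessarily $k\le k_t$. With that choice the verification of property~(1) is short: for ranges with $j\ge k_t$ the count cannot drop (since $i\le k\le k_t\le j$ whenever $c_k\in C_{i,j}$), and for ranges with $j<k_t$ one simply appeals to feasibility of the greedy output $T$ via Lemma~\ref{obs-valid}, since $T\cap C_{i,j}=\{c_{k_1},\dots,c_{k_{t-1}}\}\cap C_{i,j}\subseteq T^*_{\mathrm{new}}\cap C_{i,j}$. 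Your attempt to certify slack for ranges with $a>k_t$ by invoking ``the greedy insertion condition at step $\ell$ having failed'' is also off: $c_\ell$ is an element of $T^*$, and there is no reason greedy rejected $c_\ell$; it could well be one of the later greedy picks $c_{k_{t+1}},\dots,c_{k_r}$.
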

\begin{proof}
We prove the observation by induction on $t$.
For $t=0$, the statement trivially holds.
Suppose the statement holds for $t-1$, i.e., there exists a subset $T^* \subseteq C$ satisfying the first two conditions in Lemma~\ref{lem:1d-greedy-proof} and $\{c_{k_1},\dots,c_{k_{t-1}}\} \subseteq T^*$.
We show the statement holds for $t$.
Specifically, we shall modify $T^*$ to make it satisfy $\{c_{k_1},\dots,c_{k_t}\} \subseteq T^*$ while maintaining the first two conditions in the observation.

First, we notice that $T^*$ must contain a point other than $c_{k_1},\dots,c_{k_{t-1}}$.
To see this, suppose $T^* = \{c_{k_1},\dots,c_{k_{t-1}}\}$.
Since our algorithm added $c_{k_t}$ to $T$, there exist $i,j \in [m]$ with $i \leq k_t \leq j$ such that $\delta_{i,j}(f) \geq |\{c_{k_1},\dots,c_{k_{t-1}}\} \cap C_{i,j}| + (j-k_t+1)$.
This implies that $\delta_{i,j}(f) > |\{c_{k_1},\dots,c_{k_{t-1}}\} \cap C_{i,j}|$, that is, $\delta_{i,j}(f) > |T^* \cap C_{i,j}|$, which contradicts the fact that $T^*$ satisfies the first condition in the lemma.
Thus, $T^*$ contains a point other than $c_{k_1},\dots,c_{k_{t-1}}$.

Now let $k$ be the smallest index such that $c_k \in T^* \backslash \{c_{k_1},\dots,c_{k_{t-1}}\}$.
If $k=k_t$, then $\{c_{k_1},\dots,c_{k_t}\} \subseteq T^*$ and we are done.
Otherwise, we remove $c_k$ from $T^*$ and add $c_{k_t}$ to $T^*$.
After this modification, it is clear that $|T^*| = \mathsf{opt}$ and $\{c_{k_1},\dots,c_{k_t}\} \subseteq T^*$.
So it suffices to show $|T^* \cap C_{i,j}| \geq \delta_{i,j}(f)$ for all $i,j \in [m]$ with $i \leq j$.
Consider indices $i,j \in [m]$ with $i \leq j$.
By assumption, before the modification we have $|T^* \cap C_{i,j}| \geq \delta_{i,j}(f)$.
If $j \geq k_t$, then $|T^* \cap C_{i,j}|$ does not decrease after the modification, and is thus at least $\delta_{i,j}(f)$.
So assume $j < k_t$.
In this case, $T \cap C_{i,j} = \{c_{k_1},\dots,c_{k_{t-1}}\} \cap C_{i,j} \subseteq T^* \cap C_{i,j}$.
Since $|T \cap C_{i,j}| \geq \delta_{i,j}(f)$ by Lemma~3, we have $|T^* \cap C_{i,j}| \geq \delta_{i,j}(f)$.
\end{proof}

We use binary search to find the smallest $r$ such that the reduced instance has an $f$-tolerant hitting set of size at most $k$.
Therefore, the following theorem holds.

\begin{theorem}
    \label{thm:1d-OFTC}
    Optimal Fault-Tolerant Committee can be solved in time $\mathcal{O}(n^2m^3f\log(nm))$ for one-dimensional Euclidean elections.
\end{theorem}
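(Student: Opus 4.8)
The plan is to glue together the structural characterization and the greedy subroutine already developed, and then lift the resulting decision procedure to the optimization problem by a binary search over candidate thresholds.

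First I would fix a target value $r \ge 0$ and treat the decision version: does there exist $T \subseteq C$ with $|T| \le k$ and $\sigma_f(T) \le r$? Constructing the hitting-set instance $\mathcal{I} = \{I_v : v \in V\}$ in which each $I_v$ is the interval of length $2r$ centered at $v$, Lemma~\ref{lem-char} tells us that $\sigma_f(T) \le r$ holds exactly when $|T \cap C_{i,j}| \ge \delta_{i,j}(f)$ for all $1 \le i \le j \le m$. So the decision question becomes: is the minimum size of a set $T \subseteq C$ meeting these $O(m^2)$ lower-bound constraints at most $k$? (We may assume $|C_{i,j}| \ge \delta_{i,j}(f)$ for all $i\le j$, else no feasible $T$ exists.) Lemma~\ref{obs-valid} shows the set $T$ produced by the greedy algorithm of Section~\ref{sec:OFTC-1d} meets every constraint, and Lemma~\ref{lem:1d-greedy-proof}, read with $t$ equal to the number $r$ of points greedy selects, produces a feasible $T^\star$ with $|T^\star| = \mathsf{opt}$ and $T \subseteq T^\star$; since $\mathsf{opt}$ is the minimum feasible size, $|T| \le |T^\star| = \mathsf{opt}$, so greedy is optimal. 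Hence one computes the $\delta_{i,j}(f)$'s, runs greedy once, and reports ``yes'' iff its output has size at most $k$.

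Next I would account for the running time of one decision query. Computing $\delta_{i,j}(f)$ for all $i \le j$ costs $\mathcal{O}(n^2 m^3 f)$ by Lemma~\ref{lem-delta}; the greedy construction of step~(2) runs in $\mathcal{O}(m^3)$, which is absorbed; and verifying the constraints and the size bound is $\mathcal{O}(m^3)$. So each query takes $\mathcal{O}(n^2 m^3 f)$ time.

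Finally, to solve OFTC itself I would note that the optimal value equals $\min_{|T| \le k} \sigma_f(T)$, which for the optimizing committee is attained as a distance $\sigma(u,(T\setminus J)\cup R) = d(u,c)$ between a voter $u$ and a committee member $c$, so it is one of only $O(nm)$ candidate values. Since ``$\exists T$ with $|T|\le k$ and $\sigma_f(T)\le r$'' is monotone in $r$, I would sort these $O(nm)$ distances and binary-search for the smallest one on which the decision procedure answers ``yes'', invoking it $\mathcal{O}(\log(nm))$ times; the committee returned at that threshold is optimal. This gives total time $\mathcal{O}(n^2 m^3 f \log(nm))$, as claimed. The only genuinely non-trivial ingredients are the exchange argument proving optimality of greedy (Lemma~\ref{lem:1d-greedy-proof}) and the recurrence for $\delta_{i,j}(f)$, both already in hand, so the remaining work here is purely the running-time bookkeeping; the main thing to be careful about is confirming the monotonicity that justifies the binary search.
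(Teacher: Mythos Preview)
Your proposal is correct and follows essentially the same route as the paper: reduce to the decision problem via the hitting-set reformulation and Lemma~\ref{lem-char}, compute all $\delta_{i,j}(f)$ with Lemma~\ref{lem-delta}, run the greedy of Section~\ref{sec:OFTC-1d} (feasibility by Lemma~\ref{obs-valid}, optimality by Lemma~\ref{lem:1d-greedy-proof}), and binary-search over the $O(nm)$ voter--candidate distances. The only cosmetic issue is that you overload the symbol $r$ for both the threshold and the greedy output size; otherwise the argument and the time accounting match the paper's exactly.
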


\begin{remark}
    Our dynamic programming algorithm works as long as either the set $V$ or the set $C$ is embedded in $\mathbb{R}$ (i.e., has a linear ordering), while the other set can have an arbitrary $d$-dimensional embedding.
    Moreover, we can also extend our algorithms to ordinal elections with (widely studied) single-peaked preferences \cite{betzler2013computation,misra2019robustness} to compute an optimal fault-tolerant Chamberlin-Courant committee.
\end{remark}

\section[Fault tolerant in Higher Dimensions]{Fault-Tolerant Committees in Multidimensional Space}
\label{sec:2d-section}
We now consider fault tolerance in multidimensional elections. Unsurprisingly, the optimal committee 
design problem is intractable --- it is similar to facility location --- but it turns out that the seemingly simpler 
variants ORP and FTS are also intractable.  

\subsection{Hardness Results}

\begin{restatable}[$\star$]{theorem}{hardnesstwod}
\label{thm-hardness-2d}
All three problems (Optimal Replacement, Fault-Tolerance Score, and Optimal Fault-Tolerant Committee) are NP-hard, 
in any dimension $d \geq 2$ under the Euclidean norm, where size of the committee $k$ and the failure
parameter $f$ are part of the input.
\end{restatable}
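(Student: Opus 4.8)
The plan is to reduce all three problems from the (discrete) Euclidean $k$-center problem in the plane: given a finite $P\subseteq\mathbb{R}^{2}$ and an integer $k$, decide whether there is $S\subseteq P$ with $|S|\le k$ and $\max_{p\in P}\min_{s\in S}\|p-s\|\le 1$. This is NP-hard --- it is exactly the dominating-set problem on unit-disk graphs --- and a planar instance embeds in any $\mathbb{R}^{d}$, $d\ge 2$, by appending zero coordinates. Two of the three problems fall out almost immediately. For OFTC I would take $V=C=P$ and fault parameter $0$; since $\sigma_0(T)=\sigma(T)$, a committee of size at most $k$ with $\sigma_0(T)\le 1$ is precisely a $k$-center solution for $P$, so the decision version of OFTC is at least as hard as $k$-center decision. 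For ORP I would use a ``total-failure'' gadget: set $V=P$ and $C=P\cup Q$, where $Q=\{q_1,\dots,q_k\}$ is a block of $k$ new dummy candidates placed at one arbitrary location, and let $T=J=Q$. Then $T\setminus J=\emptyset$, $|T\cap J|=k$, and $C\setminus J=P$, so an optimal replacement of score at most $1$ is again exactly a $k$-center solution on $P$. Recording these two observations is the first step.

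The real work is FTS, where the adversary's maximization over failing sets must be pinned down. Here I would reduce from a geometric set-cover instance (for instance dominating set on unit-disk graphs): the voters are the elements, and for each set there is a ``region'' candidate positioned so that a voter lies within the threshold radius of it exactly when the corresponding element lies in that set. I would let the committee $T$ be the region candidates, so that $\sigma_0(T)$ does not already exceed the threshold, and attach to each region a bundle of $b$ co-located backup candidates, choosing $b$ large enough that every voter is within the threshold radius of strictly more than $f$ candidates. With this redundancy, no failing set can leave a voter permanently uncovered by a cheap local attack; the only damage the adversary can do is to destroy \emph{all} candidates associated with some region candidates of $T$ (each region costs about $b$ failures but contributes only $1$ to the replacement budget $|T\cap J|$), after which the residual repair task is to re-cover the voters of the destroyed regions using at most $|T\cap J|$ surviving region candidates --- a set-cover sub-instance. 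Taking $f$ proportional to $b$ times the set-cover parameter, so that the adversary's most damaging move corresponds to the ``no small cover'' instances, should give $\sigma_f(T)\le 1$ if and only if the covering instance is feasible.

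The hard part will be the FTS accounting. I need to choose $b$ and $f$ so that simultaneously: (i) every voter is over-covered, so that failing sets only ever produce set-cover-shaped repair problems; (ii) every ``cheap'' or mixed adversary strategy --- demolishing a single region, destroying several regions at once, spending part of the budget on non-committee candidates, or destroying a region only partially and then being repaired via a surviving co-located backup --- is never strictly more damaging than the intended strategy (this may require preprocessing the source instance, e.g.\ so that no set is contained in a single other set); and (iii) in the feasible case, the intended worst strategy is itself repairable within the budget $|T\cap J|$ using surviving region candidates. Getting these inequalities to hold together --- in particular forcing the residual replacement budget to be exactly the set-cover parameter rather than something larger --- is the crux, and it is also what makes the reduction genuinely need $f$ in the input, matching the polynomial-time solvability of FTS for bounded $f$.
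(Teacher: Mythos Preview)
Your reductions for ORP and OFTC are correct and much shorter than the paper's. The paper does not take the $k$-center shortcut; it builds a single gadget construction from Planar Monotone 3-SAT that serves all three problems simultaneously. Candidates split into ``covering'' candidates $C_{\mathrm{cov}}$ (one per gadget, each unique at its location) and ``robust'' candidates $C_{\mathrm{rob}}$ (each location carrying $f{+}1$ co-located copies, placed near but not on top of the covering candidates). One sets $T=C_{\mathrm{cov}}$ and $f=k=|C_{\mathrm{cov}}|$; a separate lemma shows $\sigma(C_{\mathrm{cov}})\le r$ regardless of satisfiability. The geometry is tuned so that selecting one robust candidate per gadget to cover all voters within $r$ is possible iff the formula is satisfiable. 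Your approach buys brevity for ORP and OFTC; the paper's unified construction buys FTS essentially for free once the ORP equivalence is in hand, since the adversary's choice $J=T=C_{\mathrm{cov}}$ reduces FTS back to that equivalence.

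Your FTS plan, however, has a structural gap. With backups co-located at each region candidate and $T$ equal to all region candidates, the replacer never faces a nontrivial choice: a partially destroyed region is restored for free by a surviving co-located backup, while a fully destroyed region cannot be helped at all, because every surviving candidate sits at a location already present in $T\setminus J$, so adding it to $R$ is redundant. Hence $\sigma_f(T)\le r$ collapses to the polynomial-time check that every voter belongs to more than $\lfloor f/(b{+}1)\rfloor$ sets --- no set-cover sub-instance ever materialises. What FTS genuinely needs is a committee $T$ with $\sigma(T)\le r$ that is \emph{not} itself a witness to the source problem (so you can write it down without solving anything hard), together with robust candidates at \emph{different} nearby locations whose selection after $J=T$ encodes the hard decision. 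The paper manufactures exactly this: $T=C_{\mathrm{cov}}$ is read off directly from the planar embedding of the formula and has small score by construction, while the robust candidates are offset so that, once the adversary plays $J=T$, choosing replacements from $C_{\mathrm{rob}}$ is equivalent to finding a satisfying assignment. The forward direction --- that every $J$ is survivable when $\phi$ is satisfiable --- then holds because the assignment names one always-surviving robust candidate per gadget.
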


We will use a single construction to show NP-hardness all three problems.
Our proof uses a reduction from the NP-complete problem \textsc{Planar Monotone 3-SAT} (PM-3SAT) \cite{deBerg2010SAT}.
An input to this problem is a \emph{monotone} 3-CNF formula $\phi$ where each clause contains either three positive literals or three negative literals, and whose variable-clause incidence graph has a planar embedding which is given as a part of the input. 
Given an instance $\phi$ of PM-3SAT, our reduction  constructs a 2-dimensional Euclidean election.
The general outline follows a scheme used in~\cite{sonar2022} to show the hardness of committee selection under \emph{ordinal} preferences, but generalizing the proof to \emph{fault-tolerant committees} requires several technical modifications and a new proof of correctness.

\begin{figure*}[t!]
    \centering
    \begin{subfigure}[b]{0.5\textwidth}
        \centering
        \includegraphics[scale=0.35]{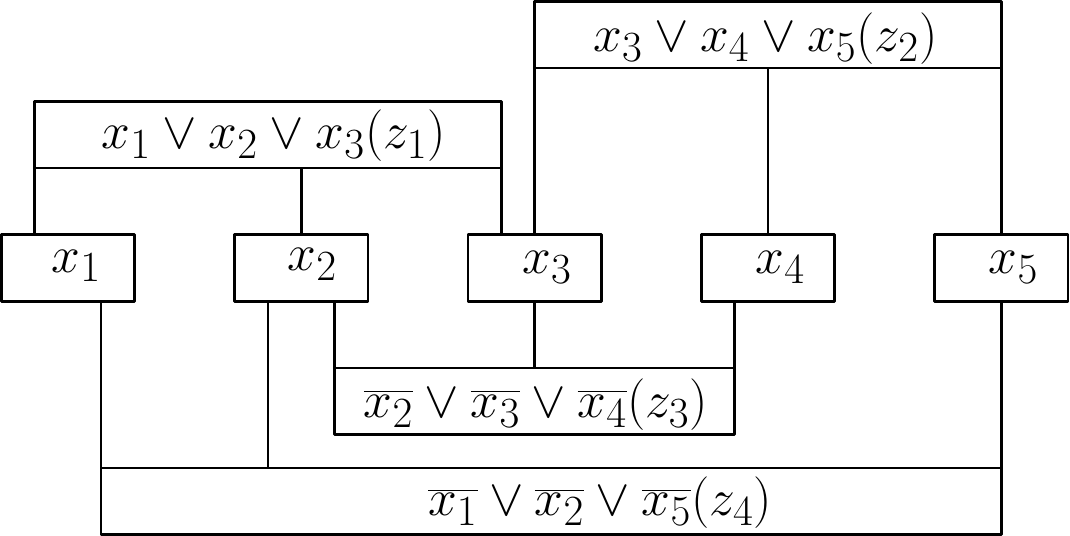}
        \caption{}
        \label{fig:pm-3sat}
    \end{subfigure}%
    \begin{subfigure}[b]{0.5\textwidth}
        \centering
        \includegraphics[scale=0.35]{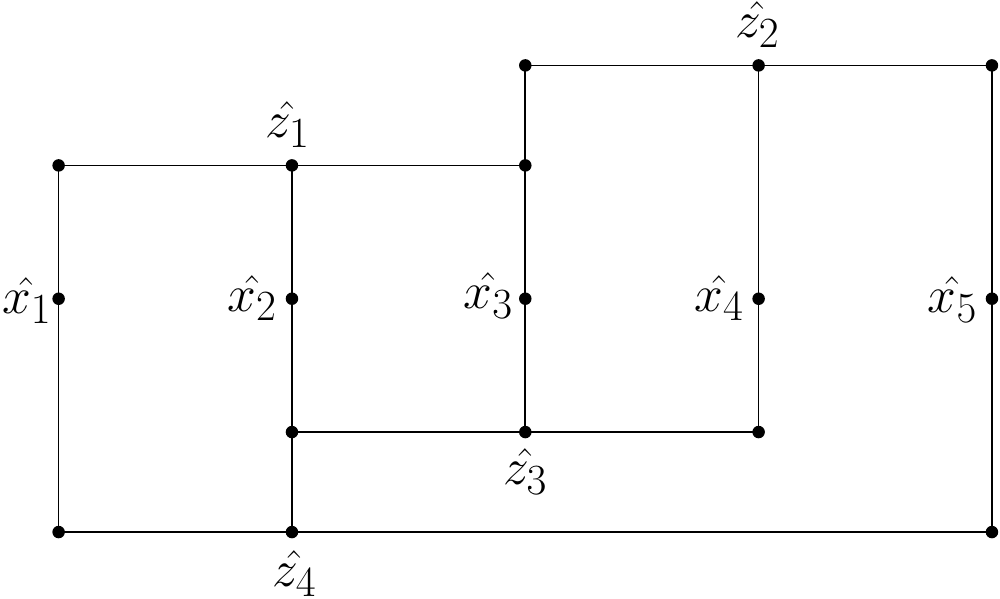}
        \caption{}
        \label{fig:orthogonal-embedding}
    \end{subfigure}
    \caption{Figure (a) shows rectangular embedding of the PM-3SAT instance given as a part of the input. In figure (b), we show a transformation of rectangular embedding to orthogonal embedding useful for our construction.
    Here, each variable $x_i$ in the rectangular embedding is replaced by the variable reference point $\hat{x_i}$ and each clause is replaced $z_i$ is replaced by the clause reference point $\hat{z_i}$.}
\end{figure*}

For ease of referencing, we adapt the terminologies from \cite{sonar2022}.
In the planar embedding of the formula $\phi$, each variable/clause is drawn as an (axis-parallel) rectangle in the plane, and so this is called a \textit{rectangular embedding}. 
See Figure~\ref{fig:pm-3sat} for an illustration.
The rectangles for the variables are drawn along the $x$-axis, while the rectangles for the positive (resp., negative) clauses lie above (resp., below) the $x$-axis.
If a clause contains a variable, then there is a vertical segment connecting the clause rectangle and the variable rectangle.
Each such vertical segment is disjoint from all the rectangles except the two it connects. 

The rectangular embedding of $\phi$ can be modified to another embedding which is easier to work with called \emph{orthogonal embedding}.
We refer the reader to \cite{sonar2022multiwinner} for details of the modification (See figure~\ref{fig:orthogonal-embedding} for intuition).
The intersection points of vertical and horizontal segments in the orthogonal embedding are \emph{connection points}.
To build the intuition for the orthogonal embedding, we now give its properties as stated in \cite{sonar2022multiwinner}:

\begin{enumerate}
    \item [(i)] Vertical and horizontal segments do not cross.
    \item [(ii)] Each horizontal segment corresponds to a clause in $\phi$.
    Moreover, it intersects exactly three vertical segments corresponding to the literals in that clause.
    \item [(iii)] The endpoints of all segments are connection points.
\end{enumerate}

For each horizontal segment, we will refer to the middle connection point as the \emph{reference point} of a the clause (notice that from properties (ii) and (iii) each horizontal segment has three connection points and two of those are the left and the right endpoint of the segment).
The reference points for the variables $x_1, \ldots, x_n$ are denoted by $\hat{x_1}, \ldots, \hat{x_n}$ and for the clauses $z_1, \ldots, z_m$ they are denoted by $\hat{z_1}, \ldots, \hat{z_m}$. 
With shifting/scaling of the orthogonal embedding without changing its topology, we can ensure that the $x$-coordinates ($y$-coordinates) of vertical (resp., horizontal) segments are distinct even integers in the range $\{1, \ldots, 2n\}$ (resp., $\{-2m, \ldots, 2m\}$).
This guarantees that all connection points have even integer coordinates and the embedding is contained in $[1,2n] \times [-2m, 2m]$ rectangle.
Now using the integral points on each segment $s$, we can partition $s$ into $\ell(s)$ parts each of unit length where $\ell(s)$ is the length of $s$.
These unit length segments are called \emph{pieces} of the orthogonal embedding.
We use $N$ to denote the total number of pieces.
Note that $N = \mathcal{O}(nm)$.

We now construct a Euclidean election $E = (V, C)$ with voters and candidates as points in $\mathbb{R}^2$.

\noindent \textbf{Variable gadgets.} For each variable $x_i$, we choose two additional points near (but not equal to) $x_i$ as follows.
Recall that there are two vertical pieces incident to $\hat{x}_i$ in the orthogonal embedding: one above $\hat{x}_i$, and the other below $\hat{x}_i$.
We choose a point with distance $0.2$ from $\hat{x}_i$, on each of the two pieces.
Next, we put $f+1$ candidates on each of these two points and a (single) candidate at $\hat{x_i}$ (we set the value of $f$ later in the construction).
Furthermore, we place a voter on each of these three points.
We call these candidates/voters the $x_i$-\textit{gadget}.
See figure~\ref{fig:vertex-gadget}.
For $i\in [n]$, we construct the $x_i$-gadget for each variable $x_i$.
Overall, the variable gadgets have $(2f+3)n$ candidates and $3n$ voters.

\noindent \textbf{Clause gadgets.} Next, we construct a set of candidates/voters for the clauses $z_1,\dots,z_m$.
For each clause $z_i$, we put a voter at the reference point $\hat{z}_i$, and call this voter the $z_i$-\textit{gadget}.
See figure~\ref{fig:edge-gadget}.
The total number of voters in the clause gadgets is $m$.
Clause gadgets do not have any candidates.

\noindent \textbf{Piece gadgets.} Finally, we construct a set candidates/voters to connect the variable and clause gadgets.
Consider a piece $s$ of the orthogonal embedding.
Recall that $s$ is a unit-length segment.
Let $s^-$ and $s^+$ be the two endpoints of $s$.
We identify these endpoints as follows:
For a vertical piece $s$ above (resp., below) the $x$-axis, we say $s^-$ is the bottom (resp., top) endpoint of $s$ and $s^+$ is the top (resp., bottom) endpoint of $s$.
For a horizontal piece $s$, $s$ must belong to the horizontal segment of some clause $z_i$.
Suppose $s$ is to the left (resp., right) of the reference point $\hat{z}_i$, then $s^-$ is the left (resp., right) endpoint of $s$ and $s^+$ be the right (resp., left) endpoint of $s$.
For every piece $s$ that is \textit{not} adjacent to any clause reference point, we choose four points near $s^+$ and add candidates/voters on them as follows: 
We place $f+1$ candidates each on a point which is $0.3$ below and $0.3$ to the right of $s^+$, and on a point which is $0.4$ above and $0.3$ to the left of $s^+$, and on a point at $s^+$.
Further, we place a candidate at a point which is $0.3$ above $s^+$.
Lastly, we place one voter at each of these four points.
We call these the candidates/voters of the $s$-\textit{gadget}.
See figure~\ref{fig:piece-gadget}.
Note that pieces adjacent some clause reference points do not have gadgets.
Therefore, the total number of candidates in the piece gadgets is $(3f+4)(N-3m)$, as each clause reference point is adjacent to three pieces, and the number of voters is $4 (N-3m)$.

Combining these three constructed gadgets, our election $E=(V, C)$ has $4N - 11m+3n$ voters and $(3N-9m+2n)f + 4N - 12m + 3n$ candidates. 
We set the committee size $k$ equals to $N+n-3m$.
Clearly, the construction can be done in a polynomial time.
The main intuition behind the construction is the following.

Candidates in the constructed instance can be partitioned into two types:

\begin{itemize}
    \item \emph{Robust candidates} $(C_{rob} \subseteq C)$ is the set of candidates such that for each candidate, there are $f$ other candidates at the exact same location as it.
    Note that for any failing set $J \subseteq C$, at least one candidate is live in each of these locations.
    \item \emph{Covering candidates} $(C_{cov} \subseteq C)$ is the set of candidates such that for each candidate in the set, it is the unique candidate at its location.
    Note that $|C_{cov}| = k$.
\end{itemize}

\begin{figure*}[t!]
    \centering
    \begin{subfigure}[b]{0.4\textwidth}
        \centering
        \includegraphics[height = 3cm, width = 6cm]{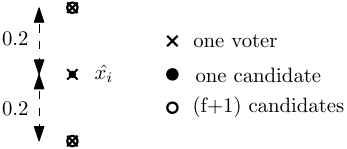}
        \caption{Vertex gadget}
        \label{fig:vertex-gadget}
    \end{subfigure}%
    \begin{subfigure}[b]{0.3\textwidth}
        \centering
        \includegraphics[scale=1]{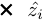}
        \caption{Edge gadget}
        \label{fig:edge-gadget}
    \end{subfigure}%
    \begin{subfigure}[b]{0.3\textwidth}
        \centering
        \includegraphics[scale=1.2]{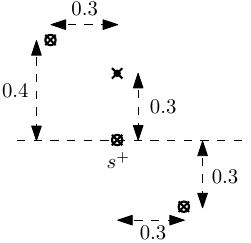}
        \caption{Piece gadget}
        \label{fig:piece-gadget}
    \end{subfigure}
    \caption{Gadgets in our construction.
    Here, a disk and a cross at the same location indicates a voter and a candidate at the same location. Similarly, a circle and a cross at the same location indicates a voter and $(f+1)$ candidates at the same location.}
\end{figure*}

In the constructed election, the following lemma holds.
\begin{lemma}
\label{lem:distance-to-T}
    In the constructed election $E = (V, C)$, we have $\sigma(C_{cov}) \leq 0.75$ where $C_{cov} \subseteq C$ is the set of covering candidates.
\end{lemma}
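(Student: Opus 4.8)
\textit{Proof plan.}
The plan is to verify the bound one voter at a time: for every voter $v\in V$ I will point to a specific covering candidate $c\in C_{cov}$ with $d(v,c)\le 0.75$, which immediately gives $\sigma(C_{cov})=\max_{v\in V}\min_{c\in C_{cov}}d(v,c)\le 0.75$. Recall that $C_{cov}$ consists of exactly one candidate located at each variable reference point $\hat{x}_i$, together with one candidate placed $0.3$ ``above'' $s^+$ for every piece $s$ that carries a gadget (equivalently, every piece not adjacent to a clause reference point). The voters split into three families — those of the variable gadgets, of the clause gadgets, and of the piece gadgets — and I will dispose of them in that order of difficulty.

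First, the two easy families. In the $x_i$-gadget, the voter at $\hat{x}_i$ coincides with the covering candidate there (distance $0$), and each of the two voters placed at distance $0.2$ from $\hat{x}_i$ is within $0.2$ of that same candidate. For a gadget-bearing piece $s$, write $\kappa_s\in C_{cov}$ for its covering candidate at $0.3$ above $s^+$; the four voters of the $s$-gadget sit at $s^+$, at $0.3$ above $s^+$, at the point $0.3$ below and $0.3$ right of $s^+$, and at the point $0.4$ above and $0.3$ left of $s^+$. Relative to $\kappa_s$ these have displacements of norm $0.3$, $0$, $\sqrt{0.3^2+0.6^2}=\sqrt{0.45}<0.75$, and $\sqrt{0.3^2+0.1^2}=\sqrt{0.1}<0.75$ respectively (the relative positions, hence these distances, do not depend on how ``above/right'' is oriented, only on consistency). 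So every voter in a variable or piece gadget is within $0.75$ of $C_{cov}$.

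The remaining case is the clause voter at $\hat{z}_i$, and this is the one place where the structure of the orthogonal embedding is needed. The three pieces incident to $\hat{z}_i$ (the two horizontal pieces of $z_i$'s segment and the vertical piece of $z_i$'s middle literal) carry no gadgets, so the nearest covering candidate reached by going horizontally is at distance $\ge\sqrt{1+0.3^2}>0.75$ and is useless; one must instead travel along the vertical segment of $z_i$'s middle literal. I will argue that the first ``leg'' of this segment (before any bend) has length at least $2$: the next connection point along it has even integer coordinates strictly separated from those of $\hat{z}_i$ — it cannot be another clause reference point, since that would force two distinct literals' vertical segments to share an $x$-coordinate, contradicting the normalization, and the point at distance $1$ from $\hat{z}_i$ has an odd coordinate so is no connection point at all. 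Hence the piece $s'$ that is second-closest to $\hat{z}_i$ along this segment is gadget-bearing, its endpoint $(s')^+$ nearest to $\hat{z}_i$ lies at distance $1$, and — since the ``$0.3$ above $(s')^+$'' perturbation is oriented along the piece toward $\hat{z}_i$, as in Figure~\ref{fig:piece-gadget} — the covering candidate $\kappa_{s'}$ sits at distance $1-0.3=0.7\le 0.75$ from $\hat{z}_i$.

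Putting the three families together, every voter is within $0.75$ of some member of $C_{cov}$, proving $\sigma(C_{cov})\le 0.75$. The only delicate step is the clause case: one must be sure the ``second'' piece along the middle-literal segment genuinely exists and genuinely has a gadget (this is exactly where the even-integer-coordinate normalization of the orthogonal embedding and the fact that clause reference points are \emph{middle} connection points are invoked), and that the covering candidate's perturbation points back toward $\hat{z}_i$; the variable- and piece-gadget families are routine Euclidean distance computations.
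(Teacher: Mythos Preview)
Your proof is correct and follows the same three-case analysis (variable gadgets, piece gadgets, clause gadgets) as the paper's own proof, with the same witnessing covering candidates in each case. The paper is actually terser than you are on the clause case---it simply asserts that the nearest covering candidate to the voter at $\hat{z}_i$ is at distance $0.7$---whereas you supply the underlying reasoning about the second piece along the middle literal's vertical segment; your extra justification is sound (and matches the even-coordinate argument the paper itself uses later, in the reverse direction of Lemma~\ref{lem:ORP-equivalence}).
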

\begin{proof}
For all voters $v \in V$, we will show that $d(v, C_{cov}) \leq 0.75$.

First, we consider the voters in the variable gadget. 
For a variable $x_i$, the candidate at $\hat{x}_i$ belongs to $C_{cov}$.
All three voters in the variable gadget are within distance $0.2$ from $\hat{x}_i$.
Therefore, all the voters in the variable gadget have $d(v, C_{cov}) \leq 0.75$.

Next, we consider the clause-gadgets.
The closest candidate in $T$ for a voter placed at $\hat{z}_i$ is at a distance $0.7$ from it; therefore, all voters in the clause gadgets all have $d(v, C_{cov}) \leq 0.75$.

Finally, we consider the piece gadgets.
For each piece $s$, $C_{cov}$ contains a candidate at a distance $0.3$ above $s^+$.
It can be verified that all four voters in the piece gadget have their closest candidate in $C_{cov}$ within distance $\sqrt{0.45} < 0.7$ (See figure~\ref{fig:piece-gadget}).

Hence, for all voters in the piece gadgets, $d(v,C_{cov}) \leq 0.75$.
This completes the proof of Lemma~\ref{lem:distance-to-T}.
\end{proof}

The election constructed above will be used to show hardness for all three problems.
Let the distance $r = 0.75$.
Recall that $k = N + n - 3m$.
We will now, we describe the decision version of each of our problems along with the construction of additional elements necessary in their input:

\begin{enumerate}
    \item [(i)] \textbf{ORP}: In the input of ORP we additionally need a committee $T \subseteq C$ and a failing set $J \subseteq C$.
    We set $T = J = C_{cov}$.
    
    We ask if there exists a replacement $R \subseteq C\setminus C_{cov}$ such that $\sigma(R) \leq r$?
    
    \item [(ii)] \textbf{FTS}: In the input of FTS we additionally need a committee $T \subseteq C$ and a fault-tolerance parameter $f$.
    We set $T = C_{cov}$ and $f = k$.
    
    We ask if $\sigma_f(T) \leq r$?
    
    \item [(iii)] \textbf{OFTC}: In the input of OFTC we additionally need a committee size $k'$ and a fault-tolerance parameter $f$.
    We set $k' = k$ and $f = k$.
    
    We ask if there exists a $k$-sized committee $T \subseteq C$ with $\sigma_f(T) \leq r$?
\end{enumerate}

To show the equivalence, we will show that the answer to each of the above question is yes if and only if $\phi$ is satisfiable.

The following lemma shows the proof of equivalence for problem (i).

\begin{lemma}
    \label{lem:ORP-equivalence}
    There exists a committee $R \subseteq C\setminus C_{cov}$ with size $k$ such that $\sigma(R) \leq r$ if and only if $\phi$ is satisfiable where $k = |C_{cov}| = N + n - 3m$, $N$ is the total number of pieces, and $m$ (resp., $n$) is the number of clauses (resp., variables) in $\phi$, respectively.
\end{lemma}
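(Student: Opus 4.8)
The plan is to establish both directions of the equivalence by exploiting the structure of the orthogonal embedding and the placement of covering candidates. Recall that the replacement set $R$ must have size at most $k = |C_{cov}|$ and achieve $\sigma(R) \le r = 0.75$; since $T = J = C_{cov}$, the replacement committee is simply $R$ itself, drawn from $C \setminus C_{cov}$, i.e., from the robust candidates only. The first observation I would make is that the robust candidates come in ``groups'' of $f+1$ co-located copies, one such group at each of the locations used in the variable, vertex-above-pieces, and piece gadgets; so effectively $R$ is choosing a \emph{set of locations} (with multiplicity irrelevant beyond picking one copy), and the constraint $|R| \le k$ forces $R$ to hit each voter with a very tight budget. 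The key counting step is to argue that, because every voter (in every gadget) has only a bounded set of robust locations within distance $0.75$, covering all $4N-11m+3n$ voters with only $k = N+n-3m$ centers forces a highly constrained ``propagation'' pattern along each segment of the embedding: within a chain of piece gadgets, choosing the robust center at $s^+$ for one piece determines (by the distance bounds $0.3, 0.4, \sqrt{0.45}$) which robust centers remain affordable downstream, and this toggling is what encodes a truth assignment. I would set up a correspondence: choosing the ``upper'' robust point propagates a ``true'' signal along a variable's vertical segments, and a clause voter $\hat z_i$ (at distance $0.7$ from exactly the robust candidates sitting on its three incident literal-segments at the right parity) is covered iff at least one incident literal-segment carries the satisfying signal.

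For the forward direction ($\phi$ satisfiable $\Rightarrow$ good $R$ exists), I would take a satisfying assignment $\alpha$ and explicitly build $R$: for each variable $x_i$, if $\alpha(x_i)=\text{true}$ pick the robust centers realizing the ``true'' propagation up every vertical segment emanating toward the positive clauses (and the complementary choice toward negative clauses), analogously for false; then count that exactly $N+n-3m$ centers are used — the ``$+n-3m$'' bookkeeping comes from the $n$ variable-location centers minus the $3m$ pieces adjacent to clause reference points which carry no gadget — and verify every voter is within $0.75$ of a chosen center. The verification is a finite gadget-by-gadget distance check using the coordinates fixed in the construction (the $0.2, 0.3, 0.4, 0.7, \sqrt{0.45}$ numbers), together with the fact that each clause voter $\hat z_i$ is covered because $\alpha$ satisfies clause $z_i$.

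For the reverse direction (good $R$ exists $\Rightarrow$ $\phi$ satisfiable), I would argue that any $R$ with $|R|\le k$ and $\sigma(R)\le 0.75$ must, by a pigeonhole/exchange argument, use exactly one center ``per piece-worth'' of voters and exactly one per variable, i.e., the budget is saturated with no slack; this rigidity forces $R$ to pick, on each vertical variable-segment, a \emph{consistent} one of the two propagation patterns (mixing them would either leave a voter uncovered or overspend the budget), which reads off a well-defined truth value $\alpha(x_i)$. Then coverage of the clause voter $\hat z_i$ — possible only via a robust candidate on one of its three literal-segments in the satisfying parity — certifies that clause $z_i$ is satisfied by $\alpha$, so $\alpha$ satisfies $\phi$.

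The main obstacle I anticipate is the reverse-direction rigidity argument: proving that the budget $k$ is so tight that $R$ \emph{cannot} deviate from a clean literal-propagation pattern — in particular ruling out ``wasteful but still feasible'' configurations where $R$ covers a voter from an unexpected neighboring location or uses a center for two purposes. Handling this will require a careful local analysis of each gadget's voter set showing the robust-center choices are essentially forced, plus a global charging argument matching the $N+n-3m$ centers one-to-one against disjoint voter clusters; the $0.3/0.4$ asymmetry in the piece gadget (so that the ``$0.3$-above'' covering candidate, which is \emph{not} in $R$, cannot be substituted by any single robust center covering all four piece voters) is exactly the design feature that makes this work, and the proof must use it explicitly.
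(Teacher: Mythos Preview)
Your proposal is correct and follows essentially the same route as the paper. The paper's forward direction constructs $R$ from a satisfying assignment exactly as you describe (one robust candidate per variable gadget and per piece gadget, chosen according to the truth value of the associated variable), and its reverse direction is factored through two explicit sublemmas that match your outline: a counting argument showing $R$ contains exactly one candidate from each variable and piece gadget (your pigeonhole/saturation step), and a propagation lemma showing that if $R$ selects the ``$0.4$ above, $0.3$ left'' robust candidate in some piece gadget then the associated variable must be set to $\mathsf{true}$ (your rigidity/signal-propagation step), after which clause-voter coverage yields satisfiability.
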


In the next two subsections, we give the proof of Lemma~\ref{lem:ORP-equivalence}.

\subsubsection[]{Forward direction of Lemma~\ref{lem:ORP-equivalence}}
\label{subsec:ORP-equivalence}
In this section, we will show that if $\phi$ is satisfiable, then there exists a $k$-sized committee $R \subseteq C\setminus C_{cov}$ such that $\sigma(R) \leq r$ (recall that $r= 0.75$).

Suppose $\phi$ is satisfiable. 
Let $\pi: X \rightarrow \{\mathsf{true},\mathsf{false}\}$ be an assignment which makes $\phi$ true.
We construct a $k$-sized committee $R \subseteq C\backslash C_{cov}$ with $\sigma(R) \leq r$ using $\phi$.
We include one candidate from the every variable gadget and every piece gadget to $R$ as follows:

\begin{itemize}
    \item \textit{Replacement candidates from variable gadgets:}
    Consider a variable $x_i$.
    By our construction, the $x_i$-gadget contains $2f+3$ candidates which have the same $x$-coordinates as $\hat{x}_i$.
    If $\pi(x_i) = \mathsf{true}$ (resp., $\pi(x_i) = \mathsf{false}$), we include in $R$ one of the topmost (resp., bottommost) candidates in the $x_i$-gadget.
    
    \item \textit{Replacement candidates from piece gadgets:}
    Consider a piece $s$ not adjacent to a clause reference point (recall that pieces adjacent to some clause reference point do not have gadgets on them).
    We begin by defining a variable as an \emph{associated} variable of $s$ in the same way as described in \cite{sonar2022multiwinner}:
    When $s$ is vertical, the associated variable of $s$ is just the variable whose vertical segment contains $s$.
    For when $s$ is horizontal, then $s$ must belong to the horizontal segment of some clause $z_j$.
    Then, if $s$ to the left of the reference point $\hat{z_j}$ then the associated variable of $s$ is the variable whose vertical segment intersects the left endpoint of the horizontal segment of $z_j$, and vice versa when $s$ to the right of the reference point $\hat{z_j}$.
    
    Let $x_i$ be the associated variable of $s$. Then,
    \begin{enumerate}
        \item [(i)] If $\pi(x_i) = \mathsf{true}$: We include in $R$ a candidate in the $s$-gadget that is $0.4$ above and $0.3$ to the left (resp., $0.4$ below and $0.3$ to the right) of $s^+$ if $s$ is above (resp., below) the $x$-axis.
        \item [(ii)] If $\pi(x_i) = \mathsf{false}$: We include in $R$ a candidate in the $s$-gadget that is $0.3$ below and $0.3$ to the right (resp., $0.3$ above and $0.3$ to the left) of $s^+$ if $s$ is below (resp., above) the $x$-axis.
    \end{enumerate}
\end{itemize}

This finishes the construction of the committee $R$.
Recall that the total number of variable and the piece gadgets is $N + n - 3m = k$. 
Therefore, $|R| = k$.
The following lemma completes the ``if'' part of Lemma~\ref{lem:ORP-equivalence}.

\begin{lemma}
\label{lem:replacement-score-bound}
$\sigma(R) \leq r$.
\end{lemma}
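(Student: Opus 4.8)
The plan is to verify that every voter $v \in V$ has a candidate in $R$ within distance $r = 0.75$, by checking each of the three gadget types in turn. The committee $R$ consists of exactly one ``replacement'' candidate chosen from each variable gadget and each piece gadget, selected according to the satisfying assignment $\pi$. Since the gadgets are spatially localized, the key observation is that a voter in one gadget is served either by the replacement candidate of its own gadget, or (in the case of clause-gadget voters, which have no candidates of their own) by the replacement candidate of an adjacent piece gadget. First I would handle the variable gadgets: for variable $x_i$, we picked one of the topmost or bottommost candidates of the $x_i$-gadget (distance $0.2$ from $\hat{x}_i$ along a vertical piece), and I would check that all three voters of the $x_i$-gadget (at $\hat{x}_i$ and at the two points at distance $0.2$ from $\hat{x}_i$) lie within $0.75$: the farthest such voter is at distance $0.4$ from the chosen candidate, which is fine.

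Next I would handle the piece gadgets. For a piece $s$ with chosen candidate depending on the truth value of its associated variable, I would verify that all four voters of the $s$-gadget (at $s^+$, at $0.3$ above $s^+$, at $0.3$ below and $0.3$ right of $s^+$, and at $0.4$ above and $0.3$ left of $s^+$) are within $0.75$ of whichever of the two admissible replacement candidates was chosen. A direct distance computation shows each of these pairwise distances is at most $\sqrt{0.45} < 0.7 < 0.75$, mirroring the computation already done in Lemma~\ref{lem:distance-to-T}. The one subtlety is that the chosen candidate depends on the sign (above/below the $x$-axis) of $s$ and on $\pi(x_i)$, so I would argue that in every case the chosen point is one of the two non-``$0.3$ above $s^+$'' candidates of the gadget, and that both of those candidates are within $\sqrt{0.45}$ of all four voters — so the choice is irrelevant for covering the voters of $s$'s own gadget.

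The main obstacle — and the place where satisfiability of $\phi$ is genuinely used — is the clause gadgets: the voter at a clause reference point $\hat{z}_j$ has no candidate of its own, and the three pieces adjacent to $\hat{z}_j$ have no gadgets, so $\hat{z}_j$ must be served by a replacement candidate from one of the \emph{second} pieces along the three literal segments feeding into $z_j$. Here I would argue that because $\pi$ satisfies clause $z_j$, at least one literal of $z_j$ is true, and for the corresponding literal segment the ``true'' choice of replacement candidate on its piece adjacent to the reference-point-adjacent piece is positioned so as to land within $0.75$ of $\hat{z}_j$ — whereas the ``false'' choice would be too far. This requires tracking the geometry of the orthogonal embedding near $\hat{z}_j$: the reference point is the middle connection point of the horizontal segment, the adjacent pieces are unit-length, and the ``true'' replacement candidate on the next piece is offset by $0.4$ (vertically) and $0.3$ (horizontally) from that piece's $s^+$ endpoint in just the direction that brings it close to $\hat{z}_j$. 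I expect the computation to yield a distance of exactly $\sqrt{(1-0.3)^2 + (0.4)^2} = \sqrt{0.49+0.16}=\sqrt{0.65} \le 0.75$ or a similarly tight bound; pinning down the exact coordinates and confirming the inequality (and confirming the ``false'' choice fails it, to make the reverse direction work) is the technical heart.

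Finally I would collect the three cases: every voter in a variable, piece, or clause gadget is within $r = 0.75$ of some candidate in $R$, hence $\sigma(R) = \max_{v \in V} \sigma(v, R) \le r$, which is exactly the claim. Since $|R| = k$ and $R \subseteq C \setminus C_{cov}$ both hold by the construction of $R$ (one candidate per variable/piece gadget, none equal to a covering candidate), this establishes Lemma~\ref{lem:replacement-score-bound} and thus the forward direction of Lemma~\ref{lem:ORP-equivalence}.
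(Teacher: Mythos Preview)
Your overall structure is right, and you correctly identify that the clause-gadget case is where satisfiability is used. But there is a genuine gap in your treatment of the piece gadgets.

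You claim that the replacement candidate chosen from a piece gadget $s$ covers all four voters of that same gadget, and that ``both of those candidates are within $\sqrt{0.45}$ of all four voters.'' This is false. The candidate at $0.4$ above and $0.3$ left of $s^+$ is at distance $\sqrt{(0.6)^2+(0.7)^2}=\sqrt{0.85}>0.75$ from the voter $v_4$ at $0.3$ below and $0.3$ right of $s^+$. (Your appeal to Lemma~\ref{lem:distance-to-T} is misplaced: that lemma uses the covering candidate at $0.3$ above $s^+$, which \emph{is} central enough to cover all four voters, but that candidate is in $C_{cov}$ and hence not available to $R$.)

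The paper's proof handles this by showing that $v_4$ is covered not by its own gadget's replacement candidate but by the replacement candidate of the \emph{adjacent} piece gadget $\hat{s}$ (the one with $\hat{s}^+ = s^-$), or by the variable gadget when $s$ is adjacent to a variable reference point. This works precisely because all pieces sharing the same associated variable $x_i$ make the \emph{same} choice of replacement candidate, dictated by $\pi(x_i)$; the coverage therefore propagates consistently along each chain of pieces. You need this chain/propagation argument, with a separate case for pieces adjacent to a variable reference point, and you have omitted it entirely.
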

\begin{proof}
For all voters $v \in V$ in the constructed instance, we will show that $d(v, R) \leq r$.

First, we consider the voters in the variable gadget. 
For a variable $x_i$, either a candidate $0.2$ above or $0.2$ below $\hat{x}_i$ belongs to $R$.
Hence, for all three voters, $d(v, R) \leq 0.6$.
Recall that $r = 0.75$.
Therefore, all the voters in the variable gadget have $d(v, T) \leq r$.

Next, we consider the clause-gadgets.
For a clause $z_i$, consider the voter $v$ placed at $\hat{z}_i$.
The closest candidate in $R$ for this voter, is the candidate placed at $0.4$ above and $0.3$ to the left of $s^+$ from a clause gadget with $d(\hat{z}_i, s^+) = 1$.
Hence, $d(v, R) = \sqrt{0.45} < r$.

Finally, we consider the piece gadgets.
We consider two cases: piece gadget not adjacent to a variable reference point and piece gadget adjacent to a variable reference point.
Let $x_{i}$ be the associated variable with the piece gadget $s$, and $\pi(x_i) = \mathsf{true}$ (the case when $\pi(x_i) = \mathsf{false}$ is symmetric).
Let $s$ be a piece gadget not adjacent to any variable reference point.
Suppose $s$ is above $x$-axis, and $\hat{s}$ is the gadget below $s$ (the case when $s$ is below the $x$-axis is symmetric; hence, we leave the proof for that to the reader).
Recall that the piece gadgets contains four voters:
Let $v_1, v_2, v_3$, and $v_4$ be the voters placed at distances $0.4$ above and $0.3$ to the left of $s^+$, $0.3$ above $s^+$, at $s^+$, and $0.3$ below and $0.3$ to the right of $s^+$, respectively.
We know that a candidate placed at the location of voter $v_1$ (say $c_1$) belongs to $R$.
Notice $v_1, v_2$ and $v_3$ have a distance at most $0.5$ from this voter but $v_4$ has a distance $\sqrt{0.85} > r$.
But consider the gadget corresponding to $\hat{s}$.
Here, we know $R$ contains a candidate (say $c_1'$) placed at $0.4$ above and $0.3$ to the left of $\hat{s}^+$.
It can be verified that $d(v_4, c_1') = \sqrt{0.45} < r$.
Therefore, $d(v_4, R) \leq r$.
We now consider the case when $s$ is adjacent to a variable reference points (say $\hat{x}_i$).
We know $s$ has four voters $v_1, v_2, v_3$, and $v_4$ placed at distances $0.4$ above and $0.3$ to the left of $s^+$, $0.3$ above $s^+$, at $s^+$, and $0.3$ below and $0.3$ to the right of $s^+$, respectively.
We know that a candidate placed at the location of voter $v_1$ (say $c_1$) belongs to $R$, and this candidate is at a distance at most $0.5$ from $v_1, v_2$, and $v_3$.
Recall that $\pi(x_i) = \mathsf{true}$, $R$ contains a candidate at a distance $0.2$ above $\hat{x_i}$.
Notice that this candidate is at a distance $\sqrt{0.34}$ from $v_4$.
Therefore, $d(v_4, R) < r$.
This completes the proof of Lemma~\ref{lem:replacement-score-bound}.
\end{proof}

\subsubsection{Reverse direction of Lemma~\ref{lem:ORP-equivalence}}

Suppose $R \subseteq C\setminus C_{cov}$ is a $k$-sized committee with $\sigma(R) \leq r$.
We will show how to recover a satisfying assignment $\pi: X \rightarrow \{\mathsf{true},\mathsf{false}\}$ using $R$. 
The structure of our proof is similar to the reverse direction argument in \cite{sonar2022multiwinner}.
First, we observe the following property of $R$.

\begin{lemma} \label{lem-onecand}
$R$ contains exactly one candidate from every variable and piece gadget. 
\end{lemma}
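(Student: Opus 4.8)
The plan is to prove Lemma~\ref{lem-onecand} by a counting argument combined with the observation that the constructed election forces a precise ``budget'' on how candidates of $R$ must be distributed among the gadgets. First I would recall that $|R| = k = N + n - 3m$, and that this is exactly the number of variable gadgets plus the number of piece gadgets (each clause reference point is adjacent to three pieces which carry no gadget, so there are $n$ variable gadgets and $N - 3m$ piece gadgets). So if I can show that \emph{each} variable gadget and \emph{each} piece gadget must contain \emph{at least} one candidate of $R$, then by the pigeonhole principle each must contain \emph{exactly} one, and the lemma follows.

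The core of the argument is therefore a collection of local ``covering'' claims: for each variable gadget, there is a voter $v$ such that the only candidates of $C \setminus C_{cov}$ within distance $r = 0.75$ of $v$ lie in that variable gadget; similarly for each piece gadget. For a variable gadget, the natural witness is the voter at $\hat{x}_i$: the nearby robust candidates at distance $0.2$ (above and below $\hat{x}_i$) are the only non-covering candidates within $0.75$, since the candidate exactly at $\hat{x}_i$ is a covering candidate (excluded from $R$), and all other gadgets are geometrically far (the orthogonal embedding has integer coordinates, so distinct reference/connection points are at distance $\ge 1$, and with the small $0.2$--$0.4$ perturbations no candidate of a different gadget comes within $0.75$). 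For a piece gadget $s$, the analogous witness is one of the four voters of the $s$-gadget — most naturally the voter at $s^+$ or the voter $0.3$ above $s^+$ — for which one checks that the only non-covering candidates within $0.75$ are the three robust candidates of the $s$-gadget (the fourth candidate $0.3$ above $s^+$ being a covering candidate). I would verify each of these inclusions by the same elementary distance bookkeeping already used in Lemmas~\ref{lem:distance-to-T} and \ref{lem:replacement-score-bound}: the relevant robust candidates sit within $\sqrt{0.5} < 0.75$ of these voters, while any candidate belonging to an adjacent gadget is pushed past $0.75$ by the unit spacing of the embedding.

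Having established that every variable and every piece gadget contributes at least one candidate to $R$, the counting argument closes: $|R| \ge (\text{number of variable gadgets}) + (\text{number of piece gadgets}) = n + (N - 3m) = k = |R|$, forcing equality in every gadget. In particular no gadget can contain two candidates of $R$, which is precisely the statement of Lemma~\ref{lem-onecand}.

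I expect the main obstacle to be the bookkeeping in the piece-gadget case, specifically ruling out the possibility that a single candidate from a \emph{neighboring} gadget (an adjacent piece, or a variable gadget adjacent to $s$, or the reference point of the clause a horizontal piece belongs to) simultaneously covers all the ``private'' voters I am relying on. The cleanest way around this is to pick the witness voter carefully so that its $0.75$-neighborhood among non-covering candidates is contained entirely in the $s$-gadget; the voter at $s^+$ is the safest choice, since $s^+$ is a connection point with integer coordinates and every candidate not in the $s$-gadget sits either at another connection point (distance $\ge 1$) or within a $0.4$-perturbation of one, hence at distance $> 0.75$ from $s^+$ once one accounts for the geometry. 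One should also double-check the boundary pieces — those adjacent to a clause reference point have no gadget, so they do not appear in the count, and those adjacent to a variable reference point must be handled so their witness voter is not accidentally covered by the variable gadget's robust candidates; here the explicit $0.3$--$0.4$ offsets in the construction are exactly calibrated to keep those distances above $0.75$, so the verification goes through.
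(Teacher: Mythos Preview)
Your proposal is correct and follows essentially the same approach as the paper: both prove the lemma by showing that each variable gadget and each piece gadget must contain at least one candidate of $R$ (using the voter at $\hat{x}_i$ and the voter at $s^+$, respectively, as witnesses whose only nearby non-covering candidates lie in that gadget), and then close with the same counting argument since the total number of gadgets equals $k = N + n - 3m$. Your discussion of the bookkeeping for adjacent gadgets and boundary pieces is more explicit than the paper's, but the underlying argument is identical.
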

\begin{proof}
We begin with the variable gadgets.
Consider an $x_i$-gadget corresponding to the variable $x_i$. 
Recall that we place three voters in the $x_i$-gadget: One voter at $\hat{x_i}$ (say $v_1$), and one voter each at a distance $0.2$ above and $0.2$ below of $\hat{x_i}$ .
Observe that for $v_1$, its distance to any candidate from the adjacent piece gadgets is at least $\sqrt{(0.7)^2 + (0.3)^2} = \sqrt{0.58}$  which is strictly greater than $r$.
Hence, $R$ must include at least one candidate from the $x_i$-gadget to ensure $d(v_1, R) \leq r$.
We now consider the piece gadgets.
Let $v_1$ be the voter placed at $s^+$.
The nearest candidate to $v_1$ from the adjacent piece gadgets is at a distance at least $\sqrt{0.58}$ which is strictly greater than $r$.
Therefore, $R$ must contain at least one candidate from each piece gadget to ensure $d(v_1, R) \leq r$.

Finally, recall that the committee size is $k = N+n-3m$ and the number of variable (piece) gadgets is $n (N-3m)$, respectively.
Therefore, by a simple counting argument, we can conclude that $T$ must contain exactly one candidate from each variable gadget and each piece gadget.
This completes the proof of Lemma~\ref{lem-onecand}.
\end{proof}

We will now use $R$ to recover a satisfying assignment $\pi$ for $\phi$.
For an arbitrary variable $x_i$, using Lemma~\ref{lem-onecand}, we know $R$ contains exactly one candidate (say $c_i$) from the $x_i$-gadget.
We set $\pi(x_i)$ as follows:
\begin{itemize}
    \item If $c_i$ is above the $x$-axis, we set $\pi(x_i) = \mathsf{true}$.
    \item If $c_i$ is below the $x$-axis, we set $\pi(x_i) = \mathsf{false}$.
\end{itemize}
To complete the proof, we need to show that $\pi$ is a satisfying assignment of $\phi$.
It is enough to show that for each clause, at least one of its variables is set to $\mathsf{true}$.
Since the argument for positive and negative clauses is similar, we will only that for each positive clause, at least one its variables is set to $\mathsf{true}$.
We begin by proving the following important structural property of the committee $R$.

\begin{lemma} \label{lem-true}
For a piece $s$ above the $x$-axis that is not adjacent to any clause reference point, suppose $x_i$ is the associated variable of $s$.
If $R$ contains the candidate in the $s$-gadget which is $0.4$ above and $0.3$ to the left of $s^+$, then $\pi(x_i) = \mathsf{true}$.
\end{lemma}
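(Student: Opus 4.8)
The plan is to propagate $R$'s ``true-type'' choice backward, one gadget at a time, along the chain of pieces joining the clause reference point to the variable reference point $\hat{x}_i$, until it reaches the $x_i$-gadget and forces the value of $\pi(x_i)$. Two facts proved earlier drive the argument: by Lemma~\ref{lem-onecand}, $R$ contains exactly one candidate from each piece gadget and from each variable gadget; and since $R \subseteq C \setminus C_{cov}$, that candidate is always \emph{robust}. Concretely, in the gadget of a piece $s$ the candidate of $R$ is either the one placed at $s^+$, the \emph{true-type} candidate ($0.4$ above and $0.3$ to the left of $s^+$), or the \emph{false-type} candidate ($0.3$ below and $0.3$ to the right of $s^+$); in the $x_j$-gadget it is the ``upper'' candidate ($0.2$ above $\hat{x}_j$) or the ``lower'' candidate ($0.2$ below $\hat{x}_j$), and by the definition of $\pi$ it is the upper one precisely when $\pi(x_j) = \mathsf{true}$.

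The heart of the proof is a single propagation step: if $s$ is a piece above the $x$-axis, not adjacent to a clause reference point, whose true-type candidate lies in $R$, then either (a) $s$ is adjacent to $\hat{x}_i$, in which case $R$ contains the upper candidate of the $x_i$-gadget, or (b) $R$ contains the true-type candidate of the piece $s'$ adjacent to $s$ on the $\hat{x}_i$-side of the chain (which still has associated variable $x_i$). To prove this I would examine the voter $v$ of the $s$-gadget located at its false-type position. A direct computation gives $d(v, c) = \sqrt{0.85} > r$, where $c$ is the true-type candidate of $s$, so $v$ is not served by $R$'s candidate in the $s$-gadget; since $\sigma(R) \le r = 0.75$, some other candidate of $R$ must lie within $r$ of $v$. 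Enumerating, from the local layout of the orthogonal embedding around $s^+$, all candidates of $C$ within distance $r$ of $v$, the only possibilities besides the false-type candidate and the candidate at $s^+$ of the $s$-gadget (neither in $R$, since $R$'s choice there is the true-type one) and its covering candidate (not in $R$, as $R \cap C_{cov} = \emptyset$) are the upper candidate of the $x_i$-gadget in case (a), at distance $\sqrt{0.34}$, and the true-type candidate of $s'$ in case (b), at distance $\sqrt{0.45}$ (both below $r$); every other gadget point, including points reached across a turn of the chain, lies farther than $r$ from $v$. Hence the candidate of $R$ serving $v$ is exactly this one, and by Lemma~\ref{lem-onecand} it indeed belongs to $R$.

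To conclude, I would iterate this step starting from the given $s$: whenever case (b) applies it produces a piece $s'$ one step closer to $\hat{x}_i$ on the chain which again satisfies the hypotheses of the step, so after finitely many iterations case (a) must occur, and $R$ contains the upper candidate of the $x_i$-gadget. By the definition of $\pi$, this gives $\pi(x_i) = \mathsf{true}$.

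The main obstacle is the geometric case analysis folded into the propagation step — certifying that no candidate of any other gadget, in particular at the places where the chain turns from a vertical piece into a horizontal piece and at the junction with the $x_i$-gadget, comes within distance $r = 0.75$ of the false-type voter $v$. Each individual check is a routine comparison of a squared distance, built from the fixed offsets $\{0.2, 0.3, 0.4\}$ and the unit piece length, against $0.5625$; the genuine work is to enumerate all the turn and branch configurations allowed by properties~(i)--(iii) of the orthogonal embedding and by the definition of the associated variable, mirroring the corresponding argument in~\cite{sonar2022multiwinner}.
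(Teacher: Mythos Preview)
Your proposal is correct and follows essentially the same approach as the paper: both argue by downward propagation along the chain of pieces toward $\hat{x}_i$, using the false-type voter $v_4$ of the current piece gadget (at distance $\sqrt{0.85}>r$ from the selected true-type candidate) to force $R$'s choice in the adjacent gadget to be the true-type candidate (distance $\sqrt{0.45}$) or, at the final step, the upper candidate of the $x_i$-gadget (distance $\sqrt{0.34}$). Your writeup is in fact a bit more explicit than the paper's about the residual geometric checks (the candidate at $s^+$, and the turn configurations), which is appropriate since those are exactly the places where the enumeration could go wrong.
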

\begin{proof}
Let $s$ be piece as described in the above lemma with the associated variable $x_i$.
We will show that if $R$ contains the candidate in the $s$-gadget which is $0.4$ above and $0.3$ to the left of $s^+$, then $R$ contains a candidate $0.2$ above $\hat{x_i}$ in the $x_i$-gadget.
The choice of candidate in $R$ from the $s$-gadget to the $x_i$-gadget percolates as follows:
\begin{itemize}
    \item When $s$ is not adjacent to a variable reference point $\hat{x_i}$: Let $v_1, v_2, v_3$, and $v_4$ be the voters placed at distances $0.4$ above and $0.3$ left of $s^+$, $0.3$ above $s^+$, at $s^+$, and $0.3$ below and $0.3$ to the right of $s^+$, respectively.
    We know that the candidate placed at $s^+$ belongs to $J$.
    From the set of $f+1$ candidates placed at the locations of $v_1, v_2$ and $v_4$, we denote a candidate by $c_1, c_2$, and $c_3$, respectively.
    Moreover, let $s'$ be the piece below (resp., to the left of) $s$ when $s$ is a vertical (resp., horizontal) piece.
    We denote the corresponding candidate in $s'$ by $c_1', c_2'$, and $c_3'$.
    Assume that $c_1 \in R$.
    We observe that $d(v_4, c_1) = \sqrt{0.85} > r$ but $d(v_4, c_1') = \sqrt{0.45} < r$. 
    Moreover, $c_1'$ is the only alive candidate from $s'$-gadget within distance $r$ from $v_4$.
    Using Lemma~2, we know $R$ only includes $c_1$ from $s$.
    Hence, to satisfy $d(v_4, R) \leq r$, $R$ must include the candidate $c_1'$.
    Observe that we can repeat the above argument for all pieces below (resp., to the left of) $s$, which implies that for all pieces $s_i$ below (resp., to the left of) $s$, $R$ includes the candidate in $s_i$-gadget placed $0.4$ above and $0.3$ to the left of $s_i^+$.
    
    \item When $s$ is adjacent to a variable reference point $\hat{x_i}$: Let $v_1, v_2, v_3$, and $v_4$ be the voters placed at distances $0.4$ above and $0.3$ left of $s^+$, $0.3$ above $s^+$, at $s^+$, and $0.3$ below and $0.3$ to the right of $s^+$, respectively.
    Moreover, let $c_1, c_2$, and $c_3$ denote an arbitrary candidate placed at the locations of voters $v_1, v_2$, and $v_4$, respectively.
    We know that for voter $v_4$, the set of candidates within distance $r$ is $\{c_2, c_3, c_4\}$ where $c_4$ is the candidate placed $0.2$ above $\hat{x}_i$.
    Using Lemma~\ref{lem-onecand}, we know $R$ only includes ${c}_1$ from the piece $\hat{s}$.
    Hence, to ensure $d(v_4, R) \leq r$, $R$ must include candidate $c_4$.
    Recall that for a variable $x_j$ for $j \in [n]$, if $R$ includes a candidate above the reference point $\hat{x}_j$, we set $x_j = \mathsf{true}$.
    Since, $c_4 \in R$ and $c_4$ lies above $\hat{x}_i$, we set $x_i = \mathsf{true}$.
    
\end{itemize}
This completes the proof of Lemma~\ref{lem-true}.
\end{proof}

Using Lemma~\ref{lem-onecand} and Lemma~\ref{lem-true} we are now ready to show that the constructed assignment $\pi$ satisfies $\phi$.
Since the argument for positive and negative clauses is similar, we will only show for the positive clauses.
Our argument is similar to the one in \cite{sonar2022multiwinner} but we include it here for completeness.

Consider a positive clause $z_i$.
We will show that at least one variable of $z_i$ is set to $\mathsf{true}$ by $\pi$.
We denote the pieces adjacent to the reference point $\hat{z_i}$ by $s_1, s_2, s_3$.
Without loss of generality, let $s_1$ be to the left of $\hat{z}_i$, $s_2$ be to the right of $\hat{z}_i$, and $s_3$ be below $\hat{z}_i$.
Notice that $\hat{z}_i = s_1^+ = s_2^+ = s_3^+$.
Recall that $\hat{z_i}$ is a connection point.
Since all connection points have even coordinates and $s_1^-,s_2^-,s_3^-$ are at a unit distance from $\hat{z_i}$, $s_1^-,s_2^-,s_3^-$ are not connection points.
Hence, let $s_4,s_5,s_6$ be the pieces such that the right endpoint of $s_4$ is $s_1^-$, the left endpoint of $s_5$ is $s_2^-$, and the top endpoint of $s_6$ is $s_3^-$.
Therefore, $s_1^- = s_4^+$, $s_2^- = s_5^+$, and $s_3^- = s_6^+$.
Let $c_4$ be a candidate in $s_4$-gadget such that $c_4$ is $0.4$ above and $0.3$ to the left of $s_4^+$.
Moreover, let the candidates $c_5$ and $c_6$ defined in the similar way such that $c_5$ belongs to the $s_5$-gadget and $c_6$ belongs to the $s_6$-gadget.
For the voter at the reference point $\hat{z_i}$, only the candidates $c_4, c_5, c_6$ are within distance $r$.
This is because all pieces except $s_1,\dots,s_6$ have distances at least $2$ from $\hat{z}_i$.
Since $\sigma(R) \leq r$, $R$ must contain at least one of these three candidates.
Therefore, using Lemma~\ref{lem-true}, we conclude that at least one of the associated variables of $s_4,s_5,s_6$ is true.
Since these are exactly the three variables in clause $z_i$; hence, $z_i$ is true under the assignment $\pi$.
This completes the ``only if'' part of our proof.

The argument above completes the proof of Lemma~\ref{lem:ORP-equivalence} and completes the argument of equivalence for our decision problem $(i)$.

\paragraph*{Argument of equivalence for the decision problem $(ii)$:}
Recall that for the FTS decision problem instance stated above, the input committee is $T = C_{cov}$ and the fault-tolerance parameter is $f = k$.
Notice that $T$ contains one candidate from each vertex and each piece gadget.
Suppose a subset $J \subset C$ with $|J| \leq f$ fails.
Consider the committee $T' = T \setminus J$.
All voters in the vertex gadgets and the piece gadgets which have a non-empty intersection with $T'$ have a committee member within distance $r$ (i.e., suppose $V' \subseteq V$ is the subset of all such voters then $\sigma(V', T') \leq r$).
For the voters in $V \setminus V'$, we build a committee $R$ using the candidates in vertex and piece gadgets with have an empty intersection with $T'$ in the same way as we constructed a replacement committee in the forward direction of proof of Lemma~\ref{lem:ORP-equivalence} (Section~\ref{subsec:ORP-equivalence}) (to replace the candidates from set $T \cap J$).
Notice that it is always possible to construct such a replacement $R$ because all candidates in $R$ are robust candidates (meaning that there are $f+1$ identical copies of each of these candidates) and the failure set $J$ has size at most $f$.

Since the total number of vertex and edge gadgets is $k$, the new committee $(T\setminus J) \cup R$ has size $k$.
Using the same argument as in Section~\ref{subsec:ORP-equivalence}, we can show that $\sigma((T\setminus J) \cup R) \leq r$.
This completes the argument in the forward direction (that is, if $\phi$ is satisfiable then $\sigma_f(T) \leq r$).

To show the reverse direction, suppose $\sigma_f(T) \leq r$.
Therefore, when failing set $J = C_{cov}$, there exists a $k$-sized replacement $R \subseteq C_{rob}$ such that $\sigma(R) \leq r$.
Hence, using Lemma~\ref{lem:ORP-equivalence}, we can conclude that $\phi$ is satisfiable.

\paragraph*{Argument of equivalence for the decision problem $(iii)$:}
Recall that for the OFTC decision problem instance stated above, the input is the committee size $k = N + n - 3m$ and the fault-tolerance parameter $f = k$.

The argument for the forward direction is trivial, as we know when $\phi$ is satisfiable, the $k$-sized committee $T = C_{cov}$ has $\sigma_f(T) \leq r$.
In the reverse direction, suppose $T \subseteq C$ is a $k$-sized committee with $\sigma_f(T) \leq r$.
Therefore, in this case, for a size $f$ failing set $J = C_{cov}$, there exists a replacement $R \subseteq C\setminus C_{cov}$ such that $|(T\setminus J) \cup R| \leq k$ and $\sigma((T\setminus J) \cup R) \leq r$. 
Since $(T \setminus J) \cup R = R$ and $R \subseteq C_{rob}$; using Lemma~\ref{lem:ORP-equivalence}, we can conclude that $\phi$ is satisfiable.

\subsubsection[bounded f proofs]{Hardness when $f$ is bounded}
\label{subsec:bounded-f}

\paragraph*{ORP and FTS}
Consider an election $E = (V,C)$, a committee $T \subseteq C$ of size $k$ and a fault-tolerance parameter $f$ which is a constant.
It is easy to see that we can solve ORP optimally in time $nm^{\mathcal{O}(f)}k$ by trying all possible replacement sets and choosing the best one.
Similarly, by trying all possible failing sets of size at most $f$ (note that there are $m^{\mathcal{O}(f)}$ such sets) and computing optimal replacement for each set, we can compute $\sigma_f(T)$ in time $nm^{O(f)}k$.

\paragraph*{OFTC}
We will now show that for any integer $f \geq 0$, OFTC is NP-hard using a simple reduction from the $k$-supplier problem \cite{nagarajan2013euclidean}:
Fix $f \geq 0$.
Let $(\mathcal{C}, F)$ along with an integer $k$ be a $k$-supplier instance where $\mathcal{C}$ is the set of customers and $F$ is a set of facilities embedded in $\mathbb{R}^2$.
In the decision version of $k$-supplier, given a real number $r$, we ask if there exists a size $k$ set $F' \subseteq F$ such that $\sigma(\mathcal{C}, F') \leq r$.

We construct an election $E = (V, C)$ in $\mathbb{R}^2$ as follows. 
We set $V = \mathcal{C}$.
Further, we construct the set of candidates $C$ by adding $f+1$ identical candidates on each point in $F$.
We set the committee size to $k$.
It is easy to see that there exists a $k$-sized committee $T \subset C$ with $\sigma_f(T) \leq r$ if and only if there exists a $k$-sized subset $F' \subseteq F$ such that $\sigma(\mathcal{C}, F') \leq r$ where $r$ is a real number.

\subsection{Optimal Replacement Problem}
\label{subsec:ORP-2d}
A simple greedy algorithm achieves a $3$-approximation for the Optimal Replacement Problem
in any fixed dimension $d$ as well as in any metric space.

\begin{lemma} \label{lem:ORP-2d-3-approx}
    We can find a $3$-approximation for ORP in time $O(k(nk+m))$.
\end{lemma}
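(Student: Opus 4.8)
The plan is to give a greedy algorithm that decides, for a candidate radius $r$, whether a replacement of size at most $|T \cap J|$ achieving score at most $3r$ exists whenever an optimal replacement achieving score $r$ exists, and then to wrap this in a search over the $O(nm)$ candidate distances. Set $q = |T \cap J|$, let $U = \{v \in V : \sigma(v, T \setminus J) > r\}$ be the set of voters \emph{not} already covered within distance $r$ by the surviving committee members $T \setminus J$, and let $C' = C \setminus J$ be the available candidates. A feasible replacement of score $r$ must place $q$ candidates from $C'$ so that every voter in $U$ is within distance $r$ of one of them; this is exactly a $k$-supplier-type covering condition on $U$. The greedy step is the standard Gonzalez/$k$-center-style procedure: repeatedly pick any uncovered voter $v \in U$, find some candidate $c \in C'$ with $d(v,c) \le r$ (if none exists, reject this $r$), add $c$ to the replacement set $R$, and remove from $U$ all voters within distance $2r$ of $c$; iterate.

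The key steps, in order, are: (1) Argue the covering reformulation — $\sigma((T\setminus J)\cup R) \le 3r$ iff every voter in $U$ is within $3r$ of some candidate in $R$ (voters outside $U$ are automatically within $r < 3r$ of $T \setminus J$), so it suffices to $3r$-cover $U$ using $C'$. (2) Correctness of the radius blow-up: if $c$ is chosen because $d(v,c) \le r$, then every voter $w$ we remove satisfies $d(w,c) \le 2r$, hence is $2r$-covered, well within $3r$; so the returned $R$ always yields score $\le 3r$ if it succeeds. (3) Bounding $|R|$: the voters $v$ that trigger the picks are pairwise at distance $> 2r$ (each is uncovered when picked, so farther than $2r$ from every earlier pick's center, and by triangle inequality farther than $2r$ from every earlier trigger voter... more carefully: if $v, v'$ are two trigger voters with $v'$ later, then $v'$ survived the removal around $v$'s center $c$, so $d(v', c) > 2r$, and since an optimal solution $R^{\mathrm{opt}}$ of score $r$ assigns each of $v, v'$ to a center within distance $r$, these two optimal centers are distinct; hence $|R| \le |R^{\mathrm{opt}}| \le q$). (4) Running time: maintaining $U$ and scanning $C'$ naively gives $O(km)$ per outer iteration after an $O(nk)$ preprocessing to compute $\sigma(v, T\setminus J)$ for all $v$; there are at most $q \le k$ outer iterations, so $O(k(nk+m))$ for a fixed $r$. (5) The final search: since the optimal score equals some $d(v,c)$, we either iterate over all $O(nm)$ such values (or binary-search over the sorted list), run the decision procedure, and return the smallest feasible $r$; the $3$-approximation guarantee follows because on the true optimal value $r^\star$ the procedure succeeds, and it never claims feasibility for an $r$ admitting no score-$3r$ solution. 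With binary search this is $O(k(nk+m)\log(nm))$; to match the stated $O(k(nk+m))$ one observes the guarantee only needs correctness at the single value $r = \mathsf{opt}$, so one scans candidate distances in increasing order and stops at the first success, charging the work to that call.

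The main obstacle I expect is the counting argument in step (3): one must be careful that the lower bound on $|R^{\mathrm{opt}}|$ really comes from the trigger voters being separated at distance $> 2r$ and hence requiring \emph{distinct} optimal centers — this uses that $R^{\mathrm{opt}} \subseteq C'$ covers all of $U$ within $r$, and a clean triangle-inequality argument that two points $> 2r$ apart cannot share an $r$-ball center. A secondary subtlety is ensuring the candidate that is picked lies in $C' = C \setminus J$ (never reusing a failed candidate) and that the surviving members $T \setminus J$ are correctly excluded from $U$ so we do not waste budget re-covering already-covered voters; both are handled by the explicit definitions of $U$ and $C'$ above. Everything else is routine bookkeeping.
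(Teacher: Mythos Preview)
Your approach is different from the paper's. The paper runs a single pass of the Gonzalez farthest-first heuristic---initialize $\hat T = T\setminus J$, then for $|T\cap J|$ rounds pick the voter farthest from $\hat T$ and add its nearest candidate in $C\setminus J$---and argues directly, via a short case analysis on the optimal partition, that $\sigma(\hat T)\le 3\sigma^*$. There is no search over a radius $r$, which is precisely how the $O(k(nk+m))$ bound falls out.

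Your proposal has two genuine gaps. First, the separation argument in step~(3) does not go through with removal radius $2r$ around $c$: from $d(v',c)>2r$ and $d(v,c)\le r$ the triangle inequality gives only $d(v,v')>r$, not $>2r$, and two trigger voters at distance in $(r,2r]$ can perfectly well share a common optimal center at distance $\le r$ from each (so your ``more carefully'' clause also fails). The easy fix is to remove voters within $3r$ of $c$ (equivalently, within $2r$ of the trigger voter); then surviving triggers are pairwise $>2r$ apart, the optimal-center-distinctness argument works, and covered voters are still within $3r$. Second, and more damaging for the stated bound, step~(5) does not deliver $O(k(nk+m))$. Scanning candidate distances in increasing order may invoke the decision procedure on many values $r<r^\star$ before the first success, and each failed call already costs $\Omega(nk)$ for the preprocessing of $U$; you cannot ``charge the work to that call.'' Even binary search leaves an extra $\log(nm)$ factor. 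To match the paper's running time you would have to abandon the radius search entirely and argue the $3$-approximation for a single run, which is exactly what the paper's direct Gonzalez-style algorithm does.
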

\begin{proof}
Let $T \subseteq C$ be the given committee and let $J \subseteq T$ be the failing set. 
In order to find the replacement set $R$, we initialize $\hat{T} = T\setminus J$, and then
repeat the following two steps $|T \cap J|$ times: (1) Choose the farthest voter from $\hat{T}$, namely, choose 
$\hat{v} = \argmax_{v \in V} d(v, \hat{T})$, and
(2) Add to $\hat{T}$ the candidate $\hat{c} \notin \hat{T}$ that is closest to $\hat{v}$.
Upon termination, we clearly have  $|\hat{T}| = |T|$. 

We will now show that $R$ is a $3$-approximate replacement committee.
    Let $|T\cap J| = r$. 
    Suppose $R^* = \{c_1^*, c_2^*, \ldots, c_r^*\}$ is an optimal replacement such that $T^* = (T\setminus J)\cup R^*$ has $\sigma(T^*) = \sigma^*$.
    Let $V_1^*, V_2^*, \ldots, V_r^* \subseteq V$ be disjoint set of voters such that $c_i^*$ is the closest candidate in $T^*$ for all voters in $V_i^*$ for $i \in [r]$.
    We define $V' = \bigcup_{i=1}^r V_i^*$.
   
    Let $R = \{c_1, c_2, \ldots, c_r\}$ be the replacement set constructed by our algorithm and let $\hat{v_1}, \hat{v_2}, \ldots, \hat{v_r}$ be the voters chosen by our algorithm.
    Recall that $\hat{T} = (T\setminus J)\cup R$.
    It is easy to see that $\sigma(V\setminus V', T^*) = \sigma(V\setminus V', \hat{T}) \leq \sigma^*$ since $ (T\setminus J) \subseteq T^*$ and $(T\setminus J) \subseteq  \hat{T}$.
    Next, using a simple case analysis we will show that $\sigma(V', \hat{T}) \leq 3\sigma^*$.

    Our cases are based on the voters $\hat{v_1}, \hat{v_2}, \ldots, \hat{v_r}$ as follows:
    
    \begin{itemize}
        \item If $\hat{v_i} \in V\setminus V'$ for $i\in [r]$, then the farthest voter (aka $\hat{v_i}$) in the $i^{th}$-iteration of the algorithm satisfies $\sigma(\hat{v_i}, \hat{T}) \leq \sigma(V\setminus V', \hat{T}) \leq \sigma^*$, and hence, $\sigma(V', \hat{T}) \leq \sigma^*$.
        \item Next, suppose for some $i,j,k \in [r]$, we have $\hat{v_i}, \hat{v_j} \in V_k^*$.
        Without loss of generality, let $j > i$.
        Then, in the $j^{th}$-iteration, the farthest voter (aka $\hat{v_j}$) among all voters in $V$ has $d(\hat{v_j}, \hat{T_{j-1}}) \leq 3\sigma^*$ where $\hat{T_{j-1}}$ is a committee after $j-1$ iterations of adding replacement candidates.
        This is because $d(\hat{v_j}, \hat{T_{j-1}}) \leq d(\hat{v_j}, \hat{v_i}) + d(\hat{v_i}, \hat{c_i})$, and we know $d(\hat{v_j}, \hat{v_i}) = d(\hat{v_j}, c_i^*) +d(c_i^*, v_i) \leq 2\sigma^*$ and $d(\hat{v_i}, \hat{c_i}) \leq \sigma^*$ since $c_i$ is the closest candidate to $v_i$ which is not in the committee.
        This implies that $\sigma(\hat{T_{j-1}}) \leq 3\sigma^*$.
        Since $\hat{T_{j-1}} \subseteq \hat{T}$; therefore, $\sigma(V', \hat{T}) \leq 3\sigma^*$.
        \item Finally, we consider the case for when $i \in [r]$, we have $\hat{v_i} \in V_i^*$.
        We will now show that for $v \in V_i^*$, $\sigma(v, \hat{T}) \leq 3\sigma^*$.
        We know $d(v, \hat{c_i}) \leq d(v, \hat{v_i}) + d(\hat{v_i}, \hat{c_i})$.
        Notice that $d(v, \hat{v_i}) \leq d(v, c_i^*) + d(c_i^*, \hat{v_i}) \leq 2\sigma^*$, and $d(\hat{v_i}, \hat{c_i}) \leq \sigma^*$.
        Therefore, $d(v, \hat{c_i}) \leq 3\sigma^*$.
        This implies, $\sigma(V', R) \leq 3\sigma^*$, and hence, $\sigma(V', \hat{T}) \leq 3\sigma^*$.
    \end{itemize}

    Finally, it is easy to see that the algorithm runs for at most $k$ iterations and each iteration can be trivially implemented in time $O(nk+m)$.
    This completes the proof of Lemma~\ref{lem:ORP-2d-3-approx}.
    
\end{proof}

\subsection{Computing the Fault-Tolerance Score} \label{subsec:FTS-2d}

We can also approximate the optimal fault-tolerance score of a committee within a factor of $3$.
Specifically, if the optimal fault-tolerance score of $T$ is $\sigma_f(T) = \sigma^*$, then our algorithm 
returns a real number $\sigma'$ such that $\sigma^* \leq \sigma' \leq 3\sigma^*$.

For each voter $v$, let $d_f(v)$ be $v$'s distance to its $(f+1)^{th}$ closest candidate, and let $d' = \max_{v \in V} d_f(v)$
be the maximum of these values over all voters. The basic idea behind our approximation is simple and uses the following two facts:
(1) $\sigma^* \geq d'$, and (2) $\sigma^* \geq \sigma(T)$.
The first one holds because $d'$ is the best score possible if some voter's $f$ nearest candidates fail, and
the second one holds because a failure can only worsen the score (that is, $\sigma_f(T) \geq \sigma(T)$ for any $f > 0$).
Therefore, the distance $\sigma' = d' + 2\sigma(T)$ is clearly within a factor of $3$ of the optimal $\sigma^*$.
We claim that for any failing set $J \subseteq C$, there exists a replacement $R \subseteq C\setminus J$ of 
size at most $|T\cap J|$ such that $\sigma((T\setminus J)\cup R) \leq \sigma'$.

\begin{claim}
\label{clm:replacement-3-approx}
For a committee $T \subseteq C$ and a failing set $J \subseteq C$, there exists a replacement $R \subseteq C\setminus J$ of size at most $|T\cap J|$ such that $\sigma((T\setminus J)\cup R) \leq \sigma'$ where $\sigma' = d'+2\sigma(T)$.
\end{claim}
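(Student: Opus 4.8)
The plan is to build the replacement set $R$ by assigning to each \emph{failed} committee member a single surviving candidate that lies close to one of the voters that member used to serve, and then to bound the resulting score with two applications of the triangle inequality. The key point is that one replacement candidate per failed committee \emph{slot} suffices, which is what keeps $|R|$ within the required budget.

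Concretely, first I would fix, for each voter $v \in V$, a closest committee member $c_v \in T$, so that $d(v, c_v) = \sigma(v, T) \le \sigma(T)$, and partition $V$ into groups $V_c = \{ v \in V : c_v = c \}$ for $c \in T$ (ties broken arbitrarily). For every surviving member $c \in T \setminus J$, all voters in $V_c$ already have a live candidate within distance $\sigma(T) \le \sigma'$, so nothing needs to be done for them. For every failed member $c \in T \cap J$ with $V_c \neq \emptyset$, I would pick an arbitrary representative voter $v_c \in V_c$; since $|J| \le f$, at least one of $v_c$'s $f+1$ closest candidates is not in $J$, and I add one such surviving candidate $c' \in C \setminus J$ to $R$. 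By the definition of $d_f(\cdot)$ and $d'$ we have $d(v_c, c') \le d_f(v_c) \le d'$. Since at most one candidate is added to $R$ per failed committee member, $|R| \le |T \cap J|$, and $R \subseteq C \setminus J$ by construction.

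It then remains to verify $\sigma((T\setminus J)\cup R) \le \sigma'$. Take any voter $v$ and let $c = c_v$. If $c \in T \setminus J$, then $d(v, c) \le \sigma(T) \le \sigma'$, so $v$ is covered. If $c \in T \cap J$, let $v_c$ and $c' \in R$ be the representative and its chosen replacement; then the triangle inequality gives $d(v, c') \le d(v, c) + d(c, v_c) + d(v_c, c') \le \sigma(T) + \sigma(T) + d' = \sigma'$, using $d(v,c) \le \sigma(T)$ and $d(v_c, c) \le \sigma(T)$ because $c$ is a closest member of $T$ to both $v$ and $v_c$. In either case $\sigma(v, (T\setminus J) \cup R) \le \sigma'$, and since $v$ was arbitrary this proves the claim.

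There is no deep obstacle here; the only point one must get right is the size bound, namely resisting the temptation to choose a replacement per voter and instead choosing a single representative per failed committee slot — the score bound then follows for free from the triangle inequality routed through that representative. One minor edge case worth stating is that if some voter has fewer than $f+1$ candidates overall, then $\sigma_f(T)$ (and hence the claim) is vacuous, so we may harmlessly assume every voter has at least $f+1$ candidates, which guarantees the surviving candidate $c'$ used above always exists.
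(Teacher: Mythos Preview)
Your proof is correct and follows essentially the same approach as the paper: partition voters by their nearest committee member, pick one representative voter per failed member, replace with a surviving candidate within distance $d'$ of that representative, and bound via the triangle inequality routed through the failed member. The paper's argument is identical in structure, differing only in notation (it indexes the failed members as $c_1,\dots,c_r$ and picks the nearest surviving candidate rather than an arbitrary one among the $f{+}1$ closest, which yields the same bound).
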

\begin{proof}
Let $T \cap J = \{c_1, \ldots, c_r\}$ and let $V_1, \ldots, V_r$ be disjoint sets of voters such that $c_i$ is the closest candidate in $T$ to all voters in $V_i$ for $i \in [r]$.
We define $\overline{V} = \bigcup_{i=1}^r V_i$ and $V' = V\setminus \overline{V}$.
For each voter $v \in V'$, its closest candidate in $T$ is still available; hence, $\sigma(v, T\setminus J) \leq \sigma(T)$. 
Therefore, we only need to show that $\sigma(\overline{V}, R) \leq \sigma'$.

We build a replacement $R$ as follows:
We initialize $R = \emptyset$.
Next, for $r$ iterations, let $i$ be the iteration index then,

\begin{itemize}
    \item Select an arbitrary voter $v_i \in V_i$.
    \item Let $\hat{c_i} \in C\setminus J$ be the closest available candidate to $v_i$. 
    Add $\hat{c_i}$ to $R$.
\end{itemize}

To show $\sigma(\overline{V}, R) \leq \sigma'$, we will now show that for all $i\in [r]$, $\sigma(V_i, \hat{c_i}) \leq \sigma'$.
For $v \in V_i$, we know $d(v, \hat{c_i}) \leq d(v, v_i) + d(v_i, \hat{c_i})$ where $v_i$ is the voter selected in the $i^{th}$-iteration. 
First, observe that $d(v_i, \hat{c_i}) \leq d'$ as at most $f$ closest candidates to $v_i$ belong to $J$.
Second, $d(v, v_i) \leq d(v, c_i) + d(c_i, v_i) \leq 2\sigma(T)$ (recall that $c_i \in T \cap J$).
Hence, $d(v, \hat{c_i}) \leq 2\sigma(T) + d' = \sigma'$.
Therefore, for the constructed replacement $R$, $\sigma(\overline{V}, R) \leq \sigma'$.
\end{proof}

We have the following result.

\begin{lemma} \label{lem:FTS-2d-3-approx}
    The fault-tolerance score of a committee can be approximated within a factor of $3$ in time $\mathcal{O}(nm \log(f))$.
\end{lemma}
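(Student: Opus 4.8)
The plan is to show that the real number $\sigma' = d' + 2\sigma(T)$, which the algorithm outputs, satisfies $\sigma^* \le \sigma' \le 3\sigma^*$, where $\sigma^* = \sigma_f(T)$. The upper bound $\sigma' \le 3\sigma^*$ is the easy direction and follows from the two facts already stated in the text: $\sigma^* \ge d'$ (the score can be no better than a voter's distance to its $(f{+}1)$st closest candidate, since an adversary can fail its $f$ nearest ones) and $\sigma^* \ge \sigma(T) = \sigma_0(T)$ (failing candidates can only hurt the score). Combining, $\sigma' = d' + 2\sigma(T) \le \sigma^* + 2\sigma^* = 3\sigma^*$.

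For the lower bound $\sigma^* \le \sigma'$, I would invoke Claim~\ref{clm:replacement-3-approx}: it shows that for every failing set $J \subseteq C$ there is a replacement $R \subseteq C \setminus J$ with $|R| \le |T \cap J|$ and $\sigma((T\setminus J)\cup R) \le \sigma'$. Since $\sigma_f(T)$ is by definition the maximum over all failing sets $J$ with $|J| \le f$ of the best replacement score, this immediately gives $\sigma_f(T) \le \sigma'$, i.e., $\sigma^* \le \sigma'$. Hence $\sigma^* \le \sigma' \le 3\sigma^*$, so $\sigma'$ is a valid $3$-approximation.

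The only remaining thing is the running time: the algorithm needs to compute $d' = \max_{v\in V} d_f(v)$ and $\sigma(T) = \max_{v\in V}\min_{c\in T} d(v,c)$. Computing $\sigma(T)$ takes $\mathcal{O}(nm)$ (or $\mathcal{O}(nk)$ if $|T| = k$) by scanning all voter–candidate pairs. Computing each $d_f(v)$ requires finding the $(f{+}1)$st smallest among the $m$ distances from $v$ to candidates, which can be done in $\mathcal{O}(m)$ by linear-time selection, or in $\mathcal{O}(m\log f)$ by maintaining a size-$(f{+}1)$ heap; over all $n$ voters this is $\mathcal{O}(nm\log f)$, matching the claimed bound. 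No step here is a real obstacle — the substance lives entirely in Claim~\ref{clm:replacement-3-approx}, which is already proved; the lemma is essentially just the packaging of the two inequalities together with the time accounting.
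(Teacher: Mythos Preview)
Your proposal is correct and follows exactly the paper's approach: the paper does not give a separate proof of this lemma, since the two inequalities $\sigma' \le 3\sigma^*$ (from the two stated facts) and $\sigma^* \le \sigma'$ (from Claim~\ref{clm:replacement-3-approx}) together with the straightforward time analysis are all that is needed, and you have assembled these pieces accurately.
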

\subsection{Optimal Fault-Tolerant Committee}
\label{subsec:OFTC-2d}
We now discuss how to design approximately optimal fault-tolerant committees in multiwinner elections.
Specifically, given a set of voters $V$ and a set of candidates $C$ in $d$-space, along with parameters $k$ (committee size) and 
$f$ (number of faults), we want to compute a size $k$ committee $T \subseteq C$ with the minimum fault-tolerance score 
$\sigma_f(T)$. We prove two approximation results for this problem:
(1) We can solve this problem in polynomial time within an approximation factor of $3$ in polynomial time if the parameter $f$ is treated as a constant 
	(while $k$ remains possibly unbounded). If $f$ is not assumed to be a constant, we can solve the
	problem within an approximation factor of $5$.
(2) We give an EPTAS with running time $(1/\varepsilon)^{O(1/{\varepsilon^{2d}})}(m+n)^{O(1)}$ which is a
	\emph{bicriterion} approximation, where the output committee $T$ is fault-tolerant for at least 
	$(1-\varepsilon)n$ voters with $\sigma_f(T) \leq (1+\varepsilon)\sigma^*$.
The next two subsections discuss these results.

\subsubsection[3-approx-bounded-f]{3-Approximation for Bounded $f$}
\label{sec:OFTC-bounded-f}
Let $\sigma^*$ be the optimal $f$-tolerant score of a committee of size $k$.  
We compute the approximation solution via an approximate decision algorithm, which takes as input a number $\sigma \geq \sigma_f(C)$ and returns a committee $T \subseteq C$ of size at most $k$ with $\sigma_f(T) \leq 3 \sigma$ if $\sigma \geq \sigma^*$. 
(We slightly abuse notation to introduce a convenient quantity $\sigma_f(C)$, which is the $f$-fault-tolerance score of a committee with all the input candidates. This is clearly a lower bound on any size $k$ committee's score.)

For a committee $T \subseteq C$ and a failing set $J \subseteq C$, let $\delta(T, J)$ denote the score obtained 
after finding an optimal replacement $K$. That is,
\begin{equation*}
    \delta(T,J) = \min_{K \in C \backslash J, |K| = |T \cap J|} \sigma_0((T \backslash J) \cup K).
\end{equation*}
Thus, $\sigma_f(T) = \max_{J \subseteq C, |J| \leq f} \delta(T,J)$.
Our approximation algorithm is shown in Algorithm~1. It begins with an empty committee $T$ (line~1),
and as long as there exists a failing set $J$ of size at most $f$ for which $\delta(T,J) > 3 \sigma$, 
\footnote{We can check this condition by iterating over all failing sets of size $f$ and computing an optimal replacement set in each case.} 
we do the following.

First, we remove all candidates in $J$ from $T$ (line~3).  
Then, whenever there exists a voter $v \in V$ with $d(v,T) > 3 \sigma$, 
we add to $T$ a candidate $c \in C \backslash J$ whose distance to $v$ is at most $\sigma$ (lines~5-6).
Such a $c$ always exists because $\sigma$ is at least the distance to the $(f+1)^{th}$ closest neighbor to $v$. 

We call this voter $v$ the \textit{witness} of $c$, denoted by $\mathsf{wit}[c]$~(line~7).
Adding $c$ to $T$ guarantees that $d(v,T) \leq \sigma$.
We repeat this procedure (the inner while loop) until $d(v,T) \leq 3 \sigma$ for all $v \in V$.
Finally, the outer while loop terminates when $\delta(T,J) \leq 3 \sigma$ for all $J \subseteq C$ of size at 
most $f$, i.e., $\sigma_f(T) \leq 3 \sigma$.
At this point, we return the committee $T$.

\begin{algorithm}
\caption{Approximate decision algorithm}
\label{alg:testing}
\begin{algorithmic}[1]
\Statex \textbf{Input:} a set $V$ of voters, a set $C$ of candidates, the committee size $k$, the fault-tolerance parameter $f$, and a number $\sigma \geq \sigma_f(C)$
\State $T \gets \emptyset$
\While{$\exists$ $J \subseteq C$ such that $|J| \leq f$ and $\delta(T,J) > 3 \sigma$}
    \State $T \leftarrow T \backslash J$
    \While{$\exists$ $v \in V$ such that $d(v,T) > 3 \sigma$}
        \State $c \gets$ a candidate in $C \backslash J$ satisfying $d(v,c) \leq \sigma$
        \State $T \gets T \cup \{c\}$
        \State $\mathsf{wit}[c] \gets v$
    \EndWhile
\EndWhile
\State \textbf{return} $T$
\end{algorithmic}
\end{algorithm}


\begin{restatable}[]{lemma}{faraway}
\label{lem-faraway}
Let $T$ be the committee computed by Algorithm~\ref{alg:testing}.
Then $d(\mathsf{wit}[c],\mathsf{wit}[c']) > 2 \sigma$ for any two distinct $c,c' \in T$. 
\end{restatable}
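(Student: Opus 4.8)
I would argue by contradiction: suppose $c,c' \in T$ are distinct with $d(\mathsf{wit}[c],\mathsf{wit}[c']) \le 2\sigma$. Write $v = \mathsf{wit}[c]$ and $v' = \mathsf{wit}[c']$. Without loss of generality $c'$ was added to $T$ \emph{after} $c$ — say $c$ was added in some iteration of the inner while loop and $c'$ in a later iteration (possibly within a later pass of the outer loop). The key observation is that when $c$ was added with witness $v$, we had $d(v,c) \le \sigma$, and $c$ remains in $T$ at least until the next time the outer while loop removes a failing set $J$. The hard part is tracking whether $c$ survived in $T$ up to the moment $c'$ was chosen.

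**Main steps.** First I would establish that at the moment $c'$ is selected (line~5 of that iteration), the current committee $T$ still contains $c$. This is the crux: within a single pass of the inner while loop, no candidate is ever removed, so if $c$ and $c'$ were added in the same inner loop, $c$ is certainly present. If $c'$ is added in a later pass of the outer loop, I need to rule out that $c$ was removed by some intervening line~3 removal $T \gets T \setminus J$. Here is where I expect to need a supporting fact — presumably proved separately — that robust/surviving candidates are never the ones removed, or more precisely that once added, a witnessed candidate $c$ with a live location stays; alternatively the lemma statement and Algorithm~\ref{alg:testing} may be such that $c \in T$ holds at the relevant moment because $d(v,c)\le\sigma$ would have prevented $v$ from being a witness again. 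Second, granting $c \in T$ when $c'$ is chosen: the inner loop picked $v' = \mathsf{wit}[c']$ precisely because $d(v',T) > 3\sigma$ at that moment. But $c \in T$ and by the triangle inequality $d(v',c) \le d(v',v) + d(v,c) \le 2\sigma + \sigma = 3\sigma$, contradicting $d(v',T) > 3\sigma$. That contradiction finishes the proof.

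**Expected obstacle.** The genuine difficulty is the bookkeeping around the outer while loop: a candidate $c$ added in an early outer iteration could in principle be swept away by a later line~3 removal $T \gets T \setminus J$ before $c'$ gets chosen, in which case $c \notin T$ when $v'$ is examined and the triangle-inequality punch line collapses. I would handle this by arguing that the witness structure is "stable within a stratum": either both $c$ and $c'$ are added in the same execution of the inner while loop (the easy and, I suspect, intended case, since each outer iteration resets and rebuilds), or I must invoke that witnesses of distinct candidates added across outer iterations are separated for a different reason — e.g. a candidate whose location is never in any failing set $J$ persists, and the covering candidates added in line~6 are chosen from $C\setminus J$ with $f+1$ copies available, so a live copy always remains. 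I would lean on whichever invariant the paper has set up (the "robust candidate" notion from the hardness section, or an explicit claim that line~3 only removes the at-most-$f$ just-failed candidates, none of which can equal a previously-witnessed $c$ because the $d(v,c)\le\sigma$ choice already drew from $C\setminus J$). Once that invariant pins down $c \in T$ at the moment $v'$ is inspected, the rest is the one-line triangle inequality above.
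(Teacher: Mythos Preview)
Your core argument is exactly the paper's: assume $c'$ is added after $c$, use that the inner while-loop condition gives $d(\mathsf{wit}[c'],T)>3\sigma$ with $c\in T$ at that moment, and combine with $d(\mathsf{wit}[c],c)\le\sigma$ via the triangle inequality. The paper presents this directly, without contradiction, but the content is identical.

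Where you go astray is in the ``expected obstacle.'' The bookkeeping worry dissolves once you remember that the lemma concerns the \emph{final} committee $T$ returned by the algorithm. Both $c$ and $c'$ belong to this final $T$. Interpret ``$c'$ is added after $c$'' as ``the last insertion of $c'$ occurs after the last insertion of $c$'' (and note that $\mathsf{wit}[c]$ records the witness from that last insertion, since line~7 overwrites). Because $c$ survives to the final output, it is never removed by any line~3 step after its last insertion; hence $c\in T$ at the moment $c'$ is inserted. That is the entire justification the paper is tacitly relying on when it writes ``Note that $c\in T$ at this time.'' No invariant about robust candidates or $f+1$ copies is needed --- those notions come from the hardness construction in a different section and play no role in Algorithm~\ref{alg:testing}.
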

\begin{proof}
Let $c,c' \in T$ such that $c \neq c'$.
When the committee $T$ is constructed in Algorithm~\ref{alg:testing}, the candidates are added to $T$ one by one (line~6).
Therefore, without loss of generality, we can assume that $c'$ is added to $T$ after $c$.
Consider the iteration of the inner while-loop (line~4-7) of Algorithm~\ref{alg:testing} in which we add $c'$ to $T$.
At the beginning of this iteration, we have $d(v,T) > 3 \sigma$ where $v = \mathsf{wit}[c']$.
Note that $c \in T$ at this time, and thus $d(\mathsf{wit}[c'],c) > 3 \sigma$.
Furthermore, we have $d(\mathsf{wit}[c],c) \leq \sigma$ by construction.
Therefore,
\begin{equation*}
    d(\mathsf{wit}[c],\mathsf{wit}[c']) \geq d(\mathsf{wit}[c'],c) - d(\mathsf{wit}[c],c) > 2 \sigma,
\end{equation*}
by the triangle inequality.
\end{proof}

\begin{lemma} \label{lem-3apprx}
If $\sigma \geq \sigma^*$, then Algorithm~\ref{alg:testing} outputs a size $k$ committee $T$ with $\sigma_f(T) \leq 3 \sigma$.
\end{lemma}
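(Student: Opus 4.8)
\emph{Proof plan.} The plan is to establish three facts: (a) the committee $T$ output by Algorithm~\ref{alg:testing} satisfies $\sigma_f(T)\leq 3\sigma$; (b) it never has more than $k$ members; and (c) the algorithm halts. Fact (a) is built into the code, so it needs no real work: the outer while-loop exits only when there is no failing set $J$ with $|J|\leq f$ and $\delta(T,J)>3\sigma$, which is exactly the statement $\sigma_f(T)\leq 3\sigma$. Thus essentially all of the argument lies in (b) and (c), and both will follow from the hypothesis $\sigma\geq\sigma^*$ together with the ``spread-out witnesses'' phenomenon behind Lemma~\ref{lem-faraway}.

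For (b) I would first fix an optimal $f$-tolerant committee $T^*$, so that $|T^*|=k$ and $\sigma_f(T^*)=\sigma^*\leq\sigma$; since the empty failing set is always admissible, $\sigma(T^*)=\sigma_0(T^*)\leq\sigma_f(T^*)\leq\sigma$, and hence every voter lies within distance $\sigma$ of some candidate of $T^*$. Next I would note that the argument proving Lemma~\ref{lem-faraway} is in fact establishing an invariant that holds at every moment of the run, not merely at the end: whenever a candidate $c'$ is appended, we have $d(\mathsf{wit}[c'],c)>3\sigma$ for every $c$ currently in $T$, while $d(\mathsf{wit}[c],c)\leq\sigma$ was fixed when $c$ was inserted, so the triangle inequality gives $d(\mathsf{wit}[c],\mathsf{wit}[c'])>2\sigma$; and removing candidates from $T$ clearly preserves this. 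Then, assigning to each $c$ in the current $T$ some $g(c)\in T^*$ with $d(\mathsf{wit}[c],g(c))\leq\sigma$, any collision $g(c)=g(c')$ with $c\neq c'$ would force $d(\mathsf{wit}[c],\mathsf{wit}[c'])\leq 2\sigma$, contradicting the invariant; hence $g$ is injective and $|T|\leq|T^*|=k$ throughout the execution.

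For (c) I would show that each pass of the outer loop strictly increases $|T|$. When the loop body executes with failing set $J$ we have $\delta(T,J)>3\sigma$; after line~3 the committee becomes $T\setminus J$, and the inner loop then appends a set $K'\subseteq C\setminus J$ of $q$ new candidates until $\sigma((T\setminus J)\cup K')\leq 3\sigma$. If $q$ were at most $|T\cap J|$, we could enlarge $K'$ to a replacement set of size exactly $|T\cap J|$ within $C\setminus J$ without raising the score, contradicting $\delta(T,J)>3\sigma$; hence $q\geq|T\cap J|+1$, and the committee after this pass has size $|T|-|T\cap J|+q\geq|T|+1$. Combined with the bound $|T|\leq k$ from (b), the outer loop runs at most $k$ times; and since each inner-loop iteration inserts a candidate not currently present, it runs at most $m$ times per pass, so the algorithm terminates. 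Assembling (a)--(c) proves Lemma~\ref{lem-3apprx} (one may pad $T$ with arbitrary candidates if a committee of size exactly $k$ is desired).

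I expect the main obstacle to be the realization that Lemma~\ref{lem-faraway} has to be used as an invariant maintained during the whole execution rather than as a statement about the final output: the size bound $|T|\leq k$ --- and hence termination --- must be available \emph{before} we know the algorithm stops, so the charging argument against $T^*$ needs to be run at an arbitrary intermediate state. Beyond that, the charging map itself and the counting step ``$q>|T\cap J|$'' should be routine.
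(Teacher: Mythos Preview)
Your proposal is correct and follows essentially the same approach as the paper: both read off $\sigma_f(T)\leq 3\sigma$ from the outer loop's exit condition and bound $|T|\leq k$ by charging the pairwise-far witnesses of Lemma~\ref{lem-faraway} injectively to an optimal committee $T^*$. You are in fact more thorough than the paper, which applies Lemma~\ref{lem-faraway} only to the final $T$ and never explicitly argues termination; your observation that the witness-distance bound is an invariant maintained throughout the run, together with the strict growth of $|T|$ in each outer iteration, cleanly fills that gap.
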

\begin{proof}
The condition of the outer while loop of Algorithm~\ref{alg:testing} guarantees that $\delta(T,J) \leq 3 \sigma$ for all $J \subseteq C$ of size at most $f$, which implies $\sigma_f(T) \leq 3 \sigma$.
To prove $|T| \leq k$, suppose $T = \{c_1,\dots,c_r\}$.
By Lemma~\ref{lem-faraway}, the pairwise distances between the voters $\mathsf{wit}[c_1],\dots,\mathsf{wit}[c_r]$ are all larger than $2 \sigma$ and thus larger than $2 \sigma^*$ (as $\sigma \geq \sigma^*$ by our assumption).
Now consider a committee $T^* \subseteq C$ of size $k$ satisfying $\sigma_f(T^*) = \sigma^*$.
For each $\mathsf{wit}[c_i]$, there exists $c_i^* \in T^*$ such that $d(\mathsf{wit}[c_i],c_i^*) \leq \sigma^*$.
Observe that $c_1^*,\dots,c_r^*$ are all distinct.
Indeed, if $c_i^* = c_j^*$ and $i \neq j$, then by the triangle inequality,
\begin{equation*}
    d(\mathsf{wit}[c_i],\mathsf{wit}[c_j]) \leq d(\mathsf{wit}[c_i],c_i^*) + d(\mathsf{wit}[c_j],c_j^*) \leq 2 \sigma^*,
\end{equation*}
contradicting the fact that $d(\mathsf{wit}[c_i],\mathsf{wit}[c_j]) > 2 \sigma^*$.
Since $|T^*| = k$ and $c_1^*,\dots,c_r^* \in T^*$, we have $r \leq k$, which completes the proof.
\end{proof}

Using these two lemmas, we can compute a $3$-approximate solution using Algorithm~\ref{alg:testing} as follows.
First, we compute $\sigma_f(C)$ in $O(nm^{f+1})$ time by enumerating all failing sets $J \subseteq C$ of size at most $f$.
For every voter $v \in V$ and every candidate $c \in C$ such that $d(v,c) \geq \sigma_f(C)$, we run Algorithm~\ref{alg:testing} with $\sigma = d(v,c)$.
Among all the committees returned of size at most $k$, we pick the one, say $T^*$, that minimizes $\sigma_f(T^*)$.
To see that $\sigma_f(T^*) \leq 3 \sigma^*$, note that $\sigma^*$ must be the distance between a voter and a candidate.
Thus, there is one call of Algorithm~\ref{alg:testing} with $\sigma = \sigma^*$, which returns a committee $T \subseteq C$ of size at most $k$ such that $\sigma_f(T) \leq 3 \sigma = 3 \sigma^*$, by Lemma~\ref{lem-3apprx}.
We have $\sigma_f(T^*) \leq \sigma_f(T)$ by construction, which implies $\sigma_f(T^*) \leq 3 \sigma^*$.

\emph{Overall running time.}
We will show that each run of Algorithm~\ref{alg:testing} takes $O(nm^{2f+1})$ time.
We can check the condition of while loop in Step 2 in time $O(m^{2f})$.
This is because there are at most $O(m^f)$ failing sets of size at most $f$ (the precise upper bound is $2m^f$), and for each failing set, we can find an optimal replacement in time $O(m^f)$ by bruteforce.
Next, each iteration of the while loop takes $O(nm)$ time, that is, the time required to compute all voter-candidate pairwise distances.
Therefore, the overall running time of the algorithm is $O(n^2m^{2f+2})$.

Thus, we have the following result.

\begin{theorem}
\label{thm:OFTC-2d-3-approx}
We can find a $3$-approximation for Optimal Fault-tolerant Committee in time $O(n^2 m^{2f+2})$, assuming the 
fault-tolerance parameter $f$ is a constant.
\end{theorem}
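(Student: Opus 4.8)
The plan is to upgrade the approximate decision procedure of Algorithm~\ref{alg:testing} into an approximation algorithm by ``guessing'' the optimum. Since $\sigma^{*}$ is always the distance between some voter and some candidate, I would first compute the lower bound $\sigma_f(C)$ by brute force over all $O(m^{f})$ failing sets of size at most $f$ (finding an optimal replacement for each in $O(m^{f})$ time), then invoke Algorithm~\ref{alg:testing} once for each of the $O(nm)$ values $\sigma = d(v,c)$ with $\sigma \geq \sigma_f(C)$, and finally return the committee $T^{*}$ of size at most $k$ minimizing $\sigma_f(T^{*})$ among all committees produced. Correctness is then immediate from Lemma~\ref{lem-3apprx}: since $\sigma^{*} \geq \sigma_f(C)$, one invocation uses $\sigma = \sigma^{*}$, and that invocation returns a size-at-most-$k$ committee $T$ with $\sigma_f(T) \leq 3\sigma^{*}$; hence the selected $T^{*}$ satisfies $\sigma_f(T^{*}) \leq \sigma_f(T) \leq 3\sigma^{*}$, and every committee we consider is feasible.

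For the running time I would charge each invocation of Algorithm~\ref{alg:testing} as follows. Evaluating the outer-loop guard (line~2) enumerates the $O(m^{f})$ failing sets and, for each, brute-forces the $O(m^{f})$ possible replacement sets, so one guard check costs $O(nm^{2f})$. Each pass of the inner loop (lines~4--7) adds one new candidate to $T$ and recomputes the affected voter--committee distances in $O(nm)$ time. The remaining point is to bound the number of candidate additions and outer iterations: the inner loop adds at most $O(m)$ candidates before every voter is within $3\sigma$, and the outer loop repeats only polynomially often, so a single invocation runs in $O(nm^{2f+1})$ time; multiplying by the $O(nm)$ guesses (and absorbing the $O(nm^{f+1})$ spent computing $\sigma_f(C)$) gives the stated $O(n^{2}m^{2f+2})$ bound.

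I expect the termination and iteration count of Algorithm~\ref{alg:testing} to be the one place that requires genuine care, because --- unlike a plain $k$-center greedy --- the algorithm both removes candidates (line~3, each time a failing set is deleted) and re-adds them in the subsequent inner loop, so one must rule out the loops cycling. The cleanest route I see is to combine Lemma~\ref{lem-faraway} with the packing argument inside the proof of Lemma~\ref{lem-3apprx}: for $\sigma \geq \sigma^{*}$ the witnesses of the candidates currently in $T$ stay pairwise farther than $2\sigma^{*}$ apart, which caps $|T|$ by $k$ throughout, while each outer iteration can be charged to a failing set $J$ with $\delta(T,J) > 3\sigma$ that the iteration then repairs, bounding the number of outer iterations; invocations with $\sigma < \sigma^{*}$ carry no correctness guarantee, so they may simply be aborted once they exceed the time budget. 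With that in hand, the theorem follows by stitching together the enumeration, Lemma~\ref{lem-3apprx}, and the per-invocation time bound.
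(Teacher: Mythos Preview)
Your proposal follows the paper's argument essentially verbatim: compute $\sigma_f(C)$ by brute force, run Algorithm~\ref{alg:testing} once for each voter--candidate distance $\sigma=d(v,c)\geq\sigma_f(C)$, return the size-$\leq k$ committee minimizing $\sigma_f$, and appeal to Lemma~\ref{lem-3apprx} for correctness; the per-invocation cost breakdown is also the same. If anything you are more scrupulous than the paper, which simply asserts the $O(nm^{2f+1})$ per-invocation bound without explicitly bounding the number of outer-loop iterations or arguing termination---your observation that this is the delicate point, and your suggestion to abort invocations with $\sigma<\sigma^*$ once they exceed the time budget, go beyond what the paper itself writes.
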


Next, for a non-constant $f$, we give a $5$ approximation using a greedy rule.

\begin{restatable}[]{lemma}{OFTCapprox}
\label{lem:OFTC-2d-5-approx}
We can find a $5$-approximation for Optimal Fault-Tolerant Committee in time $\mathcal{O}(mnk)$.
\end{restatable}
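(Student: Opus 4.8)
# Proof Proposal for Lemma~\ref{lem:OFTC-2d-5-approx}

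The plan is to use a decision-style algorithm analogous to the bounded-$f$ case, but replace the brute-force check ``$\delta(T,J) \le 3\sigma$ for all small $J$'' with a single quantity that captures worst-case failures without enumeration. The natural candidate is the value $d'$ from Section~\ref{subsec:FTS-2d}: for each voter $v$, let $d_f(v)$ be the distance from $v$ to its $(f+1)$-th closest candidate in $C$, and set $d' = \max_{v\in V} d_f(v)$. As observed there, $d' \le \sigma^*$ (if some voter's $f$ nearest candidates fail, even the all-candidate committee scores at least $d'$). The algorithm, for a guessed threshold $\sigma$, would greedily build $T$: repeatedly pick a voter $v$ with $d(v,T) > 2\sigma + d'$ (say), and add to $T$ a candidate $c$ within distance $d'$ of $v$ (which exists since $\sigma \ge d'$ guarantees $f+1$ candidates within $d'$ of $v$). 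Record $v$ as the witness of $c$. Stop when every voter is within $2\sigma + d'$ of $T$.

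First I would establish the witness-separation property, exactly as in Lemma~\ref{lem-faraway}: when $c'$ is added after $c$, its witness $v' = \mathsf{wit}[c']$ satisfied $d(v',T) > 2\sigma + d'$ at that moment, so $d(v',c) > 2\sigma + d'$; since $d(\mathsf{wit}[c],c) \le d'$, the triangle inequality gives $d(\mathsf{wit}[c],\mathsf{wit}[c']) > 2\sigma$, hence $> 2\sigma^*$ when $\sigma \ge \sigma^*$. Next, as in Lemma~\ref{lem-3apprx}, if $\sigma \ge \sigma^*$ then each witness can be charged to a distinct member of an optimal committee $T^*$ of size $k$ (mapping $\mathsf{wit}[c_i]$ to a candidate of $T^*$ within $\sigma^*$ of it; two witnesses mapping to the same candidate would be within $2\sigma^*$, contradiction), so $|T| \le k$.

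The score bound is where the constant $5$ enters. I would argue: for any failing set $J$ with $|J|\le f$, the committee $(T\setminus J)\cup R$ for a suitable replacement $R$ has score at most $5\sigma$ when $\sigma \ge \sigma^*$. Every voter $v$ is within $2\sigma + d'$ of some $c\in T$; if $c \notin J$ it is served directly. If $c\in J$, its witness $w=\mathsf{wit}[c]$ has $d(w,c)\le d'$ and we place in $R$ a candidate $\hat c$ within $d'$ of $w$ (available since $|J|\le f$), giving $d(v,\hat c) \le d(v,c) + d(c,w) + d(w,\hat c) \le (2\sigma + d') + d' + d' = 2\sigma + 3d' \le 5\sigma$, using $d' \le \sigma^*\le\sigma$. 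The replacement set built this way has size at most $|T\cap J|$ because each failed committee member contributes at most one replacement. Hence $\sigma_f(T) \le 5\sigma$, and running the procedure with $\sigma$ over all $O(nm)$ voter--candidate distances and taking the smallest successful one yields a $5$-approximation.

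The main obstacle I anticipate is controlling the replacement-set cardinality cleanly: we must ensure the $\hat c$ we pick as replacements are distinct from $T\setminus J$ and from each other, and that their total number does not exceed $|T\cap J|$. The clean way is to index replacements by the failed committee members rather than by voters: for each $c \in T\cap J$, add one replacement $\hat c$ near $\mathsf{wit}[c]$, and then show every voter previously served by $c$ is within $5\sigma$ of $\hat c$ (this is the computation above, with $v$ served by $c$ meaning $d(v,c)\le 2\sigma+d'$). A minor subtlety is that a voter might be within $2\sigma+d'$ of several $T$-members, some failed and some not — but it suffices that at least one assignment works, so we simply fix, for each voter, one nearest committee member at the end of the algorithm and route through its replacement if needed. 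For the running time $\mathcal{O}(mnk)$: computing $d'$ costs $O(nm)$ (or $O(nm\log f)$ with sorting, absorbed), each greedy step costs $O(nm)$ to find the farthest voter and a close candidate, there are at most $k$ steps, and we repeat over $O(nm)$ thresholds — so a naive bound is larger; I would instead note that binary search over the sorted distances, or a direct argument that only $O(1)$ relevant thresholds need testing per witness structure, brings it to $\mathcal{O}(mnk)$, mirroring how the earlier theorems fold the threshold search into the stated bound.
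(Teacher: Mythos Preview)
Your approximation argument is correct, but the paper takes a simpler and more direct route that avoids threshold guessing entirely. The paper runs a single pass of the Gonzalez ``farthest-first'' greedy: repeatedly pick the voter $v_i$ farthest from the current committee $T$ and add its nearest candidate $c_i\in C$; stop after $k$ rounds (or when the nearest candidate is already in $T$). No parameter $\sigma$ is guessed. The standard $k$-supplier analysis then gives $\sigma_0(T)\le 3\sigma^*$ outright, and for any failing set $J$ one replaces each failed $c_i\in T\cap J$ by the nearest surviving candidate $c_i'$ to its witness $v_i$, obtaining
\[
d(v,c_i')\;\le\; d(v,c_i)+d(c_i,v_i)+d(v_i,c_i')\;\le\; 3\sigma^*+\sigma^*+\sigma^*\;=\;5\sigma^*,
\]
where the last two terms are at most $\sigma^*$ because $c_i$ is $v_i$'s globally nearest candidate and the nearest candidate to $v_i$ in $C\setminus J$ is within $\sigma^*$ (since $\sigma^*\ge d'$). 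This is the same triangle-inequality chain you use, just with the $3\sigma^*$ coming from the $k$-supplier bound rather than from an explicit stopping threshold $2\sigma+d'$.

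The substantive difference is the running time, and here your proposal has a real gap. Because the paper never iterates over thresholds, a single greedy pass of at most $k$ rounds at $O(n+m)$ each (maintaining voter-to-committee distances) yields $O(mnk)$ immediately. Your decision-style variant must search over $\sigma$; even with binary search over the $O(nm)$ voter--candidate distances you pick up an extra $\log(nm)$ factor on top of your $O(nmk)$-per-call estimate, and your closing assertion that binary search ``brings it to $\mathcal{O}(mnk)$'' is not justified. The clean fix is exactly to drop the threshold: run Gonzalez once and bound $\sigma_0(T)$ a posteriori by $3\sigma^*$ via the $k$-supplier argument, which is what the paper does.
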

Our algorithm is quite simple and uses the classical ``farthest next'' greedy rule \cite{gonzalez1985clustering}.
Specifically, let $C$ and $V$ be the set of candidates and voters, respectively.
We begin with an empty committee $T = \emptyset$ and an empty set $\hat{V}$ of picked voters.
Then we repeat the following step: pick the voter $\hat{v} \in V \backslash \hat{V}$ farthest to the current committee $T$, add $\hat{v}$ to $\hat{V}$, and add to $T$ the candidate $\hat{c} \in C$ closest to $\hat{v}$.
The procedure terminates when $|T| = k$ or the candidate $\hat{c}$ computed is already in $T$.
Formally, our algorithm is shown in Algorithm~\ref{alg:5-approx}.

\begin{algorithm}
\caption{$~~~5$-approximation algorithm for OFTC}
\label{alg:5-approx}
\begin{algorithmic}[1]
\Statex \textbf{Input:} a set $V$ of voters, a set $C$ of candidates, the committee size $k$, and the fault-tolerance parameter $f$
\State $i \gets 0$ and $T \gets \emptyset$
\While{$|T| \leq k$}
    \State $i \gets i+1$
    \State $v_i \gets \arg \max_{v \in V \backslash \hat{V}} d(v, T)$
    \State $c_i \gets \arg \min_{c \in C} d(v_i,c)$
    \If{$c_i \in T$}
        \State \textbf{break}
    \EndIf
    \State $T \gets T \cup  \{c_i\}$
\EndWhile
\State \textbf{return} $T$
\end{algorithmic}
\end{algorithm}

We now move on to the proof of correctness.
Denote by $\sigma^*$ the optimal $f$-tolerant score of a size-$k$ committee.
First, using the same analysis as the one for the $k$-center problem \cite{hochbaum1986unified}, we can show that $\sigma(T) \leq 3 \sigma^*$.

\begin{lemma}\label{lem:k-supplier}
Let $T$ be the committee computed by Algorithm~\ref{alg:5-approx}.
Then $\sigma(T) \leq 3 \sigma^*$.    
\end{lemma}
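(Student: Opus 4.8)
The plan is to mimic the standard argument for the greedy $k$-center / $k$-supplier algorithm of Gonzalez and Hochbaum–Shmoys, but to keep careful track of the fact that our algorithm picks \emph{candidates} closest to the farthest \emph{voter}, so that one triangle-inequality hop is ``wasted'' moving from a voter to a candidate. First I would fix an optimal committee $T^*$ of size $k$ with $\sigma_f(T^*) = \sigma^*$, and observe the key lower bound already used elsewhere in the paper: since $\sigma_f(T^*) \ge \sigma_0(T^*)$, every voter $v \in V$ has a candidate in $T^*$ within distance $\sigma^*$. So the $k$ balls of radius $\sigma^*$ centered at the points of $T^*$ cover all voters.

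The main case split is on whether the \texttt{while} loop in Algorithm~\ref{alg:5-approx} ran to completion with $|T| = k$ or terminated early via the \texttt{break} (line~7) because the chosen candidate $c_i$ was already in $T$. In the early-termination case I would argue directly: when $c_i \in T$, let $v_i$ be the farthest voter from the current $T$; we have $d(v_i, c_i) = d(v_i, T)$, and $c_i$ is the candidate closest to $v_i$, which is at distance $\le \sigma^*$ from $v_i$ (by the covering fact above). Hence $d(v_i, T) \le \sigma^*$, and since $v_i$ was the farthest voter, $\sigma(T) = \max_{v} d(v,T) = d(v_i, T) \le \sigma^* \le 3\sigma^*$.

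In the full-completion case ($|T| = k$), suppose for contradiction that $\sigma(T) > 3\sigma^*$, i.e.\ some voter $u$ has $d(u, T) > 3\sigma^*$. Consider the $k$ voters $v_1, \ldots, v_k$ picked by the algorithm together with $u$; I would show their pairwise distances all exceed $2\sigma^*$. For two picked voters $v_i, v_j$ with $i < j$: when $v_j$ was selected, $c_i \in T$ already, so $d(v_j, T) \le d(v_j, c_i)$, but $d(v_j, T) \ge d(v_i$'s... — more carefully, $v_j$ was the farthest voter at that step, so $d(v_j, c_i) \ge d(v_j, T) = \max_v d(v,T) \ge d(u,T) > 3\sigma^*$ (here I use that $u$ is still at distance $>3\sigma^*$ at every step, since $T$ only grows — wait, $u$'s distance decreases as $T$ grows; I need $u$'s \emph{final} distance $> 3\sigma^*$, so its distance at step $j$ is $\ge$ final $> 3\sigma^*$, which is the direction I want). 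Then since $d(v_i, c_i) \le \sigma^*$, the triangle inequality gives $d(v_i, v_j) \ge d(v_j, c_i) - d(v_i, c_i) > 3\sigma^* - \sigma^* = 2\sigma^*$. Similarly $d(u, v_i) \ge d(u, T) - 0 > 3\sigma^* > 2\sigma^*$ (in fact $d(u,v_i) \ge d(u,c_i) - 0$, and $d(u,c_i)\ge d(u,T) > 3\sigma^*$). So we have $k+1$ voters pairwise more than $2\sigma^*$ apart; but each lies within $\sigma^*$ of some point of the size-$k$ set $T^*$, so by pigeonhole two of them share the same $T^*$-point and are within $2\sigma^*$ of each other — a contradiction.

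The step I expect to require the most care is the bookkeeping around ``distance at step $j$ versus final distance'': since $T$ only grows, $d(v, T)$ is nonincreasing in time for every fixed $v$, so an inequality of the form $d(v_j, T^{(j)}) \ge d(u, T^{(\text{final})})$ needs the chain $d(v_j, T^{(j)}) = \max_v d(v, T^{(j)}) \ge d(u, T^{(j)}) \ge d(u, T^{(\text{final})})$. Getting the quantifiers and the monotonicity direction right here is the crux; everything else is routine triangle inequality plus pigeonhole. I would also remark that the factor $3$ (rather than $2$) is exactly the price of the voter-to-candidate hop, matching the known bound for $k$-supplier, and that this is why the final committee bound in the subsequent analysis is $5 = 3 + 2$ rather than $3$.
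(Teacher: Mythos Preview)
Your proposal is correct and follows exactly the standard Gonzalez / Hochbaum--Shmoys $k$-supplier argument that the paper defers to (the paper does not give its own proof and simply cites \cite{hochbaum1986unified}). One small bookkeeping point you should make explicit in the final write-up: the algorithm picks $v_j$ as the farthest voter in $V \setminus \hat{V}$ rather than in all of $V$, so to get $d(v_j, T^{(j)}) \ge d(u, T^{(j)})$ you need to note that $u \notin \hat{V}$ at every step (which is immediate, since any $v_\ell \in \hat{V}$ has $d(v_\ell, T) \le d(v_\ell, c_\ell) \le \sigma^*$, contradicting $d(u,T) > 3\sigma^*$).
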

The proof Lemma~\ref{lem:k-supplier} is easy and we refer the reader to \cite{hochbaum1986unified} for details.
We now show that $T$ is a $5$-approximate solution to OFTC.

\begin{lemma}
Let $T$ be the committee computed by Algorithm~\ref{alg:5-approx}.
Then $\sigma_f(T) \leq 5 \sigma^*$.
\end{lemma}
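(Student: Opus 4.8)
The plan is to bound $\sigma_f(T)$ for the committee $T$ returned by Algorithm~\ref{alg:5-approx} by relating it to the (larger) committee $T'$ one would obtain by running the same greedy process but continuing until reaching exactly $k$ picked voters whenever the algorithm terminates early. First I would fix a worst-case failing set $J \subseteq C$ with $|J| \le f$, and argue that after removing the failing candidates from $T$, the voters whose closest committee member survived are still served within distance $\sigma(T) \le 3\sigma^*$ by Lemma~\ref{lem:k-supplier}. So the work is entirely about the voters $\overline{V}$ whose closest candidate in $T$ lies in $J$, and for those I would construct an explicit replacement set $R$ exactly as in Claim~\ref{clm:replacement-3-approx}: group $\overline{V}$ by which failed candidate was their closest, pick one representative voter $v_i$ per group, and add to $R$ the closest still-available candidate $\hat{c_i}$ to $v_i$.

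The key step is to show this is well-defined and has the right size and score. For size, $|R| \le |T \cap J| \le |T|$ holds by construction since we add at most one replacement per failed committee member. For availability of $\hat{c_i}$ within a small radius, the crucial point is that every voter $v_i$ has at least $f+1$ candidates within distance $\sigma^*$ of it — indeed, $\sigma^* = \sigma_f(T^*)$ for the optimal committee $T^*$ means even if $T^*$'s contribution to $v_i$ fails, a replacement within $\sigma^*$ exists, so at least $f+1$ candidates of $C$ lie within distance $\sigma^*$ of $v_i$ (this is the standard lower bound $\sigma^* \ge d_f(v)$ used in Section~\ref{subsec:FTS-2d}). Since $|J| \le f$, at least one such candidate survives, giving $d(v_i, \hat{c_i}) \le \sigma^*$. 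Then for any $v \in V_i$, triangle inequality gives $d(v, \hat{c_i}) \le d(v, c_i) + d(c_i, v_i) + d(v_i, \hat{c_i}) \le \sigma(T) + \sigma(T) + \sigma^* \le 3\sigma^* + 3\sigma^* + \sigma^*$; I would tighten this by noting $d(v, c_i)$ and $d(c_i, v_i)$ are each at most $\sigma(T) \le 3\sigma^*$ is too lossy, so instead I would route through $v_i$ via $c_i \in T$ and use $d(v,c_i) \le \sigma(T)$, $d(v_i, c_i) \le \sigma(T)$, yielding $d(v,\hat{c_i}) \le 2\sigma(T) + \sigma^*$. Plugging $\sigma(T) \le 3\sigma^*$ would give $7\sigma^*$, which is too weak, so the main obstacle is getting the constant down to $5$.

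To get $5$, I would sharpen the bound on $\sigma(T)$ in a way tailored to this argument rather than using the blunt $3\sigma^*$. The right idea is to observe that the greedy $k$-supplier analysis actually shows each voter $v$ has a candidate $c_i \in T$ with $d(v, c_i) \le 2\sigma^*$ when the picked voter in that round is within $\sigma^*$ of $v$ through a common optimal center — more precisely, I would run the standard argument that the picked voters $v_1, \ldots$ are pairwise more than $2\sigma^*$ apart (else the later pick wouldn't have been farthest), hence they hit distinct optimal centers, hence there are at most $k$ of them, hence the algorithm never breaks early and $|T| = k$; and simultaneously, for any voter $v$, its nearest picked voter $v_i$ satisfies $d(v, v_i) \le 2\sigma^*$ (by the same packing argument applied to $v$), and $d(v_i, c_i) \le \sigma^*$ since $c_i$ is the closest candidate to $v_i$ and an optimal center sits within $\sigma^*$ of $v_i$. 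This gives the refined $\sigma(T) \le 3\sigma^*$ but also, more usefully, that the \emph{failed} candidate $c_i$ serving $v \in V_i$ was itself chosen as closest-to-$v_i$ for some picked voter $v_i'$ with $d(v_i', c_i) \le \sigma^*$ and $d(v, v_i') \le 2\sigma^*$, so $d(v, c_i) \le 3\sigma^*$ — no improvement there. The genuinely different move is: don't replace through $c_i$ at all; instead, in the replacement construction pick the representative $v_i$ to be a \emph{picked voter} of the greedy process lying in $V_i$ if one exists, and otherwise note $d(v, \text{picked voter}) \le 2\sigma^*$ directly, then add the candidate closest to that picked voter, which survives (at least $f+1$ within $\sigma^*$) and lies within $\sigma^*$ of it; this yields $d(v,\hat{c_i}) \le 2\sigma^* + \sigma^* + \sigma^* \le 4\sigma^*$ using that the surviving replacement candidate near the picked voter is within $\sigma^*$, plus one more $\sigma^*$ slack from going $v \to$ picked voter $\to$ optimal center $\to$ surviving candidate. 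Carefully bookkeeping the triangle-inequality chain $v \to v_i^{\text{picked}} \to c^* \to \hat c$ where $c^*$ is an optimal center near $v_i^{\text{picked}}$ gives $\le 2\sigma^* + \sigma^* + 2\sigma^* = 5\sigma^*$, and combining with the $\le 3\sigma^* \le 5\sigma^*$ bound for unaffected voters completes the proof. The main obstacle, as indicated, is precisely this constant-chasing: arranging the replacement so that the chain from an arbitrary affected voter to a surviving candidate uses at most five hops of length $\sigma^*$, which requires using picked voters (not failed candidates) as intermediaries and invoking the $f+1$-within-$\sigma^*$ property at the picked voter.
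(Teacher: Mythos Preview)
Your proposal contains all the right ingredients but assembles them incorrectly, and the final chain you settle on rests on an unjustified step. The paper's proof uses exactly the triangle-inequality chain you wrote down first and then discarded as ``too weak'': for a voter $v$ whose nearest committee member $c_i$ has failed, bound
\[
d(v,c_i') \;\le\; d(v,c_i) + d(c_i,v_i) + d(v_i,c_i'),
\]
where $c_i'$ is the nearest surviving candidate to $v_i$. The crucial point you missed is the choice of $v_i$: it must be the \emph{picked voter from the algorithm} that caused $c_i$ to be added, not an arbitrary representative of the group $V_i$. With that choice, $c_i = \arg\min_{c\in C} d(v_i,c)$, so $d(c_i,v_i)\le \sigma^*$ (some optimal center lies within $\sigma^*$ of $v_i$), not merely $\le \sigma(T)\le 3\sigma^*$. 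You actually state this fact yourself (``$d(v_i',c_i)\le \sigma^*$ since $c_i$ is the closest candidate to $v_i$'') but then dismiss it with ``no improvement there'' because you were trying to sharpen the \emph{first} term $d(v,c_i)$ rather than the second. Plugging $3\sigma^* + \sigma^* + \sigma^*$ into the chain gives $5\sigma^*$ immediately.

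By contrast, the alternative chain you propose at the end, $v \to v_i^{\text{picked}} \to c^* \to \hat c$, relies on the claim $d(v,v_i^{\text{picked}}) \le 2\sigma^*$. This is not established by the packing argument you invoke: the Gonzalez analysis shows the picked voters are pairwise far apart and that $\sigma(T)\le 3\sigma^*$, but it does not show every voter is within $2\sigma^*$ of a picked voter (from $d(v,c_i)\le 3\sigma^*$ and $d(v_i,c_i)\le \sigma^*$ one only gets $d(v,v_i)\le 4\sigma^*$). So that route has a genuine gap. The fix is simply to anchor the replacement at the algorithm's picked voter $v_i$ and use your first chain with the sharper middle term.
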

\begin{proof}
It suffices to show that for any failing set $J \subseteq C$ of size at most $f$, there exists a replacement set $K \subseteq C \backslash J$ such that $|K| = T \cap J$ and $\sigma_0((T \backslash J) \cup K) \leq 5 \sigma^*$.
Suppose $T = \{c_1,\dots,c_r\}$, where $c_i$ is the candidate selected in the $i$-th iteration of Algorithm~\ref{alg:5-approx}.
Let $v_1,\dots,v_r$ be the voters computed in line~4 of Algorithm~\ref{alg:5-approx}.
For a failing set $J \subseteq C$, we construct the replacement set $K$ as follows: For each index $i \in [r]$ such that $c_i \in J$, we include in $K$ the candidate $c_i' \in C \backslash J$ closest to $v_i$. Clearly, $|K| = |T \cap J|$.

Now we show that $\sigma_0((T \backslash J) \cup K) \leq 5 \sigma^*$.
Using Lemma~\ref{lem:k-supplier}, we know that $d(v, T) \leq 3\sigma^*$ for any voter $v \in V$.
Based on this, we bound $\sigma_0((T \backslash J) \cup K)$ as follows.
Observe that $\sigma_0((T \backslash J) \cup K) = \max_{v \in V} d(v,(T \backslash J) \cup K)$.
So it suffices to show that $d(v,(T \backslash J) \cup K) \leq 5 \sigma^*$ for all $v \in V$.
Let $c_i \in T$ be the candidate closest to $v$; thus, $d(v,c_i) = d(v,T) \leq 3 \sigma^*$.
If $c_i \notin J$, we are done.
Otherwise, $c_i' \in K$ and hence $d(v,(T \backslash J) \cup K) \leq d(v,c_i')$.
By the triangle inequality, we have
\begin{equation*}
    d(v,c_i') \leq d(v,c_i)+d(c_i,v_i)+d(v_i,c_i').
\end{equation*}
As argued before, $d(v,c_i) \leq 3 \sigma^*$.
Furthermore, $d(c_i,v_i) \leq d(v_i,c_i') \leq \sigma^*$, because $c_i'$ is the candidate in $C \backslash J$ closest to $v_i$.
Therefore, the above inequality implies $d(v,c_i') \leq 5 \sigma^*$.
\end{proof}

By the above lemma, we know that Algorithm~\ref{alg:5-approx} achieves an approximation ratio of 5.
Its running time is clearly $O(mnk)$.
This completes the proof of Lemma~\ref{lem:OFTC-2d-5-approx}.

All of these approximations hold not just for $d$-dimensional Euclidean space, for any fixed $d$, but also for any metric space.

\subsubsection{A bicriterion EPTAS}
\label{sec:OFTC-EPTAS}

Finally, we design a bicriterion FPT approximation scheme with running time $f(\varepsilon) \cdot n^{\mathcal{O}(1)}$, which finds a size-$k$ committee whose fault-tolerance score for at least a $(1 - \varepsilon)$ fraction of the voters is within a factor of $(1 + \varepsilon)$ of the optimum.
Formally, we say a committee $T$ is \textit{$(r,\rho)$-good} if there exists a subset $V' \subseteq V$ of size at least $\rho n$ such that the $f$-tolerant score of $T$ with respect to only the voters in $V'$ is at most $r$.
Then our approximation scheme can output a size-$k$ committee which is $((1+\varepsilon) \sigma^*, 1-\varepsilon)$-good.
The core of our approximation scheme is the following (approximation) decision algorithm.
The decision algorithm takes the problem instance and an additional number $r > 0$ as input.
The output of the algorithm has two possibilities: it either \textbf{(i)} returns YES and gives a size-$k$ committee that is $((1+\varepsilon)r,1-\varepsilon)$-good or \textbf{(ii)} simply returns NO.
Importantly, the algorithm is guaranteed to give output \textbf{(i)} as long as $r \geq \sigma^*$.
Note that this decision algorithm directly gives us the desired approximation scheme.
Indeed, we can apply it with $r = d(v,c)$ for all $v \in V$ and $c \in C$.
Let $r^*$ be the smallest $r$ that makes the algorithm give output \textbf{(i)}.
The size-$k$ committee $T^*$ obtained when applying the algorithm with $r^*$ is $((1+\varepsilon)r^*,1-\varepsilon)$-good.
We have $r^* \leq \sigma^*$ because the algorithm must be applied with $r = \sigma^*$ at some point and it is guaranteed to give output \textbf{(i)} at that time.
Thus, $T^*$ is $((1+\varepsilon) \sigma^*, 1-\varepsilon)$-good, as desired.


For simplicity of exposition, we describe our decision algorithm in two dimensions.
By scaling, we may assume that the given number is $r=1$.
To solve the decision problem, our algorithm uses the shifting technique \cite{shifting}.
Let $h$ be an integer parameter to be determined later.
For a pair of integers $i,j \in \mathbb{Z}$, let $\Box_{i,j}$ denote the $h \times h$ square $[i, i+h] \times [j, j+h]$.
A square $\Box_{i,j}$ is \textit{nonempty} if it contains at least one voter or candidate.
We first compute the index set $\widetilde{I} = \{(i,j): \Box_{i,j} \text{ is nonempty}\}$.
This can be easily done in time $\mathcal{O}((n+m)h^2)$.

Consider a pair $(x,y) \in \{0,\dots,h-1\}^2$.
Let $L_{x,y}$ be the set of all integer pairs $(i,j)$ such that $i \pmod h \equiv x$ and $j \pmod h \equiv y$.
We write $\widetilde{I}_{x,y} = \widetilde{I} \cap L_{x,y}$.
For a voter $v \in V$ and a square $\Box_{i,j}$, we say $v$ is a \textit{boundary voter} for $\Box_{i,j}$ if $v \notin [i+2, i+h-2] \times [j+2, j+h-2]$.
Furthermore, we say $v$ \textit{conflicts} with $(x,y)$ if $v$ is a boundary voter in $\Box_{i,j}$ for some $(i,j) \in \widetilde{I}_{x,y}$.

\begin{restatable}[]{lemma}{shifting}
\label{lem:shifting}
There exists a pair $(x,y) \in \{0,\ldots, h-1\}^2$ such that at most $\frac{4h-4}{h^2}\cdot |V|$ voters conflict with $(x,y)$.
\end{restatable}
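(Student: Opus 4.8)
The plan is to use a standard averaging (pigeonhole) argument over the $h^2$ possible shifts $(x,y) \in \{0,\dots,h-1\}^2$. For a fixed voter $v$, I want to count how many shifts $(x,y)$ cause $v$ to conflict, i.e. to be a boundary voter in some nonempty square $\Box_{i,j}$ with $(i,j) \in \widetilde{I}_{x,y}$. The key observation is that $v$ lies in exactly one square of each shifted grid: once $(x,y)$ is fixed, the square $\Box_{i,j}$ with $(i,j) \in L_{x,y}$ containing $v$ is uniquely determined (the grids $L_{x,y}$ partition the plane into $h \times h$ cells). Moreover that square is automatically nonempty since it contains $v$. So $v$ conflicts with $(x,y)$ precisely when $v$ falls in the ``boundary frame'' $\Box_{i,j} \setminus [i+2,i+h-2]\times[j+2,j+h-2]$ of its own cell in the grid $L_{x,y}$.

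Next I would bound, over all $h^2$ shifts, the number of shifts for which $v$ lands in the boundary frame of its cell. Writing $v = (v_1, v_2)$, whether $v$ is a boundary voter depends only on the position of $v$ relative to the cell boundaries, i.e. on $(\lfloor v_1 \rfloor - x \bmod h, \lfloor v_2 \rfloor - y \bmod h)$ — more precisely on the residues of the cell-relative coordinates. As $(x,y)$ ranges over $\{0,\dots,h-1\}^2$, the pair of offsets of $v$ within its cell ranges over all $h \times h$ possibilities (up to the fixed fractional part of $v$, which does not affect the frame-membership count since the frame has integer width $2$ on each side). The voter is a boundary voter exactly when its horizontal offset is in $\{0,1\} \cup \{h-2,h-1\}$ (that is, within distance $2$ of a vertical edge) or its vertical offset is in $\{0,1\}\cup\{h-2,h-1\}$. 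The number of offset pairs $(a,b) \in \{0,\dots,h-1\}^2$ with $a \in \{0,1,h-2,h-1\}$ or $b \in \{0,1,h-2,h-1\}$ is at most $4h + 4h - 16 = 8h - 16$...

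Here I should double-check the constant against the claimed $\frac{4h-4}{h^2}\cdot|V|$: the lemma's bound corresponds to each voter conflicting with at most $4h-4$ shifts, which matches a one-sided frame of width $1$ rather than $2$. Re-reading the definition, ``$v \notin [i+2,i+h-2]\times[j+2,j+h-2]$'' does give width-$2$ margins, so the honest count is roughly $8h$; I will present the argument with whatever frame width the definition dictates and state the resulting fraction $\frac{ch-c'}{h^2}$, noting that the precise constant only affects lower-order terms in the final running time. In any case, summing over all voters, the total number of (voter, conflicting-shift) incidences is at most $(8h-16)\,|V|$ (resp. $(4h-4)|V|$ for width-$1$ margins), so by averaging there exists a shift $(x,y)$ with at most $\frac{8h-16}{h^2}\,|V|$ (resp. $\frac{4h-4}{h^2}|V|$) conflicting voters, which is what we want.

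The main obstacle — really the only subtle point — is the bookkeeping around the boundary-frame width and making sure the per-voter count is genuinely independent of the voter's fractional coordinates and of which nonempty squares appear in $\widetilde{I}_{x,y}$ (the latter is handled by the observation that $v$'s own cell is always nonempty, so $v$ does not need any \emph{other} square to be nonempty in order to conflict, and conversely if $v$ is in the interior of its own cell it cannot be a boundary voter for any square of that grid). Once that is pinned down, the pigeonhole step is immediate. I would present it as: (1) grids $L_{x,y}$ partition $\mathbb{R}^2$ into cells; $v$'s cell is nonempty; (2) $v$ conflicts with $(x,y)$ iff $v$ is in the boundary frame of its own cell; (3) count such shifts per voter via residues of cell offsets; (4) sum and average.
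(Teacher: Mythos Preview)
Your approach is exactly the paper's: count, for each voter, the number of shifts $(x,y)$ for which it lands in the boundary frame of its own cell, then average. Your concern about the constant is well-founded: with the frame $[i+2,i+h-2]\times[j+2,j+h-2]$ the inner square has side $h-4$, so the honest per-voter bound is $h^2-(h-4)^2=8h-16$, not $4h-4$; the paper's proof writes $h^2-(h-2)^2$ and thus makes the same slip you flagged, but as you note this only changes the constant in front of $1/h$ and does not affect the downstream argument.
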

\begin{proof}
A voter $v \in V$ may conflict with $(x,y)$ only if for some $(i,j) \in \widetilde{I}_{x,y}$, we have $v \in \Box_{i,j}$ but $v \notin [i+2,i+h-2] \times [j+2,j+h-2]$.
Therefore, out of the total of $h^2$ pairs $(x,y)$, $v$ can only conflict with at most $h^2 - (h-2)^2$ pairs.
Hence, using an averaging argument, there exists a pair $(x,y)$ with at most $\frac{h^2 - (h-2)^2}{h^2}\cdot |V|$ conflicting voters from $V$.
\end{proof}

We fix a pair $(x, y) \in \{0, \ldots, h-1\}^2$ that conflicts with the minimum number of voters.
For $(i,j) \in \widetilde{I}_{x,y}$, we define the set of (non-boundary) voters $V_{i,j} = \{v \in \Box_{i,j} : v \in [i+2, i+h-2] \times [j+2, j+h-2]\}$, and the set of candidates $C_{i,j} = \{c \in C : c \in \Box_{i,j}\}$.
Note that for $(i,j) \in \widetilde{I}_{x,y}$, the $C_{i,j}$'s are disjoint and form a partition of $C$.
Next, we show an important lemma which allows our algorithm to divide our problem into smaller subproblems, solve them individually, and combine the solutions to solve the overall problem. 

\begin{restatable}[]{lemma}{localglobal}
\label{lem:local-global-FT}
Let $V_1, V_2, \ldots, V_s$ be subsets of $V$ and let $T_{1}, T_{2}, \ldots, T_{s}$ be pairwise disjoint subsets of $C$ such that $T_i$ is a fault-tolerant committee for $V_i$ with $\sigma_f(T_i) = \sigma$.
Then, $T = \bigcup_{i=1}^s T_i$ is a fault-tolerant committee of $\bigcup_{i=1}^s V_i$ with $\sigma_f(T) = \sigma$.
\end{restatable}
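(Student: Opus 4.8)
The plan is to prove the two inequalities $\sigma_f(T) \le \sigma$ and $\sigma_f(T) \ge \sigma$ separately, the first being the substantive direction. Fix an arbitrary failing set $J \subseteq C$ with $|J| \le f$. Since the $T_i$ are pairwise disjoint, this induces failing sets $J_i = J \cap T_i$ for the individual instances, and trivially $\sum_i |J_i| \le |J| \le f$, so each $J_i$ is a legal failing set for the $i$-th subproblem. Because $\sigma_f(T_i) = \sigma$, there is a replacement set $R_i \subseteq C \setminus J$ (we may take $R_i$ from anywhere outside $J$; in particular the freedom to pick replacements from the whole candidate set only helps) with $|R_i| = |T_i \cap J_i| = |T_i \cap J|$ and $\sigma_0\big((T_i \setminus J_i) \cup R_i\big) \le \sigma$ with respect to the voters $V_i$. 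Set $R = \bigcup_{i=1}^s R_i$. Then $|R| \le \sum_i |R_i| = \sum_i |T_i \cap J| = |T \cap J|$, so $R$ is a legal replacement for $T$ under $J$.

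It remains to check that $(T \setminus J) \cup R$ serves every voter in $\bigcup_i V_i$ within distance $\sigma$. Take any $v \in V_j$ for some $j$. The committee $(T_j \setminus J_j) \cup R_j$ contains a candidate $c$ with $d(v, c) \le \sigma$. Now observe $(T_j \setminus J_j) \subseteq (T \setminus J)$: indeed $T_j \subseteq T$, and a member of $T_j$ is removed on the left only if it lies in $J_j = J \cap T_j \subseteq J$, so it is also removed on the right; conversely nothing of $T_j$ outside $J$ is removed. Also $R_j \subseteq R$. Hence $c \in (T \setminus J) \cup R$, giving $d(v, (T \setminus J) \cup R) \le \sigma$. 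Since $J$ was arbitrary, $\sigma_f(T) \le \sigma$.

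For the reverse inequality $\sigma_f(T) \ge \sigma$, pick an index $j$ and a worst-case failing set $J_j$ for the $j$-th subproblem realizing $\sigma_f(T_j) = \sigma$; extend it to a failing set $J \subseteq C$ of size $\le f$ for the global instance (padding with arbitrary candidates if $|J_j| < f$ only increases the adversary's power, and we may simply take $J = J_j$). Any legal global replacement $R$ restricted to the $j$-th block, $R \cap (\text{candidates relevant to } V_j)$, induces a replacement for the $j$-th subproblem, and since $T_i$ for $i \ne j$ contribute no candidate closer than... — here one must be slightly careful: a candidate from some $R$ not assigned to block $j$ could in principle serve a voter of $V_j$. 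The cleanest fix is to appeal to the intended use: the $V_i$ are the non-boundary voter sets $V_{i,j}$ of squares that are pairwise at distance $> 2 \ge 2r$, so no single candidate within distance $r$ of a voter in one block can be within distance $r$ of a voter in another, which forces any near-optimal replacement to decompose across blocks. I expect this reverse direction — making precise that replacements cannot "cheat" across blocks — to be the main obstacle, and I would resolve it either by invoking the geometric separation of the blocks in the calling context, or by stating the lemma with the replacements themselves required to respect the partition (which is how the algorithm constructs them). The forward direction above is the part that genuinely drives the correctness of the shifting-based algorithm.
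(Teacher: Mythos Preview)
Your forward direction ($\sigma_f(T)\le\sigma$) is essentially identical to the paper's argument: restrict the global failing set $J$ to each block, invoke the fault-tolerance of $T_i$ against the \emph{full} $J$ (legal since $|J|\le f$) to obtain $R_i\subseteq C\setminus J$ with $|R_i|\le|T_i\cap J|$, and take $R=\bigcup_i R_i$. The paper's proof stops there and declares the lemma proved.

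Your instinct about the reverse direction is correct, and in fact sharper than the paper. The paper never proves $\sigma_f(T)\ge\sigma$, and as you observed, it does not follow from the stated hypotheses: candidates in $T_j$ (or replacements chosen for block~$j$) can in principle serve voters in $V_i$ for $i\ne j$, so the global score could be strictly smaller. The equality in the lemma statement should really read $\sigma_f(T)\le\sigma$; only this inequality is proved, and only this inequality is used downstream in the EPTAS (to conclude that the union of the per-cell committees $T_{i,j}$ has $f$-tolerant score at most $1+6/h$ on $V'$). So your diagnosis---that the forward direction is what drives correctness and the reverse direction would require the geometric separation of the cells---is exactly right; the paper simply elides the issue.
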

\begin{proof}
We will show that for any failing set $J \subseteq C$, there exists a replacement set $R$ with $|R| \leq |J \cap T|$ such that $\sigma_0((T\setminus J) \cup R) \leq \sigma$.
For $i\in [s]$, let $J_i \subseteq J$ be the restriction of $J$ to $T_i$, i.e., $J_i = J \cap T_i$.
We know $|J| \leq f$.
Since $T_i$ is a fault-tolerant committee for $V_i$; hence, there exists a valid replacement $R_i \subseteq C \backslash J$ such that $|R_i| \leq |J_i|$ and $\sigma_0((T_i \backslash J_i) \cup R_i) \leq \sigma$.
Let $R = \bigcup_{i=1}^s R_i$ (note that the $R_i$'s need not be disjoint).
Then we have $|R| \leq \sum_{i=1}^s |J_i| \leq |J \cap T_i|$ which implies $|R| \leq |J \cap T|$, and we have $\sigma_0((T \backslash J) \cup R) \leq \sigma$.
This completes the proof of Lemma~\ref{lem:local-global-FT}.
\end{proof}

Consider a pair $(i,j) \in \widetilde{I}_{x,y}$.
Let $\overline{T}_{i,j}$ be a smallest fault-tolerant committee for $V_{i,j}$ with $\sigma_f(\overline{T}_{i,j}) \leq 1$.
We observe that any inclusion-minimal fault-tolerant committee $T_{i,j}$ for $V_{i,j}$ satisfies $T_{i,j} \subseteq C_{i,j}$.
This is because any candidate outside $C_{i,j}$ has distance more than $1+6/h$ to any voter in $V_{i,j}$ (for a large enough value of $h$).
In the next section we will show how to compute a fault-tolerant committee $T_{i,j} \subseteq C_{i,j}$ for $V_{i,j}$ such that $|T_{i,j}| \leq |\overline{T}_{i,j}|$ and $\sigma_f(T_{i,j}) \leq 1+6/h$ in $h^{O(h^4)} n^{O(1)}$ time.
Assuming we can compute the above-mentioned committee $T_{i,j}$, our overall algorithm is as follows:

\begin{enumerate}
    \item Fix a pair $(x, y) \in \{0, \ldots, h-1\}^2$ conflicting with the minimum number of voters, and set $h$ to be the smallest integer such that $(4h-4)/h^2 \leq \varepsilon$ and $6/h \leq \varepsilon$.
    \item For each pair $(i,j) \in \widetilde{I}_{x,y}$, compute $T_{i,j} \subseteq C_{i,j}$.
    \item Let $T = \bigcup_{(i,j) \in \widetilde{I}_{x,y}} T_{i,j}$.
If $|T| \leq k$, return YES (along with $T$); otherwise, return NO.
\end{enumerate}

Let $V' = \bigcup_{(i,j) \in \widetilde{I}_{x,y}} V_{i,j}$.
Since the $C_{i,j}$'s are disjoint, using Lemma~\ref{lem:local-global-FT}, we conclude that $T$ is a fault-tolerant committee for $V'$.
Furthermore, from our choice of $(x,y)$, we have $|V'| \geq (1-\varepsilon) n$.
It is easy to show that the $f$-tolerant score of $T$ with respect to the voters in $V'$ is at most $1+\varepsilon$, and in addition, if $\sigma^* \geq 1$, we have $|T| \leq k$; we give a formal argument below.
This proves correctness of our decision algorithm.
%
The overall algorithm takes $(1/\varepsilon)^{O(1/\varepsilon^4)}(m+n)^{O(1)}$ time.
We note that the algorithm can be directly generalized to the $d$-dimensional case with running time $(1/\varepsilon)^{O(1/{\varepsilon^{2d}})}(m+n)^{O(1)}$.
Therefore, we have the following result.
\begin{theorem}
\label{thm:OFTC-2d-EPTAS}
Given a $d$-dimensional Fault-Tolerant Committee Selection instance, we can compute a size-$k$ committee $T$ such that the $f$-tolerant score of $T$ with respect to at least $(1-\varepsilon)n$ voters is at most $(1+\varepsilon)\sigma^*$, where $\sigma^*$ is the optimal $f$-tolerant score of a size-$k$ committee (with respect to the entire set $V$). This algorithm runs in time $(1/\varepsilon)^{O(1/{\varepsilon^{2d}})}(m+n)^{O(1)}$.
\end{theorem}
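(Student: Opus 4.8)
The plan is to prove the theorem by verifying the correctness and running time of the decision algorithm sketched above; as already observed, running that algorithm with each of the $O(nm)$ candidate radii $r=d(v,c)$ and keeping the committee produced for the smallest $r$ for which it returns \textbf{(i)} yields the claimed scheme, so it suffices to show: (a) if $r\ge\sigma^*$ the algorithm returns \textbf{(i)}; (b) whenever it returns \textbf{(i)} with committee $T$, then $T$ is $((1+\varepsilon)r,1-\varepsilon)$-good; and (c) the stated time bound. After rescaling so that $r=1$, fix the shift $(x,y)$ guaranteed by Lemma~\ref{lem:shifting}, form the partition $\{C_{i,j}\}$ of $C$ and the interior-voter sets $V_{i,j}$, invoke the (assumed) per-square subroutine to get $T_{i,j}\subseteq C_{i,j}$ with $|T_{i,j}|\le|\overline{T}_{i,j}|$ and $\sigma_f(T_{i,j})\le 1+6/h$, put $T=\bigcup_{i,j}T_{i,j}$, and return \textbf{(i)} exactly when $|T|\le k$.

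For (a), assume $\sigma^*\le 1$ and let $T^*$ be a size-$k$ committee with $\sigma_f(T^*)\le 1$. The key claim is that $T^*\cap C_{i,j}$ is a fault-tolerant committee for $V_{i,j}$ with $\sigma_f\le 1$; granting it, $|\overline{T}_{i,j}|\le|T^*\cap C_{i,j}|$, and since the $C_{i,j}$ are pairwise disjoint, $|T|=\sum_{i,j}|T_{i,j}|\le\sum_{i,j}|\overline{T}_{i,j}|\le\sum_{i,j}|T^*\cap C_{i,j}|\le|T^*|=k$, so the algorithm returns \textbf{(i)}. To prove the claim, fix a failing set $J'\subseteq C$ with $|J'|\le f$; we may assume $J'\subseteq C_{i,j}$, because failing candidates outside $C_{i,j}$ neither reduces the coverage of $V_{i,j}$ by $T^*\cap C_{i,j}$ nor changes the budget $|(T^*\cap C_{i,j})\cap J'|$, while every candidate within distance $1$ of a voter $v\in V_{i,j}$ lies in $\Box_{i,j}$ (such $v$ is at distance at least $2$ from $\partial\Box_{i,j}$). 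Since $T^*$ is $f$-fault-tolerant, there is $R\subseteq C\setminus J'$ with $|R|\le|T^*\cap J'|$ and $(T^*\setminus J')\cup R$ covering $V$ within $1$; set $R'=R\cap C_{i,j}$. For each $v\in V_{i,j}$ its covering member lies in $C_{i,j}$, hence in $((T^*\cap C_{i,j})\setminus J')\cup R'$, and as $J'\subseteq C_{i,j}$ we get $|R'|\le|R|\le|T^*\cap J'|=|(T^*\cap C_{i,j})\cap J'|$, establishing the claim.

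For (b), when the algorithm returns \textbf{(i)} we have $|T|\le k$ by construction, the sets $T_{i,j}$ are pairwise disjoint (they sit in the disjoint $C_{i,j}$), so Lemma~\ref{lem:local-global-FT} (its proof only uses $\sigma_f(T_{i,j})\le\sigma$, taken here as $\sigma=1+6/h$) shows $T$ is fault-tolerant for $V'=\bigcup_{i,j}V_{i,j}$ with $\sigma_f(T)\le 1+6/h\le 1+\varepsilon$ by the choice of $h$; and $V\setminus V'$ consists of voters conflicting with $(x,y)$, of which there are at most $\tfrac{4h-4}{h^2}n\le\varepsilon n$ by Lemma~\ref{lem:shifting} and the choice of $h$, so $|V'|\ge(1-\varepsilon)n$ and $T$ is $((1+\varepsilon)r,1-\varepsilon)$-good. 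For (c), $h=\Theta(1/\varepsilon)$; computing the partition costs $O((m+n)h^2)$, there are at most $m+n$ nonempty squares, each subroutine call costs $h^{O(h^4)}n^{O(1)}=(1/\varepsilon)^{O(1/\varepsilon^4)}n^{O(1)}$, and assembling $T$ and checking $|T|\le k$ is polynomial, so one decision call costs $(1/\varepsilon)^{O(1/\varepsilon^4)}(m+n)^{O(1)}$; the $O(nm)$ radii keep the same form. In $\mathbb{R}^d$ each $C_{i,j}$ lies in an $h^d$-box and the subroutine runs in $h^{O(h^{2d})}n^{O(1)}$, giving the stated $(1/\varepsilon)^{O(1/\varepsilon^{2d})}(m+n)^{O(1)}$ bound overall.

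The one delicate point inside this argument is the claim in (a): one must restrict the adversary to $J'\subseteq C_{i,j}$ \emph{before} invoking fault tolerance of $T^*$ so that the replacement budget matches square by square, since a plain summation of the global budget inequality does not give the needed per-square bound. The real heavy lifting, however, is the deferred per-square subroutine, which I would prove by snapping voters and candidates to a grid of $O(h^4)$ cells of side $\Theta(1/h)$ (which perturbs all relevant distances by only $O(1/h)$), reducing the instance to a constant-size combinatorial problem on cells carrying candidate multiplicities; enumerating, per cell, the relevant committee-count/robustness pattern while correctly handling an adversary whose failure budget $f$ may be large is where I expect the main difficulty of the whole construction to lie.
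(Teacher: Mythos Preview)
Your proposal is correct and follows essentially the same approach as the paper: shifting, per-box subproblems, and the local-to-global combination via Lemma~\ref{lem:local-global-FT}. In fact you are more careful than the paper on the one step that matters---the paper asserts that $\sum_{(i,j)}|\overline{T}_{i,j}|\le k$ when $r\ge\sigma^*$ without justification, while you supply the missing argument (your claim in (a) that $T^*\cap C_{i,j}$ is itself $f$-fault-tolerant for $V_{i,j}$, with the budget-matching handled by first restricting the adversary to $J'\subseteq C_{i,j}$); your sketch of the per-box subroutine via $\Theta(1/h)$-cell snapping and enumeration is exactly what the paper does in Lemmas~\ref{lem-TTstar}--\ref{lem-eptas-correctness}.
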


\paragraph{Correctness for the Decision Algorithm.}
Recall that, in our decision algorithm, we set $h$ to be the smallest integer such that $(4h-4)/h^2 \leq \varepsilon$ and $6/h \leq \varepsilon$.
Moreover, using Lemma~\ref{lem:shifting} and our choice of $(x,y)$, we have $|V'| \geq (1-\varepsilon) n$.
If the computed committee $T$ has size at most $k$, our algorithm returns YES; otherwise, it returns NO.
To see the correctness our algorithm, recall that $\sigma^*$ is an optimum score of a fault-tolerant committee for $V$.
If $r \geq \sigma^*$, then there exists a size-$k$ fault-tolerant committee with score $r$ for $V$, and hence for $V'$ (as $V' \subseteq V$).

Recall that for $(i,j) \in \tilde{I}_{x,y}$, the computed committee $T_{i,j}$ has $|T_{i,j}| \leq |\overline{T}_{i,j}|$ where $\overline{T}_{i,j}$ is the smallest committee for $V_{i,j}$ with $\sigma_{f}(\overline{T}_{i,j}) \leq 1$.
Therefore, when $r \geq \sigma^*$, for $T = \bigcup_{(i,j)\in \tilde{I}_{x,y}} T_{i,j}$, we have $|T| \leq k$ and our algorithm returns YES.
On the other hand, if there is no size-$k$ committee whose $f$-tolerant score is at most $(1+\varepsilon)r$ for at least $(1-\varepsilon)n$ voters, we must have $|T| > k$ and thus our algorithm returns NO.
This completes the argument for the proof of correctness.

\paragraph[solving-in-a-box]{Algorithm to Compute $T_{i,j}$}
We now present the most challenging piece of our algorithm: the computation of the $T_{i,j}$'s.
Consider a box $\Box_{i,j}$.
Suppose there exists a fault-tolerant committee $T \subseteq C$ for $V_{i,j}$ with $\sigma_f(T) \leq 1$.
Our task is to compute a fault-tolerant committee $T_{i,j} \subseteq C$ for $V_{i,j}$ such that $|T_{i,j}| \leq |T|$ and $\sigma_f(T_{i,j}) \leq 1+6/h$.

We divide $\Box_{i,j}$ into $h^4$ smaller cells each with size $\frac{1}{h} \times \frac{1}{h}$, and we denote the set of these cells by $L = \{l_1, \ldots, l_{h^4}\}$. (See Figure~\ref{fig:eptas}.)
Our algorithm is based on two key observations:\\
$(i)$ A committee with a candidate in every nonempty cell has $f$-tolerant score within a difference of at most $2/h$ from the optimum score.
Since the number of cells is $h^4$, this implies that the size of a smallest approximately optimal committee is bounded by $h^4$ (formally shown in Lemma~\ref{lem-TTstar}).\\
$(ii)$ All candidates in a cell can be treated as identical, causing only a loss of $2/h$ in the score.
This implies that for any $T_{i,j}$, to approximately compute the $f$-tolerant score of $T_{i,j}$, we \emph{only} need to consider the failing sets where either all or none of the candidates in a cell fail. 
Note that the number of such failing sets is at most $2^{O(h^4)}$ (formally shown in Lemma~\ref{lem-tilde}).\\

Using these two observations, at a high level, our algorithm goes through all committees of size at most $h^4$ (there are $h^{\mathcal{O}(h^4)}$ of these as we can assume that each cell has at most $h^4$ candidates), approximately computes the $f$-tolerant score of each of these committees in time $2^{O(h^4)}$, and returns the smallest one with the desired score.


\begin{figure}
    \centering
    \includegraphics[scale=1]{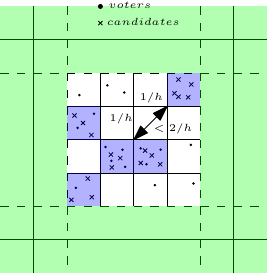}
    \caption{The figure shows a cell in the shifted grid. The solid lines around the sides are the grid lines (and the region inside them is a cell). The shaded (green) region is the boundary region. Inside the boundary region, we divide the cell into $1/h \times 1/h$ smaller cells. The distance between any two points in a smaller cell is $< 2/h$. All candidates in smaller cells are identical (i.e., candidates in blue regions). In this example, since only five cells are nonempty, we have at most $2^5$ distinct failing sets.}
    \label{fig:eptas}
\end{figure}

\begin{restatable} []{lemma}{TTstar}
\label{lem-TTstar}
Let $T, T^* \subseteq C$ be fault-tolerant committees for $V_{i,j}$.
If $|T^* \cap l_a| = 1$ for all $a \in [h^4]$ such that $C \cap l_a \neq \emptyset$, then $\sigma_f(T^*) - \sigma_f(T) \leq 2/h$.
\end{restatable}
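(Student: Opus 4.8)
The plan is to establish the one-sided inequality $\sigma_f(T^*)\le\sigma_f(T)+2/h$ directly from the definition of the $f$-tolerance score with respect to $V_{i,j}$: fix an arbitrary failing set $J^*\subseteq C$ with $|J^*|\le f$ and construct an explicit replacement $R^*\subseteq C\setminus J^*$ with $|R^*|\le|T^*\cap J^*|$ so that every voter of $V_{i,j}$ has a candidate of $(T^*\setminus J^*)\cup R^*$ within distance $\sigma_f(T)+2/h$; since $J^*$ is arbitrary, this is exactly the claim. For each nonempty cell $l_a$ write $t^*_a$ for the unique candidate in $T^*\cap l_a$. The guiding idea is that $T^*$ can \emph{simulate} $T$: any point of $\Box_{i,j}$ and the representative $t^*_a$ of the cell containing it differ by at most one cell diameter, which is $\sqrt2/h<2/h$, so any committee derived from $T$ transfers onto $T^*$ at an additive cost of $2/h$. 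Throughout I use that $\sigma:=\sigma_f(T)\le1$ (as assumed in this subsection); since every voter of $V_{i,j}$ lies at distance at least $2$ from $\partial\Box_{i,j}$, this forces every candidate within distance $\sigma$ of a voter of $V_{i,j}$ to lie in $\Box_{i,j}$, hence in some cell.

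The construction is as follows. Since $\sigma_f(T)=\sigma$, for the failing set $J^*$ there is $R\subseteq C\setminus J^*$ with $|R|\le|T\cap J^*|$ such that $\tilde T:=(T\setminus J^*)\cup R$ has a candidate within distance $\sigma$ of every voter of $V_{i,j}$; note $\tilde T\cap J^*=\emptyset$. For every nonempty cell $l_a$ with $t^*_a\in J^*$ and $l_a\cap\tilde T\neq\emptyset$, pick one point $\hat c_a\in l_a\cap\tilde T$, and let $R^*$ be the set of these $\hat c_a$; since $\hat c_a\in\tilde T\subseteq C\setminus J^*$, we have $R^*\subseteq C\setminus J^*$.

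I then verify the distance bound and the budget. For the distance bound, let $v\in V_{i,j}$ and take $c\in\tilde T$ with $d(v,c)\le\sigma$; by the above $c\in\Box_{i,j}$, say $c\in l_a$ with $l_a$ nonempty. If $t^*_a\notin J^*$ then $t^*_a\in T^*\setminus J^*$ and $d(v,t^*_a)\le d(v,c)+\sqrt2/h\le\sigma+2/h$. If $t^*_a\in J^*$ then $c$ witnesses $l_a\cap\tilde T\neq\emptyset$, so $\hat c_a\in R^*$ is defined and $d(v,\hat c_a)\le d(v,c)+\sqrt2/h\le\sigma+2/h$. In either case a candidate of $(T^*\setminus J^*)\cup R^*$ lies within $\sigma+2/h$ of $v$. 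For the budget, the points $\hat c_a$ lie in pairwise disjoint cells, so $|R^*|\le|\{a:l_a\text{ nonempty},\,t^*_a\in J^*\}|$; the map $a\mapsto t^*_a$ is injective on nonempty cells and its image is contained in $T^*$, so this count equals $|\{t^*_a:t^*_a\in J^*\}|\le|T^*\cap J^*|$. Hence $|R^*|\le|T^*\cap J^*|$, which finishes the plan.

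I expect the budget bound $|R^*|\le|T^*\cap J^*|$ to be the crux: it is the step that genuinely uses the hypothesis ``exactly one $T^*$-candidate per nonempty cell'', making each representative $t^*_a$ well defined and letting one charge the replacement quota cell by cell (distinct cells carry distinct representatives, all lying in $T^*$). A secondary point, easy to overlook, is that the closest-candidate witness $c$ of a voter in $V_{i,j}$ must lie in $\Box_{i,j}$; this needs $\sigma\le1$ together with the $2$-unit boundary margin of $V_{i,j}$, and some such bound is indispensable, since if $T$ could use candidates far outside $\Box_{i,j}$ while the only nonempty cells sat in a far corner, $\sigma_f(T^*)$ could exceed $\sigma_f(T)$ by far more than $2/h$.
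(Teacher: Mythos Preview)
Your proposal is correct and follows essentially the same approach as the paper: fix a failing set, use the fault tolerance of $T$ to get a good post-replacement committee $\tilde T=(T\setminus J^*)\cup R$, and for each nonempty cell whose unique $T^*$-representative has failed, harvest a surviving candidate of $\tilde T$ from that cell as the replacement for $T^*$; the cell diameter bound then gives the $2/h$ additive loss, and the one-representative-per-cell hypothesis gives the budget. You are in fact more explicit than the paper on one point the paper leaves tacit: that the witness $c\in\tilde T$ closest to a voter of $V_{i,j}$ must lie inside $\Box_{i,j}$ (hence in some cell), which you justify via $\sigma_f(T)\le 1$ and the $2$-unit boundary margin of $V_{i,j}$.
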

\begin{proof}
Consider a failing set $J \subseteq C$.
Since $T$ is a fault-tolerant committee, there exists a valid replacement set $R$ such that $|(T\backslash J) \cup R| \leq |T|$ and $\sigma_0((T\backslash J) \cup R) \leq \sigma_f(T)$.
Let $L' = \{l_a \in L : l_a \cap ((T\backslash J) \cup R) \neq \emptyset\}$.
Moreover, let $J^* = J \cap T^*$.
Then we will show that there exists a replacement set $R^*$ for $J^*$ such that $|R^*| \leq |J^*|$ and $\sigma_0((T^* \setminus J^*) \cup R^*) - \sigma_0((T\setminus J) \cup R) \leq 2/h$.

We construct the set $R^*$ as follows:
Consider a cell $l_a \in L'$.
Since $l_a \cap ((T\setminus J) \cup R) \neq \emptyset$, $l_a$ is nonempty.
This implies $T^* \cap l_a \neq \emptyset$ from the way we construct $T^*$.
Let $c_a$ be the only candidate in $T^* \cap l_a$.
If $c_a \in J$, then we replace $c_a$ with an arbitrary $c_a' \in l_a\setminus J$ (i.e., we add $c_a'$ to $R^*$).
We know that such a candidate $c_a'$ exists because $l_a \cap ((T\backslash J) \cup R) \neq \emptyset$.
Since we only add a candidate to $R^*$ from $l_a$ such that $J^* \cap l_a \neq \emptyset$, we have $|R^*| \leq |J^*|$.

We will now show that $\sigma_0((T^* \setminus J^*) \cup R^*) - \sigma_0((T\setminus J) \cup R) \leq 2/h$.
Observe that for a pair of candidates $c_1, c_2 \in l_a$, $d(c_1, c_2) \leq 2/h$ (see Figure~\ref{fig:eptas}).
Let $C_a = l_a \cap ((T \setminus J) \cup R)$.
Moreover, let $V_a \subseteq V$ be the set of voters which have a candidate in $C_a$ as their closest candidate in the committee $(T \setminus J) \cup R$.
Using the triangle inequality, we know that $d(V_a, c_a') \leq d(V_a, C_a) + 2/h$.
The above statement holds for all cells $l_a \in L'$.
Therefore, $\sigma_0((T^* \setminus J^*) \cup R^*) - \sigma_0((T\setminus J) \cup R) \leq 2/h$.

Note that our proof works for an arbitrary failing set $J$ including $J = \emptyset$.
This completes the proof of Lemma~\ref{lem-TTstar}.
\end{proof}

Based on the above observation, we solve the problem as follows.
We enumerate all maps $\chi\colon L \rightarrow \{0,1,\dots,h^4\}$ where $\chi(l_a)$ is the number of candidates from $l_a$ in the committee.
The total number of such maps is $h^{O(h^4)}$.
For each feasible map, i.e., $\chi$ satisfying $\chi(l_a) \leq |C \cap l_a|$ for all $a \in [h^4]$, we construct a fault-tolerant committee $T_\chi^*$ for $V_{i,j}$ by picking (arbitrarily) $\chi(l_a)$ candidates in $C \cap l_a$ for all $a \in [h^4]$ and including them in $T_\chi^*$.
For each constructed $T_\chi^*$, we compute a number $\widetilde{\sigma_f}(T_\chi^*)$ that approximates $\sigma_f(T_\chi^*)$ using the following lemma.
\begin{restatable} []{lemma}{score}
\label{lem-tilde}
Given $T_\chi^*$, one can compute a number $\widetilde{\sigma_f}(T_\chi^*)$ in $2^{O(h^4)}n^{O(1)}$ time such that $|\widetilde{\sigma_f}(T_\chi^*) - \sigma_f(T_\chi^*)| \leq 2/h$.
\end{restatable}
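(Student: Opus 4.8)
The plan is to exploit observation $(ii)$: within each of the $h^4$ cells of $\Box_{i,j}$ all candidates can be treated as coincident. Concretely, I would first relocate every candidate of $C_{i,j}$ to a fixed representative point of its cell (say, the center $q_a$ of $l_a$), leaving the voters fixed; since any two points of a cell are within $2/h$, this moves each candidate by at most $2/h$, and hence changes the score $\sigma_0$ of any committee, the best-replacement value $\delta(\cdot,\cdot)$, and $\sigma_f(\cdot)$, each by at most $2/h$. Writing $\delta'$ and $\sigma_f'$ for these quantities in the relocated instance, it thus suffices to compute $\sigma_f'(T_\chi^*)$ \emph{exactly} in time $2^{O(h^4)}n^{O(1)}$ and output it as $\widetilde{\sigma_f}(T_\chi^*)$. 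Moreover, since $T_\chi^*\subseteq C_{i,j}$ and every candidate outside $\Box_{i,j}$ lies at distance more than $2$ from $V_{i,j}$ and so is never useful as a replacement, I would restrict all failing sets and replacement candidates to $C_{i,j}$; only the $h^4$ cells inside $\Box_{i,j}$ then matter.

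The heart of the argument is a claim that, in the relocated instance, the adversary never benefits from failing only part of a cell. Formally: for any failing set $J\subseteq C_{i,j}$ with $|J|\leq f$, put $S:=\{l_a: C\cap l_a\subseteq J\}$ and $J_S:=\bigcup_{l_a\in S}(C\cap l_a)$; then $|J_S|\leq|J|\leq f$ and $\delta'(T_\chi^*,J_S)\geq\delta'(T_\chi^*,J)$. To prove this I would first record that, for any committee $W\subseteq C_{i,j}$, $\delta'(W,J)$ depends on $J$ only through the set $P(J)$ of cells still holding a candidate outside $J$, the set $Q(J)$ of cells occupied by $W\setminus J$, and the replacement budget $b(J):=|W\cap J|$: a legal replacement committee has the form $(W\setminus J)\cup K$ with $|K|=b(J)$, its coverage of $V_{i,j}$ is determined solely by the set $A$ of cells it occupies, every occupied cell must lie in $P(J)$, and occupying a cell not already used by $W\setminus J$ costs one budget unit, so
\[
\delta'(W,J)=\min_{\substack{Q(J)\subseteq A\subseteq P(J)\\ |A\setminus Q(J)|\leq b(J)}} g(A),\qquad g(A):=\max_{v\in V_{i,j}}\min_{l_a\in A} d(v,q_a).
\]
Comparing $J$ with $J_S$ one verifies $P(J_S)=P(J)$, $Q(J)\subseteq Q(J_S)$, and the counting inequality $b(J)-b(J_S)\geq|Q(J_S)\setminus Q(J)|$ (each cell of $Q(J_S)\setminus Q(J)$ has all of its $W$-candidates inside $J$ while the cell itself survives in $J_S$, so it contributes at least one unit to the left side that is not charged on the right). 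These three facts show that every $A$ feasible in the program for $J_S$ is also feasible in the program for $J$, so the minimum defining $\delta'(W,J_S)$ is taken over a subfamily; hence $\delta'(W,J_S)\geq\delta'(W,J)$. Combined with the trivial bound $\delta'(W,J_S)\leq\sigma_f'(W)$ (each $J_S$ is a legal failing set of size at most $f$), this gives $\sigma_f'(T_\chi^*)=\max_{S}\delta'(T_\chi^*,J_S)$ over the at most $2^{h^4}$ sets $S\subseteq L$ with $|J_S|\leq f$.

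The algorithm then follows: precompute the $O(nh^4)$ voter--cell distances; enumerate each admissible $S$; for each, compute $P(J_S),Q(J_S),b(J_S)$ and evaluate the inner minimum by enumerating the at most $2^{h^4}$ admissible cell-sets $A$ and computing $g(A)$ for each in $O(nh^4)$ time; output the maximum over $S$ of these minima. This takes $2^{O(h^4)}\cdot 2^{O(h^4)}\cdot n^{O(1)}=2^{O(h^4)}n^{O(1)}$ time and returns exactly $\sigma_f'(T_\chi^*)$, which differs from $\sigma_f(T_\chi^*)$ by at most $2/h$, as required.

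I expect the ``no partial cells'' claim to be the only genuinely delicate point: one has to handle the \emph{exact}-size replacement constraint $|K|=|W\cap J|$ (which is precisely why the budget $b(J)$ must be carried separately rather than folded into $Q(J)$) together with the counting inequality, and check that feasibility of a cell-set $A$ transfers from $J_S$ back to $J$. Everything else --- the relocation estimate and the brute-force enumeration --- is routine.
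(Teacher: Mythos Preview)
Your proposal is correct and follows essentially the same approach as the paper: reduce to enumerating the $2^{O(h^4)}$ all-or-nothing cell-wise failing sets and, for each, the $2^{O(h^4)}$ cell-wise replacement configurations, at a cost of at most $2/h$ in the score. Your treatment is in fact more rigorous than the paper's---the paper simply asserts that restricting to such failing sets and replacements loses at most $2/h$, whereas your relocation argument and the ``no partial cells'' claim (with the budget inequality $b(J)-b(J_S)\geq |Q(J_S)\setminus Q(J)|$) supply exactly the justification the paper omits.
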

\begin{proof}
For a pair of candidates $c_i, c_j$ in a cell $l_i \in L$, we know $d(c_i, c_j) \leq 2/h$.
Since we want to compute the number $\tilde{\sigma_f}(T_{\chi}^*)$ within an absolute error of $2/h$ compared to the actual value, it is sufficient to only consider the failing sets for which either all or none of the candidates from a cell fail.
The total number of cells is $h^4$; therefore, we only need to consider at most $2^{O(h^4)}$ distinct failing cells (see Figure~\ref{fig:eptas}).
For each of these failing sets (say $J \subseteq C$), we can compute a best replacement committee $R$ in time $2^{O(h^4)}$ by either choosing one or zero candidates from each cell.
For each replacement, $\sigma_0(T_{\chi}^* \setminus J \cup R)$ can be computed in $O(nh^4)$ time.
Therefore, we can compute $\tilde{\sigma_f(T_\chi^*)}$ in time $2^{O(h^4)}n^{O(1)}$.
\end{proof}

Finally, we let $T_{i,j}$ be the smallest among all committees $T_\chi^*$ satisfying $\widetilde{\sigma_f}(T_\chi^*) \leq 1+4/h$, and we return it as our solution.
The running time of our algorithm is clearly $h^{O(h^4)} n^{O(1)}$.
The following lemma shows that our algorithm is correct.
\begin{restatable}[]{lemma}{eptasCorrectness}
\label{lem-eptas-correctness}
We have $\sigma_f(T_{i,j}) \leq 1+6/h$.
Furthermore, $|T_{i,j}| \leq |T|$ for any fault-tolerant committee $T$ for $V_{i,j}$ with $\sigma_f(T) \leq 1$.
\end{restatable}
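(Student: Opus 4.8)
The plan is to establish the two claims of Lemma~\ref{lem-eptas-correctness} separately, leaning on Lemma~\ref{lem-TTstar} (the ``one candidate per nonempty cell'' committee is near-optimal) and Lemma~\ref{lem-tilde} (the approximate score $\widetilde{\sigma_f}$ is within $2/h$ of $\sigma_f$). First I would argue the size bound $|T_{i,j}| \leq |T|$ for any fault-tolerant committee $T$ with $\sigma_f(T) \leq 1$. Let $T^\circ$ be the committee that picks exactly one candidate from each nonempty cell $l_a$ (with $C \cap l_a \neq \emptyset$). By Lemma~\ref{lem-TTstar} applied with $T$ and $T^*=T^\circ$, we get $\sigma_f(T^\circ) \leq \sigma_f(T) + 2/h \leq 1 + 2/h$. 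Now $T^\circ$ corresponds to a feasible map $\chi$ (namely $\chi(l_a)=1$ on nonempty cells and $0$ elsewhere), so it is among the committees $T^*_\chi$ enumerated by the algorithm; by Lemma~\ref{lem-tilde}, $\widetilde{\sigma_f}(T^\circ) \leq \sigma_f(T^\circ) + 2/h \leq 1 + 4/h$. Hence $T^\circ$ passes the algorithm's acceptance test $\widetilde{\sigma_f}(T^*_\chi) \leq 1 + 4/h$, and since $T_{i,j}$ is chosen to be the \emph{smallest} committee passing this test, $|T_{i,j}| \leq |T^\circ| \leq $ (number of nonempty cells). To finish the size bound I need $|T^\circ| \leq |T|$: since $T$ is a fault-tolerant committee for $V_{i,j}$ with finite score, in particular it hits every nonempty cell's voters, and more carefully one shows $|T \cap C_{i,j}|$ is at least the number of nonempty cells (any two distinct nonempty cells are ``independent enough'' that a single candidate cannot serve both within the required radius once $h$ is large) — this gives $|T^\circ| \leq |T|$, hence $|T_{i,j}| \leq |T|$.

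Next I would bound the score: $\sigma_f(T_{i,j}) \leq 1 + 6/h$. By construction $T_{i,j} = T^*_\chi$ for some feasible $\chi$ with $\widetilde{\sigma_f}(T_{i,j}) \leq 1 + 4/h$. Applying Lemma~\ref{lem-tilde} in the other direction, $\sigma_f(T_{i,j}) \leq \widetilde{\sigma_f}(T_{i,j}) + 2/h \leq 1 + 4/h + 2/h = 1 + 6/h$. This is the clean part; it is essentially just unwinding the definition of the acceptance threshold together with the approximation guarantee of $\widetilde{\sigma_f}$.

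The main obstacle will be the first step — proving $|T^\circ| \leq |T|$, i.e., that \emph{every} fault-tolerant committee for $V_{i,j}$ with $\sigma_f \leq 1$ must have at least as many candidates as there are nonempty cells. One has to be careful here because a candidate could in principle lie in one cell but cover voters in an adjacent cell; however, the cells have side $1/h$ and we only need $\sigma_f(T) \leq 1$, so for large $h$ a candidate placed for one cell is still ``local'' — but this alone does not immediately force one candidate per cell. The right argument is instead: each nonempty cell $l_a$ contains a candidate of $C$, and since $\sigma_f(T) < \infty$ requires (as noted in the excerpt, analogous to the condition $|I_i \cap C| > f$ in the 1D case) that for each voter there are more than $f$ candidates nearby, one deduces a lower bound on $|T|$ via a packing/charging argument over the nonempty cells — charging each nonempty cell to a distinct candidate of $T \cap C_{i,j}$. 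I would make this precise by observing that $T^\circ$ was \emph{defined} to have one candidate per nonempty cell and that the algorithm only ever considers committees that place at most one candidate per cell in the relevant map, so in fact $|T^\circ|$ equals the number of nonempty cells, and any feasible fault-tolerant $T$ restricted to $C_{i,j}$ must hit all of them — combined with $T_{i,j}$ being the minimizer over all passing $T^*_\chi$, this closes the argument. I would keep the write-up short, citing Lemma~\ref{lem-TTstar} and Lemma~\ref{lem-tilde} as the two workhorses and only spelling out the cell-counting inequality in a sentence or two.
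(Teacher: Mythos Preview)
Your argument for the score bound $\sigma_f(T_{i,j}) \leq 1+6/h$ is correct and matches the paper's.

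The size bound, however, has a genuine gap. Your plan hinges on $|T^\circ| \leq |T|$, i.e., that any fault-tolerant committee $T$ with $\sigma_f(T)\leq 1$ has at least as many candidates as there are nonempty cells. This is false in general. First, ``nonempty cell'' means $C\cap l_a\neq\emptyset$, i.e., the cell contains a \emph{candidate}; it says nothing about voters. A cell may contain candidates that no voter in $V_{i,j}$ needs, so $T$ has no reason to include anything there. Second, your independence intuition is backwards: the cells have side $1/h$, which is tiny compared to the covering radius $1$, so a single candidate of $T$ can serve voters in on the order of $h^2$ cells simultaneously. There is no packing or charging argument that forces $|T|\geq |T^\circ|$; indeed $|T|$ can be much smaller than the number of nonempty cells.

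The paper avoids this entirely by splitting into two cases. If $|T|>h^4$ then $|T^\circ|\leq h^4<|T|$ trivially, and your use of Lemmas~\ref{lem-TTstar} and~\ref{lem-tilde} on $T^\circ$ finishes it. The substantive case is $|T|\leq h^4$: here one does \emph{not} compare to $T^\circ$ but instead takes the map $\chi(l_a)=|T\cap l_a|$, so that $|T_\chi^*|=|T|$ on the nose. The point is that $T_\chi^*$ and $T$ have the same number of candidates in every cell, so there is a bijection $\pi:C_{i,j}\to C_{i,j}$ moving each point by at most $2/h$ and carrying $T_\chi^*$ to $T$; this gives $|\sigma_f(T_\chi^*)-\sigma_f(T)|\leq 2/h$, hence $\widetilde{\sigma_f}(T_\chi^*)\leq 1+4/h$, and therefore $|T_{i,j}|\leq |T_\chi^*|=|T|$. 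This bijection step is the missing idea in your proposal.
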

\begin{proof}
The fact that $\sigma_f(T_{i,j}) \leq 1+6/h$ follows directly from our construction and Lemma~\ref{lem-tilde}.
Let $T$ be a fault-tolerant committee for $V_{i,j}$ with $\sigma_f(T) \leq 1$.
We consider two cases: $|T| > h^4$ and $|T| \leq h^4$.
First, assume $|T| > h^4$.
Define $\chi \colon L \rightarrow \{0,1,\dots,h^4\}$ by setting $\chi(l_a) = 1$ for all $a \in [h^4]$ with $C \cap l_a \neq \emptyset$, and $\chi(l_a) = 0$  whenever $C \cap l_a =\emptyset$.
Clearly, $|T_\chi^*| \leq h^4 < |T|$.
By Lemma~\ref{lem-TTstar}, $\sigma(T_\chi^*) \leq 1+2/h$.
Thus, $\widetilde{\sigma_f}(T_\chi^*) \leq 1+4/h$ by Lemma~\ref{lem-tilde}.
This further implies that $|T_{i,j}| \leq |T_\chi^*| < |T|$.

Now assume $|T| \leq h^4$.
Define $\chi \colon L \rightarrow \{0,1,\dots,h^4\}$ by setting $\chi(l_a) = |T \cap l_a|$ for all $a \in [h^4]$.
Clearly, $|T_\chi^*| = |T|$ and $|T_\chi^* \cap l_a| = |T \cap l_a|$ for all $a \in [h^4]$.
We show that $\sigma(T_\chi^*) \leq 1+2/h$.
Since $|T_\chi^* \cap l_a| = |T \cap l_a|$, for each $a$ pick a bijection $\pi_a \colon C \cap l_a \rightarrow C \cap l_a$ such that for all $x \in C \cap l_a$, $x \in T_\chi^*$ if and only if $\pi_a(x) \in T$.
Observe that the distance between $x$ and $\pi_a(x)$ is at most $2/h$ for all $x \in C \cap l_a$.
Combining all bijections $\pi_a$, we obtain a bijection $\pi \colon C_{i,j} \rightarrow C_{i,j}$ with the property that for all $x \in C_{i,j}$, the distance between $x$ and $\pi(x)$ is at most $2/h$, and $x \in T_\chi^*$ if and only if $\pi(x) \in T$.
Because of this bijection, it is obvious that $|\sigma_f(T_\chi^*) - \sigma_f(T)| \leq 2/h$ and in particular $\sigma_f(T_\chi^*) \leq 1+2/h$.
Thus, $\widetilde{\sigma_f}(T_\chi^*) \leq 1+4/h$ by Lemma~\ref{lem-tilde}.
This further implies that $|T_{i,j}| \leq |T_\chi^*| = |T|$.
\end{proof}


\bibliography{references}

\newpage
\appendix

\end{document}